\documentclass[a4paper,11pt]{article}

\usepackage[top=1in, bottom=1in, left=1.25in, right=1.25in]{geometry}

\usepackage[utf8]{inputenc}
\usepackage{xspace}
\usepackage[french,english]{babel}
\usepackage[T1]{fontenc}

\usepackage{csquotes}

\usepackage{graphicx,color} 
\usepackage[svgnames]{xcolor}

\usepackage[sortcites,sorting=nyt,backend=biber,hyperref=true,maxbibnames=99]{biblatex} 
\addbibresource{Real_Rob_Net.bib}

\renewbibmacro*{byeditor+others}{%
  \setunit{\addcomma\space}%
  \ifnameundef{editor}
    {}
    {\usebibmacro{byeditor+othersstrg}%
     \setunit{\addspace}%
     \printnames[byeditor][1-1]{editor}%
     \clearname{editor}%
     \newunit}%
  \usebibmacro{byeditorx}%
  \usebibmacro{bytranslator+others}}

\renewbibmacro*{doi+eprint+url}{%
    \printfield{doi}%
    \newunit\newblock%
    \iftoggle{bbx:eprint}{%
        \usebibmacro{eprint}%
    }{}%
    \newunit\newblock%
    \iffieldundef{doi}{%
        \usebibmacro{url+urldate}}%
        {}%
    }

\usepackage{tikz}
\usetikzlibrary{automata}
\tikzset{
		>=latex,
		blk/.style={state,fill=black},
		wht/.style={state,fill=white},
		node distance=0.5\columnwidth
}

\usepackage{algorithm}
\usepackage[noend]{algorithmic}

\algsetup{indent=2em}

\usepackage{mathptmx}
\usepackage{amsmath}
\usepackage{amsfonts}
\usepackage{amssymb}
\usepackage{graphicx}
\usepackage{float}
\usepackage{amsthm}
\usepackage{enumitem}
\usepackage{adjustbox}
\usepackage{xcolor}
\usepackage{tabularx}
\usepackage{alltt}
\usepackage{xspace}
\usepackage{caption}
\usepackage[subrefformat=parens]{subcaption}
\usepackage{changepage}
\usepackage{mdframed}
\usepackage{lipsum}
\usepackage{listings}

\usepackage[normalem]{ulem}

\usepackage{tikz}
\usetikzlibrary{fit}
\usetikzlibrary{automata}
\usetikzlibrary{calc}
\usetikzlibrary{positioning}
\usetikzlibrary{matrix}
\usepackage{bookmark,hyperref}

\usepackage[acronym,toc]{glossaries}

\usepackage{dirtytalk}

\pgfdeclarelayer{background}
\pgfsetlayers{background,main}

\tikzstyle{lcm}=[rectangle,draw=black!50,fill=white,text width=3.5mm,align=center]
\tikzstyle{schedot}=[coordinate]
\tikzstyle{sched}=[coordinate,rectangle,text width=3.5mm]
\tikzstyle{cycle}=[rectangle,rounded corners=5pt,fill=black!10]
\tikzset{
		>=latex,
		node distance=0.4\columnwidth
}

\numberwithin{algorithm}{section}

\numberwithin{lemma}{section}

\numberwithin{corollary}{section}

\newtheorem{theorem}{Theorem}
\numberwithin{theorem}{section}

\numberwithin{property}{section}

\numberwithin{definition}{section}

\numberwithin{observations}{section}

\newtheorem*{observations*}{Observations}

\newtheorem{condition}{Condition}
\numberwithin{condition}{section}

\numberwithin{conjecture}{section}

\newcommand\LOOK{\mbox{\textsc{Look}}\xspace}
\newcommand\COMPUTE{\mbox{\textsc{Compute}}\xspace}

\newcommand\MOVE{\mbox{\textsc{Move}}\xspace}

\newcommand\FSYNC{\mbox{FSYNC}\xspace}
\newcommand\SSYNC{\mbox{SSYNC}\xspace}
\newcommand\ASYNC{\mbox{ASYNC}\xspace}

\newcommand\STAY{\mbox{\textsc{Stay}}\xspace}

\newcommand\HALF{\mbox{\textsc{M2H}}\xspace}

\newcommand\BLACK{\robcol{Black}\xspace}
\newcommand\WHITE{\robcol{White}\xspace}

\newcommand\Black{\BLACK}
\newcommand\White{\WHITE}

\newcommand\robcol[1]{\mbox{{\textsc{#1}}}}
\newcommand\robstate[1]{\mbox{\textbf{\textsc{#1}}}}

\newcommand\SEC{\mbox{SEC}\xspace}

\newcommand\probName[1]{\mbox{\textbf{\textsf{#1}}}}

\newcommand\RENDEZVOUS{\probName{Rendez}\-\probName{vous}\xspace}
\newcommand\LEADEL{\probName{Leader} \probName{Election}\xspace}
\newcommand\GEOLEADEL{\probName{Geoleader} \probName{Election}\xspace}
\newcommand\CONVERGENCE{\probName{Convergence}\xspace}
\newcommand\GATHERING{\probName{Gathering}\xspace}

\newcommand\OBLOT{\mbox{$\mathcal{OBLOT}$}\xspace}

\newcommand\luminous{\mbox{$\mathcal{LUMINOUS}$}\xspace}

\DeclareMathAlphabet{\mathcalbf}{OMS}{cmsy}{b}{n}
\DeclareMathAlphabet{\mathcal}{OMS}{cmsy}{m}{n}

\graphicspath{{png/}}

\title{Unreliable Sensors for \\ Reliable Efficient Robots}
\author{Adam Heriban$^\star$ \and Sébastien Tixeuil$^\star$}
\date{$^\star$Sorbonne University, CNRS, LIP6, France}

\begin{document}

\selectlanguage{english}

\maketitle

\begin{abstract}
The vast majority of existing Distributed Computing literature about mobile robotic swarms considers \emph{computability} issues: characterizing the set of system hypotheses that enables problem solvability.
By contrast, the focus of this work is to investigate \emph{complexity} issues: obtaining quantitative results about a given problem that admits solutions. Our quantitative measurements rely on a newly developed simulation framework to benchmark pen and paper designs.

First, we consider the maximum traveled distance when gathering robots at a given location, not known beforehand (both in the two robots and in the $n$ robots settings) in the classical $\mathcal{OBLOT}$ model, for the \FSYNC, \SSYNC, and \ASYNC schedulers. This particular metric appears relevant as it correlates closely to what would be real world fuel consumption.
Then, we introduce the possibility of errors in the vision of robots, and assess the behavior of known rendezvous (\emph{aka} two robots gathering) and leader election protocols when sensors are unreliable. We also introduce two new algorithms, one for fuel efficient convergence, and one for leader election, that operate reliably despite unreliable sensors. 
\end{abstract}

\section{Introduction}

Since the seminal work of Suzuki and Yamashita~\cite{DBLP:journals/siamcomp/SuzukiY99}, much research on cooperative mobile robots was aimed at identifying the minimal assumptions (in terms of synchrony, sensing capabilities, environment, etc.) under which basic problems can be solved. A recent state of the art was recently proposed by Flocchini et al.~\cite{DBLP:series/lncs/Flocchini19}.

Robots are modeled as mathematical points in the 2D Euclidean plane and independently execute their own instance of the same algorithm.
In the model we consider, robots are anonymous (\emph{i.e.}, they are indistinguishable from each-other), oblivious (\emph{i.e.}, they have no persistent memory of the past is available), and disoriented (\emph{i.e.}, they do not agree on a common coordinate system). The robots operate in Look-Compute-Move cycles. In each cycle, a robot "Looks" at its surroundings and obtains (in its own coordinate system) a snapshot containing the locations of all robots. Based on this visual information, the robot "Computes" a destination location (still in its own coordinate system), and then "Moves" towards the computed location. Since the robots are identical, they all follow the same deterministic algorithm. The algorithm is oblivious if the computed destination in each cycle depends only on the snapshot obtained in the current cycle (and not on stored previous snapshots). The snapshots obtained by the robots are not consistently oriented in any manner (that is, the robots' local coordinate systems do not share a common direction nor a common chirality\footnote{Chirality denotes the ability to distinguish left from right.}).

The execution model significantly impacts the ability to solve collaborative tasks. Three different levels of synchronization have been commonly considered. The strongest model is the fully-synchronous (\FSYNC) model~\cite{DBLP:journals/siamcomp/SuzukiY99}, where each phase of each cycle is performed simultaneously by all robots. The semi-synchronous (\SSYNC) model~\cite{DBLP:journals/siamcomp/SuzukiY99} considers that time is discretized into rounds, and that in each round an arbitrary yet non-empty subset of the robots are active. The robots that are active in a particular round perform exactly one atomic \LOOK-\COMPUTE-\MOVE cycle in that round. The weakest model is the asynchronous (\ASYNC) model~\cite{DBLP:series/synthesis/2012Flocchini,DBLP:journals/tcs/FlocchiniPSW05}, which allows arbitrary delays between the \LOOK,\COMPUTE and \MOVE phases, and the movement itself may take an arbitrary amount of time. It is assumed that the scheduler (seen as an adversary) is fair in the sense that in each execution, every robot is activated infinitely often.

\subsection{Previous works and Motivations}

An important shortcoming of the robot model introduced by Suzuki and Yamashita \cite{DBLP:journals/siamcomp/SuzukiY99} with respect to real-world implementation of mobile robot algorithms is the assumption that both the vision sensors and the actuation motors are perfect. More specifically, the model assumes that robots have an infinite vision range, and can sense the position of other robots relatively to theirs with infinite accuracy. Robots are also usually able to reach their target with infinite movement precision (with respect to the angle to the target).

Several attempts have been made to make the \OBLOT model more realistic, \emph{e.g.} by limiting the range of sensors through the limited visibility model~\cite{DBLP:journals/trob/AndoOSY99,DBLP:conf/antsw/GordonEB08,DBLP:conf/antsw/GordonWB04}, by allowing the sensors to miss other robots~\cite{DBLP:conf/sirocco/HeribanT19}, by using inaccurate sensors~\cite{DBLP:journals/siamcomp/CohenP08,DBLP:conf/antsw/GordonEB08,DBLP:conf/antsw/GordonWB04,DBLP:journals/automatica/Martinez09,DBLP:journals/tcs/YamamotoIKIW12}, or by discarding the hypothesis that robots are transparent~\cite{DBLP:conf/cccg/LunaFPSV14,DBLP:journals/tcs/HonoratPT14}.

However, many attempts are hindered by increased complexity due to manually proving algorithms in those more complex settings. For instance, to our knowledge, the consequences of error-prone vision have only been studied through very simple problems: \GATHERING and \CONVERGENCE~\cite{DBLP:journals/siamcomp/CohenP08,DBLP:conf/antsw/GordonEB08,DBLP:conf/antsw/GordonWB04,DBLP:journals/automatica/Martinez09,DBLP:journals/tcs/YamamotoIKIW12}.

To allow more complex problems to be studied considering more realistic settings, it appears necessary to favor an machine-helped approach.

Formal methods encompass a long-lasting path of research that is meant to overcome errors of human origin. Unsurprisingly, this mechanized approach to protocol correctness was used in the context of mobile robots~\cite{DBLP:conf/srds/BonnetDPPT14,DBLP:conf/sss/DevismesLPRT12,DBLP:journals/dc/BerardLMPTT16,DBLP:conf/sss/AugerBCTU13,DBLP:conf/sss/MilletPST14,DBLP:journals/ipl/CourtieuRTU15, berard:hal-01238784,DBLP:conf/prima/RubinZMA15,DBLP:conf/sss/DevismesLPRT12,DBLP:conf/fmcad/SangnierSPT17,DBLP:conf/icdcn/BalabonskiPRT18,DBLP:journals/mst/BalabonskiDRTU19,DBLP:conf/netys/BalabonskiCPRTU19,DBLP:journals/fmsd/SangnierSPT20,DBLP:conf/srds/DefagoHTW20,}.

When robots move freely in a continuous two-dimensional Euclidean space, to the best of our knowledge, the only formal framework available is Pactole\footnote{\url{http://pactole.lri.fr}}.
Pactole relies on higher-order logic to certify impossibility results~\cite{DBLP:conf/sss/AugerBCTU13,DBLP:journals/ipl/CourtieuRTU15,DBLP:conf/icdcn/BalabonskiPRT18}, as well as the correctness of algorithms~\cite{DBLP:conf/wdag/CourtieuRTU16,DBLP:journals/mst/BalabonskiDRTU19} in the \FSYNC and \SSYNC models, possibly for an arbitrary number of robots (hence in a scalable manner). Pactole was recently extended by Balabonski~\emph{et al.}~\cite{DBLP:conf/netys/BalabonskiCPRTU19} to handle the \ASYNC model, thanks to its modular design. However, in its current form, Pactole lacks automation; that is, in order to prove a result formally, one still has to write the proof (that is automatically verified), which requires expertise both in Coq (the language Pactole is based upon) and about the mathematical and logical arguments one should use to complete the proof. 

On the other hand, model checking and its derivatives (automatic program synthesis, parameterized model checking) hint at more automation once a suitable model has been defined with the input language of the model checker. 
In particular, model-checking proved useful to find bugs (usually in the \ASYNC setting)~\cite{DBLP:journals/dc/BerardLMPTT16,DBLP:conf/sofl/DoanBO16,DBLP:conf/opodis/Doan0017} and to formally check the correctness of published algorithms~\cite{DBLP:conf/sss/DevismesLPRT12,DBLP:journals/dc/BerardLMPTT16,DBLP:conf/prima/RubinZMA15,DBLP:conf/srds/DefagoHTW20}. Automatic program synthesis~\cite{DBLP:conf/srds/BonnetDPPT14,DBLP:conf/sss/MilletPST14} was used to obtain automatically algorithms that are "correct-by-design". However, those approaches are limited to instances with few robots. Generalizing them to an arbitrary number of robots with similar models is doubtful as Sangnier \emph{et al.}~\cite{DBLP:journals/fmsd/SangnierSPT20} proved that safety and reachability problems are undecidable in the parameterized case with default models. Another limitation of the above approaches is that they \emph{only} consider cases where mobile robots \emph{evolve in a \underline{discrete} space} (\emph{i.e.}, graph). This limitation is due to the model used, that closely matches the original execution model by Suzuki and Yamashita~\cite{DBLP:journals/siamcomp/SuzukiY99}. As a computer can only model a finite set of locations, a continuous 2D Euclidean space cannot be expressed in this model.

Overall, the only way to obtain automated proofs of correctness in the continuous space context through model checking is to use a more abstract model~\cite{DBLP:conf/wdag/DefagoHTW19,DBLP:conf/srds/DefagoHTW20}, which commands writing additional handwritten theorems to assess its relevance in the original model.
Overall, using formal methods for complex algorithms in realistic settings requires a substantial effort that may be out of reach when one simply wants to asses the feasibility of an algorithmic design.

Furthermore, these approaches currently only address whether the added constraints enable the construction of counter-examples for a given task, and, to the best of our knowledge, do not address the important issue of performance degradation, or, in the cases where counter-examples do appear, the likelihood of their appearance and their impact.

In fact, an overwhelming majority of the research on mobile robotic swarms has focused on proving, under a given set of conditions, whether there exists a counter example to a given solution proposal for a problem. On the other hand, the practical efficiency of a given algorithm (with respect to real-world criteria such as fuel consumption) was rarely studied by the Distributed Computing community, albeit being of paramount importance to the Robotics community~\cite{DBLP:conf/gecco/AroraMDB19,DBLP:journals/ijrr/YooFS16}.
Fuel-constrained robots have been considered in the discrete graph context, for both exploration \cite{DBLP:conf/arcs/DyniaKS06} and distributed package delivery \cite{DBLP:conf/algosensors/Chalopin0MPW13}, but, to our knowledge, no study considered the two-dimensional Euclidean space model that was promoted by Suzuki and Yamashita~\cite{DBLP:journals/siamcomp/SuzukiY99}. A possible explanation for this situation is that the more complex the algorithm (or the system settings), the more difficult it becomes to rigorously find the worst possible executions.

We investigate another approach: since our goal is to bridge the gap between theoretical mobile robots, and actual robotics, we move one step towards robotics and use a very common tool: simulation. First, robot simulators, such as Gazebo\footnote{\texttt{http://gazebosim.org/}}, are industry standard tools for designing physical robots. Then, simulating mobile robots is not a new idea, and has been tried since the very beginning of mobile robots~\cite{DBLP:journals/trob/AndoOSY99}.

Our goal is to design and implement a practical simulator for networks of mobile robots that is focused on finding counter-examples and monitoring network behavior, rather than proving algorithms or providing a visual representation. Our vision is that this tool is especially useful in the early stages of algorithm design to eliminate obviously wrong paths, and detect anomalies. It should not be seen as a replacement for formal tools, but as a replacement for researcher intuition when working on a mobile robot network model or algorithm.

As such, the simulator should be easy to use, understand and modify by any Distributed Computing researcher in order to include any new algorithm or model. 
It should also be capable of monitoring network behavior and output quantitative data points to assess real world performance, according to a given set of metrics, as well as enabling comparison with previously proven algorithms in a given setting.

We first focus on the known limitations of this approach and highlight the difficulty of encoding victory and defeat conditions for the computed executions, and how it impacts our ability to reliably detect counter-examples, as well as the expected consequences of working in a discretized Euclidean space, such as the impossibility to distinguish \CONVERGENCE and \GATHERING. 

\subsection{Our Contribution}

In this paper, we design and implement a practical simulator for mobile robotic swarms evolving in a two-dimensional Euclidean space. To circumvent the obvious problem of an infinite number of initial positions, our simulation framework is based on the Monte Carlo method for choosing initial configurations~\cite{MonteCarlo49}. 

We first benchmark our simulation framework using a well known problem in the domain: rendezvous. Rendezvous mandates that two robots gather in finite time at the same location, not known beforehand. There exists a number of rendezvous solution for various settings, yet our simulation framework enables fair quantitative comparison. We choose the fuel metric (\emph{a.k.a.} total traveled distance) under various system conditions: \FSYNC, \SSYNC, and \ASYNC schedulers with or without rigid motion.

We then assess the impact of inaccurate visibility sensors on two milestone algorithms: the Center-of-Gravity convergence algorithm~\cite{DBLP:journals/siamcomp/SuzukiY99} for two robots, and the Geoleader election algorithm~\cite{DBLP:conf/sss/CanepaP07}. It turns out that their behavior is significantly impacted by even small inaccuracies. 

To address the shortcoming identified by our simulations in the literature, we design a new two-color, fuel-efficient, convergence algorithm for the ASYNC scheduler, and an improved leader election algorithm that is resilient to inaccurate vision. Both proposal are similarly benchmarked with our simulation framework. 

The rest of the paper is organized as follows. Section~\ref{chap:MonteCarlo} presents the core technicalities underlying our simulation framework, and its limitations through the problems of \OBLOT \FSYNC \CONVERGENCE, and \GEOLEADEL. Section~\ref{chap:performance} demonstrates how the framework can be used for the purpose of performance evaluation, while Section~\ref{chap:realistic} show how realistic error models can be integrated into the entire evaluation process. Section~\ref{chap:improved} introduces two new algorithms, one for fuel efficient convergence, and one for leader election with unreliable sensors. Finally, Section~\ref{sec:conclusion} provides concluding remarks.

\section{Monte-Carlo Simulation of Mobile Robots}

\label{chap:MonteCarlo}

\subsection{Overview of the Framework}

Our simulation framework is written from scratch using Python 3, ensuring a large compatibility across executing platforms. Our design goal is to remain as close as possible to the theoretical model of Suzuki and Yamashita~\cite{DBLP:journals/siamcomp/SuzukiY99}, in order to maximize readability and usability by the mobile robot distributed computing community.

Each mobile entity is thus encapsulated as an instance of the \texttt{Robot} class.
In the case of the basic \OBLOT model, robots have the following properties:

\begin{itemize}
    \item A unique \texttt{name}.
    \item \texttt{x} and \texttt{y} coordinates in the Euclidean plane.
    \item A \texttt{snapshot} list of \texttt{Robot}s that contains visible \texttt{Robot}s.
    \item A \texttt{target}, which is a 2-tuple of the \texttt{x} and \texttt{y} coordinates of the target destination.
\end{itemize}

\noindent The \texttt{Robot} class also provides three methods: 

\begin{itemize}
    \item The \texttt{LOOK} method uses the network as an input. It creates a list of the visible \texttt{Robot}s in the network and assigns it to \texttt{snapshot}.
    \item The \texttt{COMPUTE} method uses \texttt{snapshot} to compute and assign \texttt{target}, according to the algorithm we want to evaluate.
    \item The \texttt{MOVE} method updates \texttt{x} and \texttt{y} according to \texttt{target}.
\end{itemize}

This is summarized in figure~\ref{fig:robotClass}

\begin{figure}[htb]
    \centering
    \includegraphics[width=0.35\linewidth]{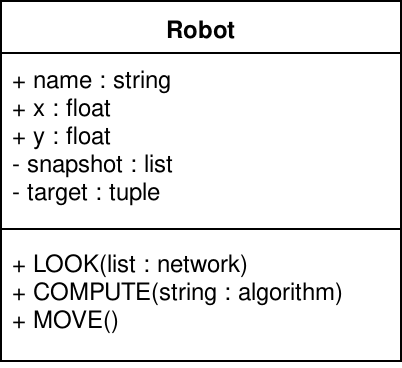}
    \caption{Robot Class}
    \label{fig:robotClass}
\end{figure}

Because robots are anonymous, \texttt{name} cannot be used for computing purposes, and is simply a way for the scheduler to reliably monitor the robots in the network. Similarly, robots cannot use \texttt{x} and \texttt{y} directly as they are disoriented. 
The simulation consists of two parts: an initializing sequence and a loop.
The \emph{initializing sequence} creates a \texttt{network} list, which contains all robots, according to simulation parameters. 
To circumvent the problem of the infinite number of possible initial positions, our simulation framework is based on the Monte-Carlo method for choosing initial configurations~\cite{MonteCarlo49}. So, unless otherwise specified, the initial location of each robot is chosen uniformly at random within the bounds of the type used to represent positions. Using the Monte-Carlo method allows us to both minimize biases in the initial parameters, and arbitrarily increase the precision of the simulation by simply increasing the number of simulations. 
For each iteration of the \emph{main loop}, a scheduling function is executed once. In the case of \FSYNC, for each loop iteration, all robots in the network simultaneously perform a \texttt{LOOK}, then simultaneously perform a \texttt{COMPUTE}, and then simultaneously perform a \texttt{MOVE}. Using different schedulers, such as \SSYNC or \ASYNC only requires changing the scheduling function: \SSYNC creates a non empty list of robots to be activated for a whole cycle, and \ASYNC picks a single robot to be activated for a single phase.
The loop terminates whenever a \emph{victory} condition holds, which confirms the algorithm completed its intended task. In the case where an algorithm may fail, a \emph{defeat} condition can also be used.
For practical reasons, the loop has a maximum number of iterations. However, reaching this maximum should not be interpreted as either a failure or a success.

\subsection{Scheduling}

Modeling the \FSYNC scheduler can be trivially done by performing all \LOOK operations, then all \COMPUTE operations, then all \MOVE operations.
For the \ASYNC and \SSYNC schedulers, we rely on randomness to test as many executions as possible.
To model the \SSYNC scheduler, for each time step, we chose a non-empty subset of the network uniformly at random and perform a full cycle.

To model the \ASYNC scheduler, we chose one robot uniformly at random and perform its next operation\footnote{Note that this model does not explicitly include simultaneous operations: we consider that the output of two simultaneous events $E_1$ and $E_2$ can be either the output of $E_1$ then $E_2$ or the output of $E_2$ then $E_1$.}.
In the case of the \ASYNC scheduler, we must also consider what happens if a robot performs a \LOOK operation while another robot is moving. the \OBLOT model usually considers that an adversary can chose the perceived location of the second robot to be anywhere between its initial position and its destination (on a straight line). Modeling this behavior could be easily done by changing the perceived coordinates in the \LOOK operation uniformly at random between the location and the target of the perceived robot (on a straight line). However, existing literature about the \ASYNC model shows that the most problematic scenarios appear when the outdated position perceived for a robot is its initial location. With our simulation framework, we also observed that always choosing the initial location when observing a given robot while it is in its \MOVE phase yielded the most adversarial results, so, while our framework is able to simulate both perceptions, we assume this adversarial behavior in the sequel. 

For all schedulers, our simulation framework supports both the rigid and the non-rigid settings. The rigid setting mandates a robot that selected a distinct target in the \COMPUTE phase to always reach it in the \MOVE phase. The non-rigid setting partially removes this condition: the robot may be stopped by the scheduler before it reaches the target, but not before it traverses a distance of at least $\delta$, for some $\delta>0$.

\subsection{Simulation Conditions}

Our framework uses Monte-Carlo simulation for both the initial conditions and the scheduling. This means we can perform an arbitrarily large number of simulations, which in turn induces an arbitrarily more precise simulation. 
Therefore any criterion on either time, number of iterations, or precision is equivalent. 
Unless specified otherwise, 4 simulation threads are run in parallel, for one hour, on a modern quad-core CPU, after which results are merged and analyzed. We use the PyPy3 JIT compiler instead of the CPython interpreter, for better performance. Results of the 4 simulations are then compiled and analyzed.

\subsection{Comparison with Existing Simulators}

We found two noteworthy simulators for mobile robots: Sycamore and JBotSim.

\emph{Sycamore} is a Java program focused explicitly on mobile robots. However, it appears to be far more complex to build, use and modify than our proposal. Moreover, the latest version we could find seems to date back from 2016, and requires versions of Java that are no longer supported. 

\emph{JBotSim}\footnote{\url{https://jbotsim.io}} is a Java library for simulating distributed networks in general. While it appears to be able to simulate \OBLOT robots, it is not designed to do so. So, one has to dig into the intricacies of the simulator to emulate basic mobile robot settings.

We also found a third Java-based simulator, named oblot-sim\footnote{\url{https://github.com/werner291/oblot-sim}}. 
We are, however, unsure of its provenance and design goals.

All three simulators emphasize real-time visualization of executed algorithms through a complete graphical interface.
Our proposal focuses on extremely simple quantitative simulation. In its current version, a complete instance of the simulator requires only five separate files for a total of less than 30KB of code (The sources for JBotSim and Sycamore weigh 3MB and 4.8MB, respectively). 
We also believe that using Python instead of Java greatly improves portability and ease of understanding, which in turns allows researchers to more easily implement and test unusual settings.

In short, our goal is not to visualize executions in real-time, but to simulate as many executions as possible to process their outcome.

\subsection{Limitations of the Simulation}

While the initial approach described in previous sections may seem sound and simple enough to work with, it results in two distinct problems. As stated previously, our objective with robot simulation is to reliably provide counter-examples whenever they may occur. This requires reliably detecting problematic executions, which is difficult for two reasons. First, success and defeat conditions for most mobile robot algorithms are written in a way that might not be directly usable in a computer simulation. Then, we show that issues predictably arise due to the nature of discretized floating point numbers compared to "true" real numbers used in mathematical models.

\paragraph{Halting the Simulation: \emph{Victory} and \emph{Defeat} Conditions:}

One of the goals of our simulation framework is to find counter-examples for a given algorithm and setting. To do so, we need to simulate the evolution of the network until one of two things happen:

\begin{itemize}
    \item A sufficient condition has been met. This implies that the current execution is successful, and a new simulation with a different initial configuration should begin. This is called a \emph{victory condition}.
    \item A necessary condition has been violated. This implies that the current execution constitutes a counter-example. This is called a \emph{defeat condition}.
\end{itemize}

We illustrate the difficulty of using and defining such conditions in practice through the example of one of the most fundamental problems in the context of mobile robots: \GATHERING. 

The common victory condition for \GATHERING is the following, for two robots $r_1$ and $r_2$:

\begin{condition}[Theoretical \GATHERING Victory]
\label{cond_rdv}
\GATHERING is achieved if and only if, for any pair of robots in the network, the distance between the two robots is eventually always zero. This can also be written more formally as $\exists t_0 \in \mathbb{R}_{\ge 0} : \forall t_1 \ge t_0 , \forall(r_1,r_2) |r_1r_2|_{t_1} = 0$
\end{condition}
In the previous condition, $|r_1r_2|_{t}$ denotes the distance between $r_1$ and $r_2$ at time $t$ in the current execution.

However, this particular condition would require the ability for the simulator to infinitely simulate the future of the network, which is obviously impossible.
Moreover, the matching defeat condition is unusable for similar reasons: \[\nexists t_0 \in \mathbb{R}_{\ge 0} : \forall t_1 \ge t_0 , \forall(r_1,r_2) |r_1r_2|_{t_1} = 0\] 
\begin{center}
or
\end{center}
\[\forall t_0 \in \mathbb{R}_{\ge 0} : \exists t_1 \ge t_0 , \exists(r_1,r_2) |r_1r_2|_{t_1} \neq 0\]

We instead define a more practical defeat condition:
\begin{condition}[Practical \GATHERING Defeat]\
\label{defeat}
$\exists (t_0,t_1) \in (\mathbb{R}_{\ge 0})^2 : t_1>t_0, inputs(t_0) = inputs(t_1), \exists t \in [t_0,t_1] / \exists (r_1,r_2) |r_1r_2|_{t} \neq 0$
\end{condition}

Where $inputs(t)$ is the set of all input parameters relevant to the algorithm. This is different from the configuration, which would contain \emph{all} parameters of the network at a given point of the execution.

This input set is used as a practical way to detect cycles in the execution. For a deterministic algorithm, if all inputs of the algorithm are identical to a previously encountered set of inputs, then a cycle has been found.

The input set we use must be chosen such that for two sets $S_1$ and $S_2$, $S_1(t) = S_2(t) \implies \forall S_1(t+1), \exists S_2(t+1) : S_1(t+1) = S_2(t+1)$. In other words, regardless of the scheduling, two identical sets should not be able to generate different sets.

\begin{theorem}
For two robots executing a deterministic algorithm, if condition~\ref{defeat} is true then condition~\ref{cond_rdv} is false.
\end{theorem}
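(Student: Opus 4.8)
The plan is to argue by contradiction, using the well-definedness property imposed on the input set together with the determinism of the algorithm. Concretely, I would assume condition~\ref{defeat} holds and, for contradiction, that condition~\ref{cond_rdv} also holds, and then exhibit a time past the alleged gathering instant at which two robots are still separated. The engine of the argument is that a single repetition of the input set forces the whole execution to become periodic, so that a nonzero inter-robot distance occurring inside the cycle must recur forever.

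First I would fix the witnesses of condition~\ref{defeat}: times $t_0 < t_1$ with $inputs(t_0) = inputs(t_1)$, a pair $(r_1,r_2)$, and an instant $t^\ast \in [t_0,t_1]$ with $|r_1r_2|_{t^\ast} \neq 0$; write $T = t_1 - t_0 > 0$. The heart of the proof is to show that the execution is $T$-periodic from $t_0$ onward, i.e. that $inputs(t_0+s) = inputs(t_0+T+s)$ for every $s \ge 0$. I would establish this by induction on the number of elapsed steps, invoking the displayed well-definedness requirement on the input set: since the two streams coincide at $t_0$ and $t_1$, every successor reachable after $t_0$ is matched by an identical successor after $t_1$, so replaying after $t_1$ exactly the scheduler choices observed after $t_0$ keeps the two streams equal for one more step, and determinism of the algorithm forbids any other behaviour once inputs and scheduling coincide. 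Iterating returns the inputs to the value $inputs(t_1)=inputs(t_0)$ after each block of length $T$, yielding an admissible (and fair, since the replayed block is the one already observed) periodic execution.

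With periodicity established, I would transfer it to distances. For two robots, the relative location of each robot in the other's frame is part of $inputs$, so the continuous motion on $[t_0,t_1]$, and hence the map $t \mapsto |r_1r_2|_t$, is determined by the inputs and is reproduced on every translate $[t_0+kT,\, t_1+kT]$. Consequently $|r_1r_2|_{t^\ast + kT} = |r_1r_2|_{t^\ast} \neq 0$ for every integer $k \ge 0$. To finish, given any candidate $t_0' \in \mathbb{R}_{\ge 0}$ from condition~\ref{cond_rdv}, I pick $k$ large enough that $t^\ast + kT \ge t_0'$; the distance at that time is nonzero, contradicting $\forall t_1 \ge t_0',\ |r_1r_2|_{t_1} = 0$. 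Since no such witness $t_0'$ can exist, condition~\ref{cond_rdv} is false and \GATHERING never completes.

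I expect the main obstacle to be the periodicity step, and specifically the justification that equality of the \emph{chosen} input set truly forces identical future evolution independently of the scheduler. This is precisely what the well-definedness requirement on $inputs$ is meant to guarantee, so the proof must appeal to that property rather than to the full network configuration; the argument would break if the input set omitted some parameter that influences future computations. A secondary point to treat with care is verifying that the inter-robot distance is genuinely recoverable from $inputs$, so that periodicity of $inputs$ descends to periodicity of $|r_1r_2|_t$ in continuous time rather than only at the sampled \LOOK instants.
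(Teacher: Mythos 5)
Your overall strategy is the paper's own: from the equal input sets at $t_0$ and $t_1$, have the adversary replay the cycle forever (the paper's proof says exactly this -- ``the adversary scheduler can repeat this cycle infinitely''), so that a non-gathered instant recurs beyond any candidate gathering time, falsifying condition~\ref{cond_rdv}. Your induction making the replay precise via the well-definedness requirement on $inputs$ is a legitimate elaboration of what the paper leaves implicit.

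However, one step of your argument is unsupported in the paper's framework, and you flagged it yourself: you claim that ``the relative location of each robot in the other's frame is part of $inputs$,'' and from this deduce $|r_1r_2|_{t^\ast+kT} = |r_1r_2|_{t^\ast}$. The paper explicitly constructs the opposite situation: for the basic midpoint algorithm (Algorithm~\ref{algo_rdv}) the input set is \emph{empty}, because the disoriented robots' algorithm uses no distance or coordinate information -- so the inter-robot distance is in general \emph{not} recoverable from $inputs$, and the distance function need not be $T$-periodic. A concrete counterexample to your equality is the halving execution: inputs repeat at every step while the distance shrinks geometrically across repetitions. What the theorem actually needs, and what the paper asserts, is only the weaker recurrence of \emph{non-gathered-ness}: each replayed block again contains an instant with $|r_1r_2| \neq 0$. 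This holds because the replayed block is the image of the original under a similarity determined by the (non-degenerate) configurations with equal inputs, so a nonzero distance maps to a nonzero distance even though its value changes; equivalently, a correctly chosen input set must make the gathered/non-gathered status consistent across equal-input states (this is the paper's ``not too restrictive'' requirement, glossed over in its own three-line proof as well). A second shared gloss: your parenthetical claim that the periodic execution is fair requires that the block between $t_0$ and $t_1$ activates both robots, which neither you nor the paper verifies. So: same route as the paper, with a repairable but genuine overclaim in the periodicity-of-distances step; replace exact distance equality by preservation of nonzero-ness under the replay similarity and the proof goes through.
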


\begin{proof}
For a deterministic algorithm, if condition~\ref{defeat} is true, there exists a scheduling starting from the initial configuration that reaches $inputs(t_0)$ and $inputs(t_1)$. Because $inputs(t_0) = inputs(t_1)$, there exists a cycle containing non-gathered configurations. Then the adversary scheduler can repeat this cycle infinitely, and condition~\ref{cond_rdv} is false.
\end{proof}

\begin{theorem}
If the number of input sets is finite, then for two robots executing a deterministic algorithm, if condition~\ref{cond_rdv} is false, then condition~\ref{defeat} is true.
\end{theorem}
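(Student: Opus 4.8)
The plan is to give a direct pigeonhole argument: assuming the two hypotheses (finitely many input sets, and Condition~\ref{cond_rdv} false), I would exhibit the pair of times required by Condition~\ref{defeat}. First I would unfold the negation of Condition~\ref{cond_rdv}, which reads $\forall t_0 \in \mathbb{R}_{\ge 0}, \exists t_1 \ge t_0$ together with a pair $(r_1,r_2)$ such that $|r_1r_2|_{t_1} \neq 0$; that is, the set of instants at which the two robots are \emph{not} gathered is unbounded. In particular the victory condition never fires, so (under scheduler fairness, which keeps robots active forever) the execution is infinite and traverses infinitely many non-gathered discrete steps of the main loop.

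Next I would label each such non-gathered step with its input set $inputs(t)$. Since all these labels lie in the \emph{finite} pool of possible input sets, the pigeonhole principle forces at least one value $S$ to occur at two distinct non-gathered instants $t_0 < t_1$ (in fact at infinitely many). Then $inputs(t_0) = inputs(t_1) = S$ with $t_1 > t_0$, and choosing $t = t_0 \in [t_0,t_1]$ yields a pair of robots with $|r_1r_2|_{t_0} \neq 0$, which is exactly Condition~\ref{defeat}. The entire proof is thus one application of pigeonhole, and it makes transparent why finiteness is indispensable: without it the execution could visit infinitely many distinct input sets without ever repeating one, so no cycle would be detectable even though gathering fails.

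The delicate points I would treat carefully are: (i) the passage from the continuous-time formulation of Conditions~\ref{cond_rdv} and~\ref{defeat} (which quantify over $t \in \mathbb{R}_{\ge 0}$) to the discrete sequence of input sets on which the pigeonhole count operates; I would argue that $inputs(t)$ changes only at the discrete events of the loop, so an unbounded collection of non-gathered real instants induces infinitely many non-gathered steps, each carrying a well-defined input set drawn from the finite alphabet; and (ii) confirming infiniteness of the execution, which I obtain as above from the impossibility of permanent gathering together with fairness.

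The main obstacle I anticipate is point (i): making the discretization rigorous enough that the pigeonhole is genuinely over a finite alphabet, while still matching the continuous distance predicate $|r_1r_2|_t$ appearing in the conditions. Once that correspondence is pinned down, the consistency property imposed on the input set (identical inputs cannot generate differing successors) ensures that the repeated value $S$ at $t_0$ and $t_1$ genuinely closes a cycle, so that the defeat reported by Condition~\ref{defeat} is sound and not an artefact of the sampling.
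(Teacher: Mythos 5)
Your proof is correct, and it rests on the same engine as the paper's: an infinite execution over a finite pool of input sets must repeat one (pigeonhole). The deployment differs, though. The paper pigeonholes over the whole scheduling to conclude that \emph{some} cycle exists, and then argues by classifying cycles as gathered or non-gathered (in effect a contrapositive: were every repeated cycle gathered, Condition~\ref{cond_rdv} would hold). You instead restrict the pigeonhole to the non-gathered instants themselves, which the negation of Condition~\ref{cond_rdv} supplies in unbounded number; a repeated input set $S$ at two such instants $t_0 < t_1$ immediately furnishes the witness demanded by Condition~\ref{defeat}, taking $t = t_0$. This buys a direct, constructive argument that produces the defeat witness explicitly rather than by elimination, and it sidesteps the cycle-classification step, which is exactly where the paper's wording is easiest to garble. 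Two of your precautions are heavier than needed: since Conditions~\ref{cond_rdv} and~\ref{defeat} quantify over real times and $inputs(t)$ is defined at every instant (it is constant between discrete loop events), you may pigeonhole over the unbounded set of non-gathered real instants directly, without the discretization detour of your point (i) or the appeal to fairness to secure an infinite execution; and the consistency property of the input set plays no role in this direction --- it is what makes the defeat condition \emph{sound} (the preceding theorem, that Condition~\ref{defeat} refutes Condition~\ref{cond_rdv}), whereas here you only need completeness, which your pigeonhole already delivers.
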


\begin{proof}
Any scheduling is infinite. So, if the total number of input sets is finite, then every scheduling contains at least one cycle. If condition~\ref{cond_rdv} is false, then there are no non-gathered cycles, so there is at least one gathered cycle that must be repeated, and condition~\ref{defeat} is true.
\end{proof}

One may naively want to use a similar reasoning to define a sufficient victory condition:

\begin{condition}[Naive \GATHERING Victory]
$\exists (t_0,t_1) \in \mathbb{R}_{\ge 0}^2 : t_1>t_0, inputs(t_0) = inputs(t_1), \forall t \in [t_0,t_1], \forall(r_1,r_2) |r_1r_2|_{t} = 0$
\end{condition}

However, this condition ignores the fact that the scheduler may be able to not repeat this cycle by carefully choosing the activation order of the robots.
A proper condition that is usable regardless of the scheduler is the following:

\begin{condition}[Practical \GATHERING Victory]
\label{vict_rdv}
$\forall(r_1,r_2) \exists t_0 \in \mathbb{R}_{\ge 0} : |r_1r_2|_{t_0} = 0 \land \forall \mathcal{S}, \exists t_1 > t_0 : inputs(t_0) = inputs(t_1),\forall t \in [t_0,t_1], |r_1r_2|_{t} = 0$

With $\mathcal{S}$ a scheduling.
In other words, there exists a time after which all robots are stuck in gathered cycles.
\end{condition}

Analyzing configurations and finding cycles in the execution is not an issue for our simulator. The main difficulty lies in our ability to properly model the configuration using the input set. If the set is too restrictive and omits relevant parameters, then we find cycles that do not actually exist. Similarly, a set that is not restrictive enough may hide actual cycles. This depends on both the robot model and the algorithm used to solve the problem.

In the case of \RENDEZVOUS or \GATHERING for two robots, the standard algorithm~\cite{DBLP:journals/siamcomp/SuzukiY99} for the \FSYNC scheduler targets the midpoint between the two robots and is described in algorithm~\ref{algo_rdv}.

\begin{algorithm}[H]
\caption{Basic \FSYNC \RENDEZVOUS} 
\label{algo_rdv}
\begin{algorithmic}
\STATE target[0] = (x + snapshot[0].x)/2
\STATE target[1] = (y + snapshot[0].y)/2
\end{algorithmic}
\end{algorithm}

In the Euclidean space, the number of configurations appears to be infinite. 
Because robots are disoriented, the algorithm uses no information on distance, or coordinate systems, so that all configurations are identical. Then, the input set is actually empty.
This implies that an algorithm succeeds if and only if the network is gathered after the first activation of both robots. Otherwise, the defeat condition is immediately true for rigid movement.

For the sake of providing a second example, let us consider that robots are endowed with weak local multiplicity detection, meaning that they can distinguish a non-gathered configuration from a gathered configuration. This allows us to modify the initial algorithm to algorithm~\ref{algo_rdv2}.

\begin{algorithm}[htb]
\caption{\FSYNC \RENDEZVOUS with Multiplicity Detection} 
\label{algo_rdv2}
\begin{algorithmic}
\IF{$\neg gathered$}
\STATE target[0] = (x + snapshot[0].x)/2
\STATE target[1] = (y + snapshot[0].y)/2
\ENDIF
\end{algorithmic}
\end{algorithm}

In this case, the gathered state is a relevant input parameter, and should be included in the input set. Now, all gathered configurations are considered identical and all non-gathered configurations are considered identical. This means that the robots must still gather after the first activation. However, while this was already considered a cycle with the empty set, if robots are now gathered, the input set is different and no cycle has yet been reached. The first cycle is reached after the second activation. If the robots remain gathered, then this is a gathered cycle and should not trigger the defeat condition. However, if for some reason the robots were to separate after the second activation, this would constitute a non-gathered cycle with the first input set, and the defeat condition would be triggered.

Using this reasoning, we check our simulator against our two-color \ASYNC algorithm~\cite{DBLP:conf/icdcn/HeribanDT18} and the two-color \SSYNC algorithm from Viglietta~\cite{DBLP:conf/algosensors/Viglietta13}. For Heriban two-color, we accurately find no counter-example, and all executions lead to the victory condition in \ASYNC, \SSYNC and \FSYNC. For Viglietta two-color, we accurately find no counter-example and all executions lead to the victory condition in \SSYNC and \FSYNC, and we find counter-examples that trigger the defeat condition in \ASYNC.

We perform a similar study for a weaker version of \GATHERING, called \CONVERGENCE.
The common condition for \CONVERGENCE is the following:

\begin{condition}[Theoretical \CONVERGENCE Victory]\label{conv}
\CONVERGENCE is achieved if and only if, for any distance $\epsilon$ greater than zero, the distance between any pair of robots is eventually always smaller than $\epsilon$.

This can also be written more formally as  $\forall \epsilon \in \mathbb{R}_{> 0}, \exists t_0 \in \mathbb{R}_{\ge 0} : \forall t_1 \ge t_0, \forall(r_1,r_2) |r_1r_2|_{t_1} \le \epsilon$
\end{condition}

Note that, as we expect, \GATHERING implies \CONVERGENCE, but \CONVERGENCE does not imply \GATHERING.
In this case, the distance between the two robots is a relevant parameter to check whether or not the problem is solved. However, since it does not change the behavior of the algorithm, it is still not part of the input set.

We define the following defeat condition:

\begin{condition}[Practical \CONVERGENCE Defeat]\
\label{def_conv}
$\exists(r_1,r_2) : \exists (t_0,t_1) \in (\mathbb{R}_{\ge 0})^2 : t_1>t_0 \land inputs(t_0) = inputs(t_1) \land 0 < |r_1r_2|_{t_0} \le |r_1r_2|_{t_1}$
\end{condition}

\begin{theorem}
For a deterministic algorithm, if condition~\ref{def_conv} is true, then condition~\ref{conv} is false.
\end{theorem}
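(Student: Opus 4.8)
The plan is to follow the same template as the proof of the analogous \GATHERING statement (the first theorem after Condition~\ref{defeat}), upgrading the qualitative observation ``the configuration stays non-gathered'' to the quantitative one ``the inter-robot distance stays bounded away from zero''. First I would read off from Condition~\ref{def_conv} a pair $(r_1,r_2)$ and two instants $t_0<t_1$ with $inputs(t_0)=inputs(t_1)$ and $0<|r_1r_2|_{t_0}\le|r_1r_2|_{t_1}$. Setting $d:=|r_1r_2|_{t_0}>0$ and $\lambda:=|r_1r_2|_{t_1}/|r_1r_2|_{t_0}\ge 1$, the portion of the execution on $[t_0,t_1]$ is a cycle of the input set; since the algorithm is deterministic, the input-set property discussed after Condition~\ref{defeat}---identical input sets generate identical successors regardless of scheduling---lets the adversary replay the activation pattern used on $[t_0,t_1]$, yielding an infinite execution visiting boundary times $t_k:=t_0+k(t_1-t_0)$.

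The core step is to prove that $|r_1r_2|_{t_k}\ge d$ for every $k\ge 0$. Because the robots are disoriented and possess no common unit of length, the input set is invariant under rescaling: the equality $inputs(t_0)=inputs(t_1)$ forces the whole configuration at time $t_1$ to be similar to the one at time $t_0$, with scale factor exactly $\lambda$. Replaying the same activations on this rescaled-but-otherwise-identical configuration therefore reproduces the same trajectory magnified by $\lambda$, so the distance is multiplied by $\lambda$ over each cycle and $|r_1r_2|_{t_k}=\lambda^k d\ge d$ by induction, while $t_k\to\infty$. It is worth noting that this is where the inequality $\lambda\ge 1$ is essential: scaling the configuration \emph{up} never violates the minimum-travel constraint $\delta$, so the argument survives unchanged in the non-rigid setting, the adversary simply stopping each robot at the corresponding magnified point.

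Finally I would contradict Condition~\ref{conv} by instantiating it at $\epsilon:=d/2$. For any candidate threshold $t$ there is a $t_k\ge t$ with $|r_1r_2|_{t_k}\ge d>\epsilon$, so the inner clause ``$\forall t_1\ge t,\ \forall(r_1,r_2)\ |r_1r_2|_{t_1}\le\epsilon$'' already fails for this single pair; hence no admissible threshold exists and \CONVERGENCE is false along the execution the adversary has built.

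I expect the main obstacle to be rigorously justifying the middle paragraph, namely that equal input sets entail geometrically similar configurations and that a replayed activation sequence acts on a similar configuration by the same similarity. This is precisely the scale-freeness that disorientation imposes and that the input-set design encodes (distance is deliberately excluded from the input set, as noted just before Condition~\ref{def_conv}); once it is granted, the passage from ``non-gathered forever'' to ``distance at least $d$ forever'' is the only genuinely new ingredient relative to the earlier \GATHERING theorem, and the remaining bookkeeping is routine.
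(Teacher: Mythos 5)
Your proposal is correct and follows essentially the same route as the paper, whose entire proof is the one-line observation that Condition~\ref{def_conv} yields a cycle in which the distance does not decrease, which the adversary replays forever to defeat Condition~\ref{conv}. The extra material in your middle paragraph (scale-invariance of the input set, the factor $\lambda\ge 1$, the $\delta$-compatibility in the non-rigid case) is a rigorous filling-in of exactly the step the paper leaves implicit, not a different argument.
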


\begin{proof}
Similarly to \GATHERING, this condition implies a cycle where distance does not decrease, so the adversary scheduler can repeat it infinitely and prevent \CONVERGENCE.
\end{proof}

This does \emph{not} imply that the distance between the two robots must always be strictly decreasing in the general case, as this would neither be a sufficient nor a necessary condition.
Because $\epsilon$ can be infinitely small, we cannot chose the 'right' $\epsilon$ to properly define a victory condition.

\paragraph{The Consequences of the Discretized Euclidean Plane:}
\label{sssec:NRN}

While it is tempting to define a victory condition similar to that of \GATHERING, the question of $\epsilon$ remains.
Floating point numbers are obviously incapable of infinite precision.
So, because any number greater that zero is a valid choice, if $\epsilon$ is smaller than the minimum positive number that can be represented in the chosen floating point precision, it cannot be distinguished from a true zero. This implies that small enough distances between two robots cannot be distinguished from a gathered state.
So, it is intrinsically impossible to distinguish \CONVERGENCE from actual \GATHERING.

Let us modify algorithm~\ref{algo_rdv} so that both robot move towards the midpoint, but only move a distance of $\dfrac{|r_1r_2|}{2} - \dfrac{\delta}{2}$ instead of $\dfrac{|r_1r_2|}{2}$. In theory, this algorithm does not lead to \RENDEZVOUS, as robots reach a distance of $\delta$ after their first activation. However, if $\delta$ is small enough, the precision of floating point numbers is such that $\dfrac{|r_1r_2|}{2} - \dfrac{\delta}{2}$ and $\dfrac{|r_1r_2|}{2}$ appear identical, and the distance $|r_1r_2|$ appears to be zero. This is essentially a \CONVERGENCE algorithm that is fast enough to be mistaken for a \RENDEZVOUS algorithm.
In practice, there is very little that can be done against this sort of behavior and \uline{conditions for \GATHERING should not be considered reliable.}

On the other hand, under different circumstances, the discrete nature of the simulation can instead lead theoretically good executions to fail in practice.
Let us consider a network of two robots $r_1$ and $r_2$ such that $r_2$ does not move, and $r_1$ moves to the midpoint. This should trivially lead to \CONVERGENCE. Let us now assume that $r_1.y = r_2.y$, and that $r_1.x$ and $r_2.x$ are such that $r_2.x$ is the smallest float greater than $r_1.x$. This possibly leads to $\dfrac{r_1.x+r_2.x}{2} = r_1.x$, so $r_1$ stops moving and the defeat condition for \CONVERGENCE is wrongly activated.

We test this by setting $r_1.y = r_2.y = 0$, picking $r_1.x$ at random in $[0,1]$ and picking $r_2.x$ at random in $[2,3]$ so that $r_1.x < r_2.x$.

In the first case, $r_1$ moves to the midpoint and $r_2$ does not move. This results in approximately 37.5\% of one million attempts wrongly failing \CONVERGENCE.

In the second case, $r_2$ moves to the midpoint and $r_1$ does not move. This results in approximately 25.0\% of one million attempts wrongly failing \CONVERGENCE.

This asymmetry may be explained by biases in the binary64 approximation. Regardless, this is a real, hard to predict problem with a non-negligible chance of happening and requires careful analysis of found counter-examples.

Problems with limited float precision also appear when simulating \GEOLEADEL.

\GEOLEADEL is successful if, given a set of robots, each with their own coordinate system, robots can all deterministically agree on a same robot, called the \robstate{Geoleader}.

\GEOLEADEL is known to be impossible in the general case~\cite{DBLP:journals/tcs/DieudonneP12} because of possible symmetries in the network. In practice, this impossibility is circumvented using randomized algorithms to break such symmetries.
Let us consider the state-of-the-art algorithm~\ref{algCan3} by Canepa and Gradinariu Potop-Butucaru~\cite{DBLP:conf/sss/CanepaP07} for three robots.

\begin{algorithm}[H]
\caption{Original \LEADEL Algorithm by Canepa and Gradinariu Potop-Butucaru~\cite{DBLP:conf/sss/CanepaP07} for Three Robots}         
\label{algCan3}         
\begin{algorithmic}
\STATE Compute the angles between two robots
\IF{$my\_angle$ is the smallest} 
\STATE Become \robstate{Leader}
\STATE Exit
\ELSIF{$my\_angle$ is not the smallest, but the other two are identical}
\STATE Become \robstate{Leader}
\STATE Exit
\ELSIF{All angles are identical}
\STATE Perform a Bernoulli trial with a probability of winning of $p = \dfrac{1}{3}$
\IF{Trial won}
\STATE Move perpendicular to the opposite side of the triangle in opposite direction
\ENDIF
\ENDIF
\end{algorithmic}
\end{algorithm}

For this particular algorithm, there are three cases:

\begin{enumerate}
    \item The common case, where one angle is greater than the two others.
    \item A rare case where two angles are identical, and the third one is smaller.
    \item The rarest case where all angles are identical. In that case, a Bernoulli trial is required to degrade to the other cases.
\end{enumerate}

Let us assume a network of three robots, $[r_1,r_2,r_3]$, such that $r_1$ is placed at coordinates $(-0.5,0)$, and $r_2$ at $(0.5,0)$.

We show where each case appears in figure~\ref{fig:Lead_theor}). The third case occurs if $r_3$ is at $(0,\pm \dfrac{\sqrt{3}}{2})$, which are noted as points $eq1$ and $eq2$. Positions of $r_3$ that lead to the second case are noted as $iso1$, $iso2$, and $iso3$.

However, it is \emph{not} possible, using floating point numbers, to have $x$ such that $x^2 = 3$. It is then impossible, regardless of the quality of the simulation, to place $r_3$ on $eq1$ or $eq2$, despite being possible in theory.

Similarly, an infinitely large number of points mathematically located on the circular arcs of the second case cannot be represented properly using floating point numbers.

To test this, each robot is given a new property 'Leader', which is a string containing the name of the \robstate{Leader} robot. We perform the simulation and display the results in figure~\ref{fig:Simu1}. 

\begin{figure}[htb]
    \centering
    \begin{subfigure}[b]{0.49\textwidth}
        \centering
        \includegraphics[width=\textwidth]{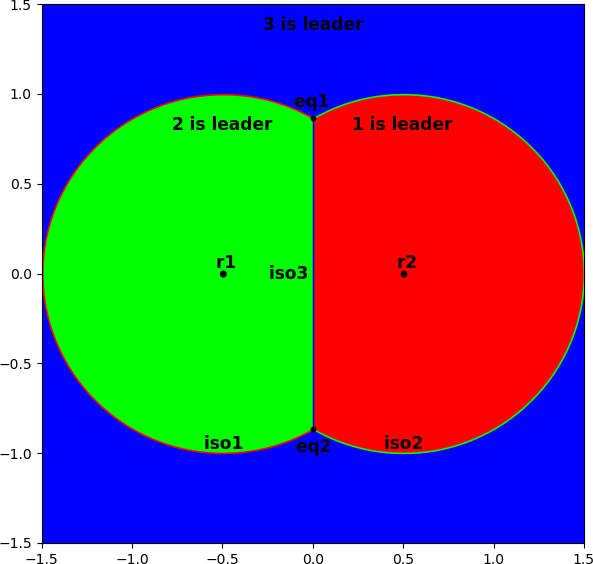}
        \captionsetup{justification=centering}
        \caption{\robstate{Leader} Depending on the Location of $r_3$. Red, green and blue represent $r_1$, $r_2$ and $r_3$, respectively.}
        \label{fig:Lead_theor}
    \end{subfigure}
    \hfill
    \begin{subfigure}[b]{0.49\textwidth}
        \centering
        \includegraphics[width=\textwidth]{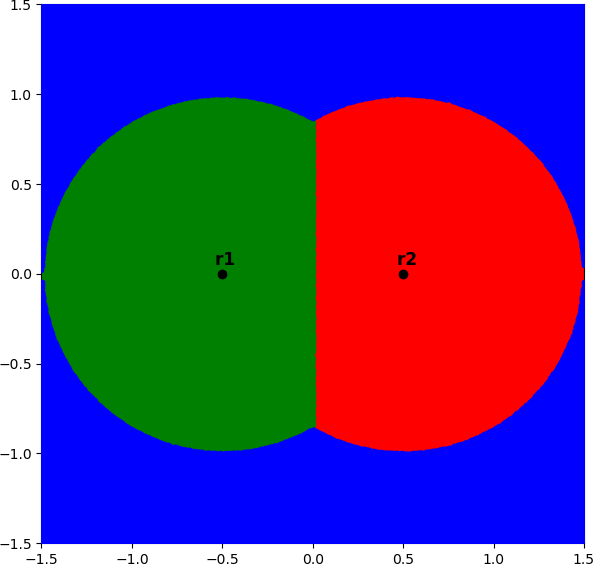}
        \captionsetup{justification=centering}
        \caption{Simulation for 3-robot \LEADEL with Perfect Vision Sensors. No isosceles or equilateral point was found.}
        \label{fig:Simu1}
    \end{subfigure}
\end{figure}

As we predicted, the fact that real numbers cannot be properly represented in our discrete, floating point space prevents the simulator from finding the known counter-example in the case of 3-robots \LEADEL. Furthermore, the three circular arcs on which the second case occurs have a combined surface theoretically equal to zero. Therefore, they are statistically impossible to find using our Monte-Carlo simulation. 

However, it should be noted that, even in a world of perfect sensors, building an equilateral triangle would require placing the third robot with physically impossible precision. So, while this counter-example exists from a mathematical standpoint, it could never occur in a more realistic setting. So, when considering practical robots, this could be considered a minor issue.

On the contrary, the use of a discretized Euclidean space could be viewed as massive advantage compared to the regular, continuous model, as it makes the inherently unrealistic hypothesis of robots being able to store and process snapshots of infinite precision. In this approximated context, snapshots have a known, maximum size, depending on the chosen precision for the coordinates of other robots.

So, in this context, storing a snapshot for a full cycle becomes a trivial matter, and using algorithm \textbf{SyncSim} described by Das \emph{et al.}~\cite{DBLP:conf/icdcs/DasFPSY12,DBLP:journals/tcs/0001FPSY16}, to simulate an \FSYNC scheduling under an \ASYNC scheduler, becomes possible without additional unrealistic hypotheses.

As a result, we believe designing algorithms that properly solve problems in the context of a discretized Euclidean space should be a priority, as it would allow mobile robots to only need to function using the \FSYNC scheduler, and would remove the unrealistic requirement of infinite precision. One such algorithm is shown in Section~\ref{chap:improved}.

\section{Fuel Efficiency in the Usual Settings}
\label{chap:performance}

\noindent The overwhelming majority of the mobile robots research has focused on proving, under a given set of conditions, whether there exists a counter example to a given problem. On the other hand, the practical efficiency of a given algorithm (with respect to real-world criteria such as fuel consumption) was rarely studied by the distributed computing community, albeit commanded by the robotics community~\cite{DBLP:conf/gecco/AroraMDB19,DBLP:journals/ijrr/YooFS16}.

Fuel-constrained robots have been considered in the discrete graph context, for both exploration~\cite{DBLP:conf/arcs/DyniaKS06} and distributed package delivery~\cite{DBLP:conf/algosensors/Chalopin0MPW13}. However, to our knowledge, no study considered the two-dimensional Euclidean space model that was promoted by Suzuki and Yamashita~\cite{DBLP:journals/siamcomp/SuzukiY99}. A possible explanation for this situation is that the more complex the algorithm (or the system setting), the more difficult it becomes to rigorously find the worst possible execution.

\subsection[\textsf{Rendezvous} Algorithms]{\RENDEZVOUS Algorithms}

\noindent We first quantify the maximum traveled distance and the average traveled distance for several known \RENDEZVOUS algorithms. We consider the \emph{Center Of Gravity algorithm}~\cite{DBLP:journals/siamcomp/SuzukiY99}, the two-color \ASYNC algorithm (\emph{Her2}) by Heriban et al.~\cite{DBLP:conf/icdcn/HeribanDT18}, the two-color algorithm (\emph{Vig2}) by Viglietta~\cite{DBLP:conf/algosensors/Viglietta13}, which is known to solve \RENDEZVOUS in \SSYNC, and \CONVERGENCE in \ASYNC, the three-color algorithm (\emph{Vig3}) by Viglietta~\cite{DBLP:conf/algosensors/Viglietta13}, the four-color algorithm (\emph{Das4}) by Das \emph{et al.}~\cite{DBLP:conf/icdcs/DasFPSY12,DBLP:journals/tcs/0001FPSY16}.
We also investigate the algorithms assuming unreliable compasses by Izumi \emph{et al.}~\cite{DBLP:journals/siamcomp/IzumiSKIDWY12}: the \SSYNC static-error compass algorithm (\emph{Stat \SSYNC}), which, despite its name, works in \ASYNC, the \SSYNC dynamic-error compass algorithm (\emph{Dyn \SSYNC}), which does not work in \ASYNC, and the \ASYNC static-error compass algorithm (\emph{Dyn \ASYNC}).

We take advantage of the modularity of our simulator. The \texttt{Robot} class now carries several new properties: \texttt{color}, the color a robot presently displays ; \texttt{compass}, the type of compass and error, \emph{i.e.} 'none', 'static' or 'dynamic' ; \texttt{compass\_error}, the maximum error allowed for the compass ; and \texttt{compass\_offset}, the current compass error. The color is changed at the end of the \texttt{COMPUTE} method. Depending on the value of \texttt{compass}, \texttt{compass\_offset} is either chosen during the initialization, or at the beginning of every \texttt{LOOK} method.

Each algorithm is first carefully analyzed on paper to find the worst possible execution. Simulations are then run according to the aforementioned protocols. Due to limitations described in Section~\ref{sssec:NRN}, we actually assess those protocols for a degraded notion of \CONVERGENCE rather than \GATHERING. The distance traveled is expressed relatively to the initial distance between the two robots. In practice, the first robot is always located at $\{0,0\}$ and the second robot is placed at random on the circle of radius 1 centered on $\{0,0\}$. Algorithms are tested with no initial pending moves, as arbitrary pending moves would render fuel efficiency mostly impossible to reliably monitor.

Results are summed up in Table~\ref{table_res}. The red color denotes cases where the simulation was stuck in non-gathered cycles, and had to be manually unstuck. Details as to why this happened are provided below.

For scale, running 4 instances of \emph{Vig3} for one hour under the \ASYNC scheduler resulted in $\simeq$ 14 million total individual executions.
\clearpage
\begin{table}[htb]
    \centering
    \begin{subfigure}[b]{\linewidth}
        \centering
        \captionsetup{justification=centering}
        \includegraphics[width=\linewidth]{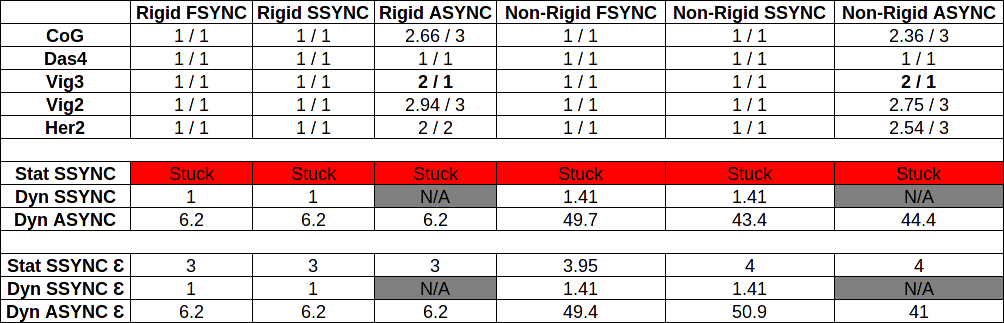}
        \caption{Maximum Traveled Distance \\ Found / Predicted}
    \end{subfigure}
    \begin{subfigure}[b]{\linewidth}
        \centering
        \captionsetup{justification=centering}
        \includegraphics[width=\linewidth]{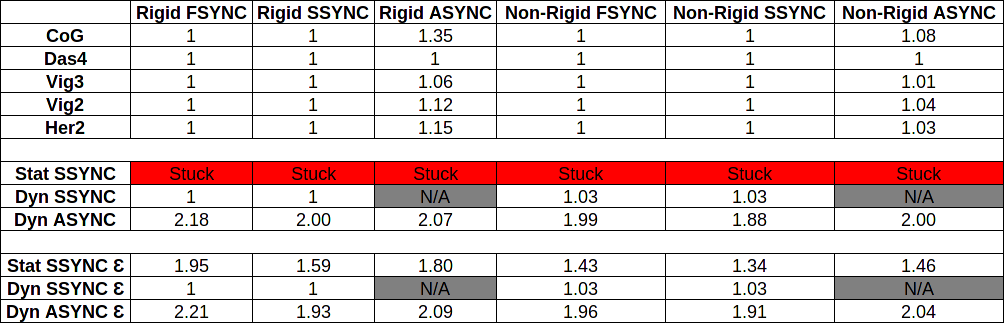}
        \caption{Average Traveled Distance}
    \end{subfigure}
    \caption{Maximum and Average Traveled Distances}
    \label{table_res}
\end{table}

While most results match the predictions, our pen and paper analysis missed a worst case execution for \ASYNC \emph{Vig3}, which was found by the simulator (highlighted in bold in Figure~\ref{table_res}). This highlights the difficulty of manually finding the maximum distance even with simple algorithms and settings.
It should be noted that rigid motion yields worst results than non-rigid. This is normal because increasing the traveled distance relies on picking a target outside of the $[r_1,r_2]$ segment, and when this is the case, performing the full motion increases the traveled distance more than performing it partially. Thus, unless stated otherwise, all further simulations assume rigid motion.

The difference between \SSYNC and \ASYNC with respect to efficiency becomes apparent, as under the \ASYNC scheduler, optimal fuel consumption mandates using four colors, while a simple oblivious algorithm is sufficient in \SSYNC.

The algorithms using compasses yield the most interesting results. First, numerous simulations of the \SSYNC static algorithm became stuck.
These failures are due to the fact that the sine and cosine operations used in the algorithms tend to sum errors, and there is a possibility that a robot moves in a way that results in an angle of exactly 0, which actually randomly yields an angle of either $0-\epsilon$ or $0+\epsilon$, where $\epsilon$ is a very small positive number. This in turn results in unsolvable cycles that prevent \CONVERGENCE. As $\epsilon$ was never larger than $10^{-9}$, we chose to prevent this behavior by slightly enlarging the interval of the condition that should be triggered on an angle of zero to an angle in $[-10^{-6},10^{-6}]$. We do the same for all conditions for consistency. So any condition that should be true for angles in $[A,B[$ are now true for angles in $[A-10^{-6},B-10^{-6}[$, in $[A,B]$ now in $[A-10^{-6},B+10^{-6}]$, in $]A,B]$ now in $]A+10^{-6},B+10^{-6}]$ and in $]A,B[$ now in $]A+10^{-6},B-10^{-6}[$.

Interestingly, this new condition only had notable impact on the static error algorithm. Indeed, these errors could be seen as small dynamic random angle errors. Since the static error algorithm is not designed to be resilient against dynamic errors, it fails whenever they appear. This also demonstrate the resilience of the dynamic error algorithms.

\subsection[\textsf{Convergence} For \textit{n} Robots]{\CONVERGENCE For \textit{n} Robots}

\noindent Cohen and Peleg~\cite{DBLP:journals/siamcomp/CohenP05} proved the Center of Gravity (CoG) algorithm solves \CONVERGENCE for $n$ robots under the \ASYNC scheduler. We analyze the fuel consumption of the algorithm under both the \SSYNC and \ASYNC schedulers. Results for the minimum, maximum, and average distance traveled are show in table~\ref{NCoG}. We use the sum of the distances to the CoG in the initial configuration as a baseline unit of distance, \emph{i.e.} the distance traveled in \FSYNC.

\begin{table}[htb]
    \centering
    \captionsetup{justification=centering}
    \includegraphics[scale=0.39]{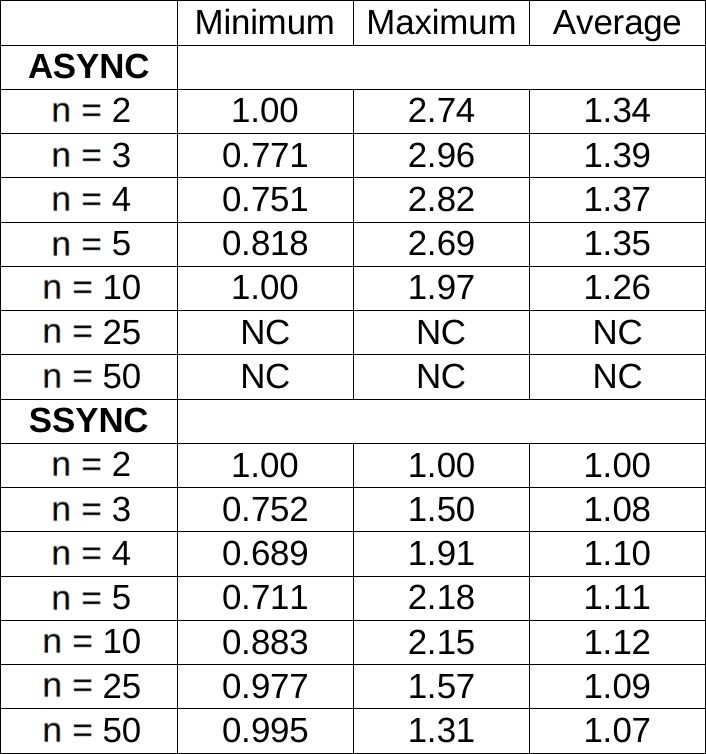}
    \caption{Traveled Distances for CoG}
    \label{NCoG}
\end{table}

It should be noted that, while previous results are based on at least hundreds of thousands of simulations, due to the increase in simulation complexity, in \ASYNC, for $n=25$, only 31 simulations could be computed under an hour. So they were discarded. Similarly, for $n=50$, no simulation could be finished under an hour.

Looking at the results, one element immediately jumps out: for $n \geq 3$, the CoG algorithm wastes movements. This is easy to understand: robots move towards the center of gravity, which for 3 or more robots is different from the geometric median (\emph{a.k.a.} the Weber point), which would actually minimize movement. Our tests seem to indicate that aiming for the median instead of the CoG can reduce traveled distance by up to 30\%. However, it is a known result that no explicit formula for the geometric median exists.
As a result, in practice, when trying to minimize traveled distance, \CONVERGENCE for $n$ robots should rely on an approximation of the geometric median rather than the center of gravity.

\section{Analyzing Algorithms in Realistic Settings}
\label{chap:realistic}

\noindent In Section~\ref{chap:performance}, the simulation of inaccurate compasses yielded extremely interesting results. To follow this track, we now focus in this section on the setting where sensors are inaccurate. In more details, we analyze the Center of Gravity (CoG) algorithm for \RENDEZVOUS in this setting, as well as the \GEOLEADEL algorithm by Canepa and Gradinariu Potop-Butucaru~\cite{DBLP:conf/sss/CanepaP07}.

\subsection{Visibility Sensor Errors}

\noindent To study the impact of inaccurate sensors, we consider three different models for vision error. For a robot $r_1$ looking at a robot $r_2$ located in $(x,y)$ in the Cartesian coordinate system centered at $r_1$, and located at $(r,\theta)$ in the polar coordinate system centered at $r_1$, we define:
\begin{itemize}
    \item The \emph{absolute} error model~\cite{DBLP:journals/automatica/Martinez09} uses a constant value $err$. A first number $R_{err}$ is picked uniformly at random in $[0,err]$, and a second $\theta_{err}$ in $[0,2\pi]$. The perceived position of $r_2$ is then $(x+R_{err} cos(\theta_{err}),y+R_{err} sin(\theta_{err}))$.
    \item The \emph{relative} error model~\cite{DBLP:journals/siamcomp/CohenP08} uses two constants $err_{dist}$ and $err_{angle}$. Two numbers $R_{err}$ and $\theta_{err}$ are picked uniformly at random in $[-err_{dist},err_{dist}]$ and $[-err_{angle},err_{angle}]$. The polar coordinates of $r_2$ are then perceived to be $(r + r*R_{err}, \theta + \theta_{err})$
    \item The \emph{absolute-relative} error model is similar to relative error, but the perceived polar coordinates are $(r + R_{err}, \theta + \theta_{err})$
\end{itemize}

These error models are depicted in Figure~\ref{fig:errors}.

\begin{figure}[htb]
    \centering
    \begin{subfigure}[b]{\linewidth}
        \centering
        \includegraphics[width=0.6\textwidth]{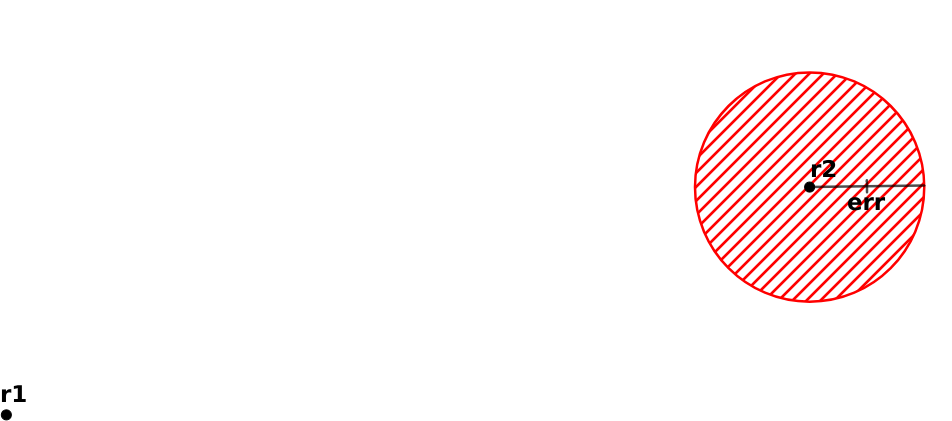}
        \caption{Absolute error}
        \label{fig:err_abs}
    \end{subfigure}

    \begin{subfigure}[b]{\linewidth}
        \centering
        \includegraphics[width=0.6\textwidth]{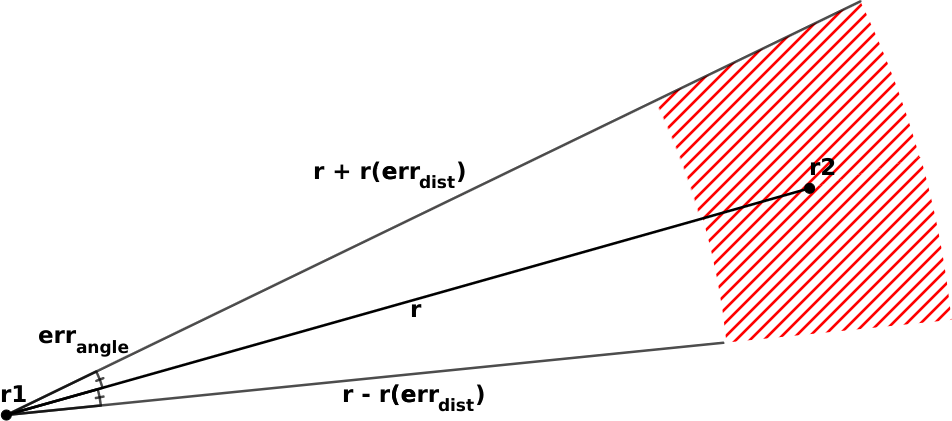}
        \caption{Relative error}
        \label{fig:err_rel}
    \end{subfigure}

    \begin{subfigure}[b]{\linewidth}
        \centering
        \includegraphics[width=0.6\textwidth]{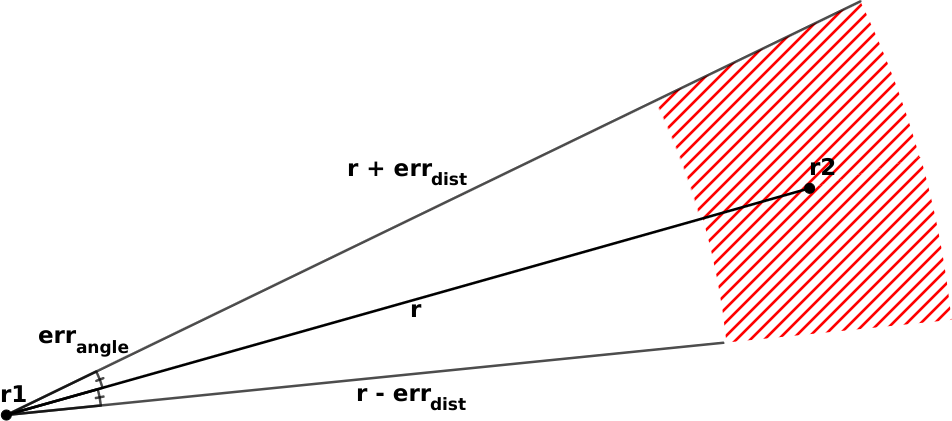}
        \caption{Absolute-relative error}
        \label{fig:err_relabs}
    \end{subfigure}
    \captionsetup{justification=centering}
    \caption{Types of Errors \\ The $r_2$ point is the actual location of robot $r_2$. The red hashed area represents possible detected positions by robot $r_1$.}
    \label{fig:errors}
\end{figure}

It should be noted that each model could be used to accurately model errors for different types of sensors.
The absolute error model is interesting because it is simple to compute, requires no change of coordinate system, uses a single parameter, and closely matches the behavior of robots where the \LOOK phase is an abstraction of GPS-type coordinates exchanges~\cite{DBLP:journals/jnw/YaredDIW07}.
The two relative models are more complex from a computing perspective, but closely match the use of either computer vision or telemetry sensors. Both carry an angular error matched with either proportional or absolute distance error. Which type of distance error is more appropriate would depend on the exact type of sensor.

These new error models drive adding three properties to the \texttt{Robot} class:

\begin{itemize}
    \item \texttt{LOOK\_error\_type}, a string that defines the type of error and can be either \texttt{'none'}, \texttt{'relative'}, \texttt{'absolute'}, or \texttt{'abs-rel'}.
    \item \texttt{LOOK\_distance\_error}, a float that matches either $err$ or $err_{dist}$, depending on the type of error.
    \item \texttt{LOOK\_angle\_error}, a float that  matches $err_{angle}$.
\end{itemize}

Robots then chose the corresponding error (with parameters chosen uniformly at random) when performing their \LOOK operation.

\subsection[\textsf{Convergence} for \textit{n}=2 Robots]{\CONVERGENCE for \textit{n}=2}

\noindent \CONVERGENCE with vision error using the CoG algorithm has already been studied by Cohen and Peleg~\cite{DBLP:journals/siamcomp/CohenP08}. The error model they considered is identical to our relative error model. Their paper states that \CONVERGENCE with distance error using the CoG algorithm is impossible in the general case. This is, however, only true for $n\geq3$, which the authors omit to mention. In the case $n=2$, it appears to be theoretically impossible to make the algorithm diverge for a distance error smaller than a $100\%$, or $err = 1$. We can reasonably ignore the case of an error greater than $100\%$, as it would allow for a robot to perceive another one directly behind itself.

To our knowledge, no formal result exists regarding the angle error. In theory, the maximum angle error is $\pi$. We simulate \CONVERGENCE for $n=2$ robots using the CoG algorithm for the relative error model. The error for each robot is chosen uniformly at random at the beginning of the execution.  

\begin{figure}[htb]
    \centering
    \begin{subfigure}[b]{0.49\textwidth}
        \centering
        \includegraphics[scale=0.28]{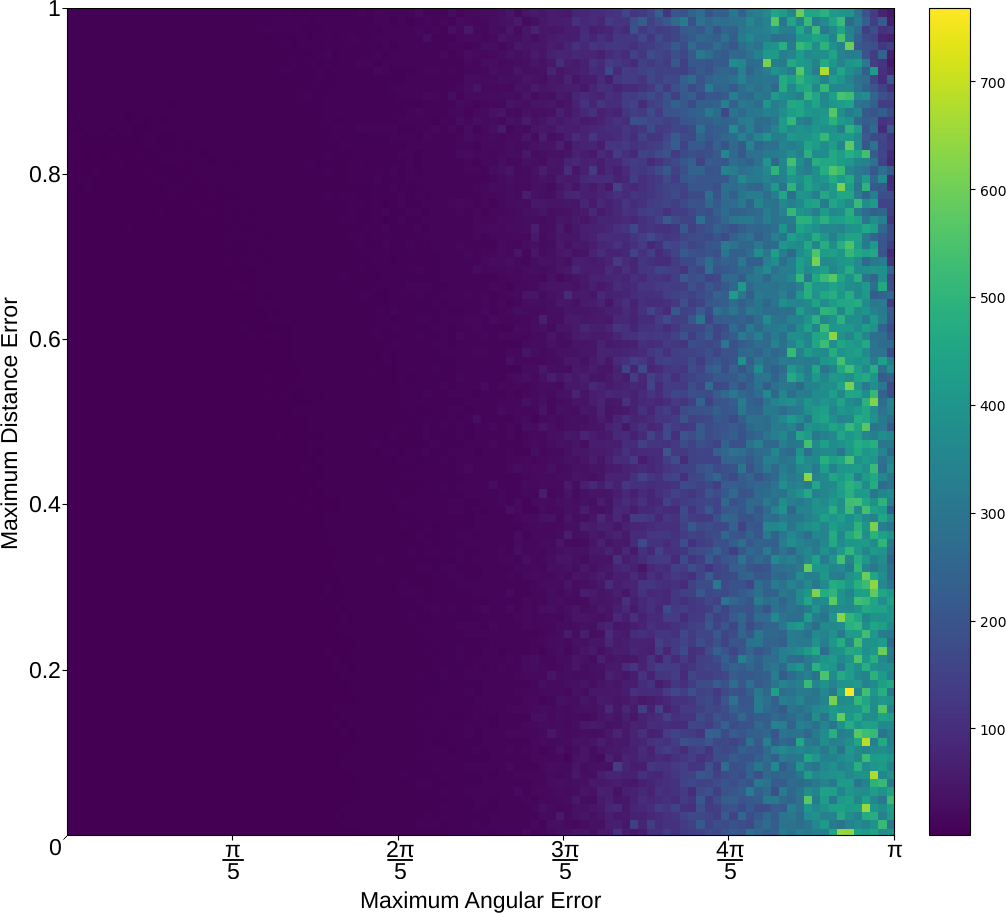}
        \caption{Maximum Traveled Distance}
        \label{fig:Max_dist}
    \end{subfigure}
    \hfill
    \begin{subfigure}[b]{0.49\textwidth}
        \centering
        \includegraphics[scale=0.28]{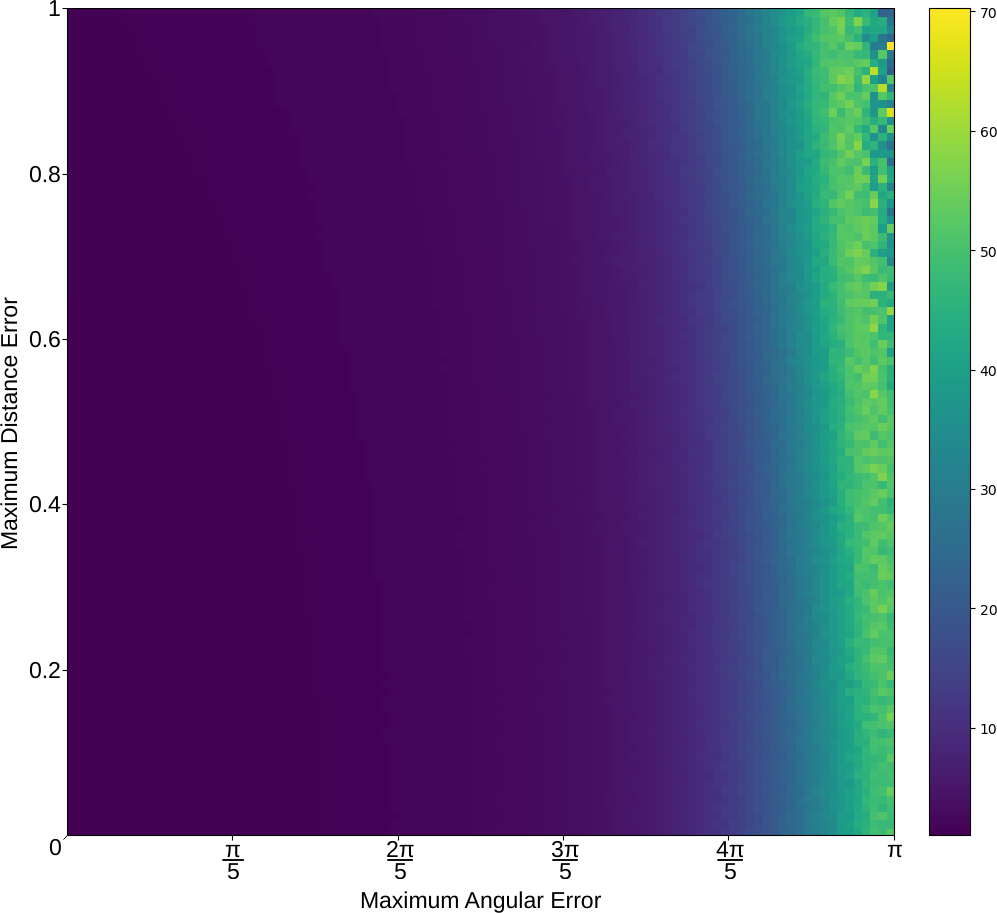}
        \caption{Average Traveled Distance}
        \label{fig:Avg_Dist}
    \end{subfigure}
    \begin{subfigure}[b]{0.49\textwidth}
        \centering
        \includegraphics[scale=0.28]{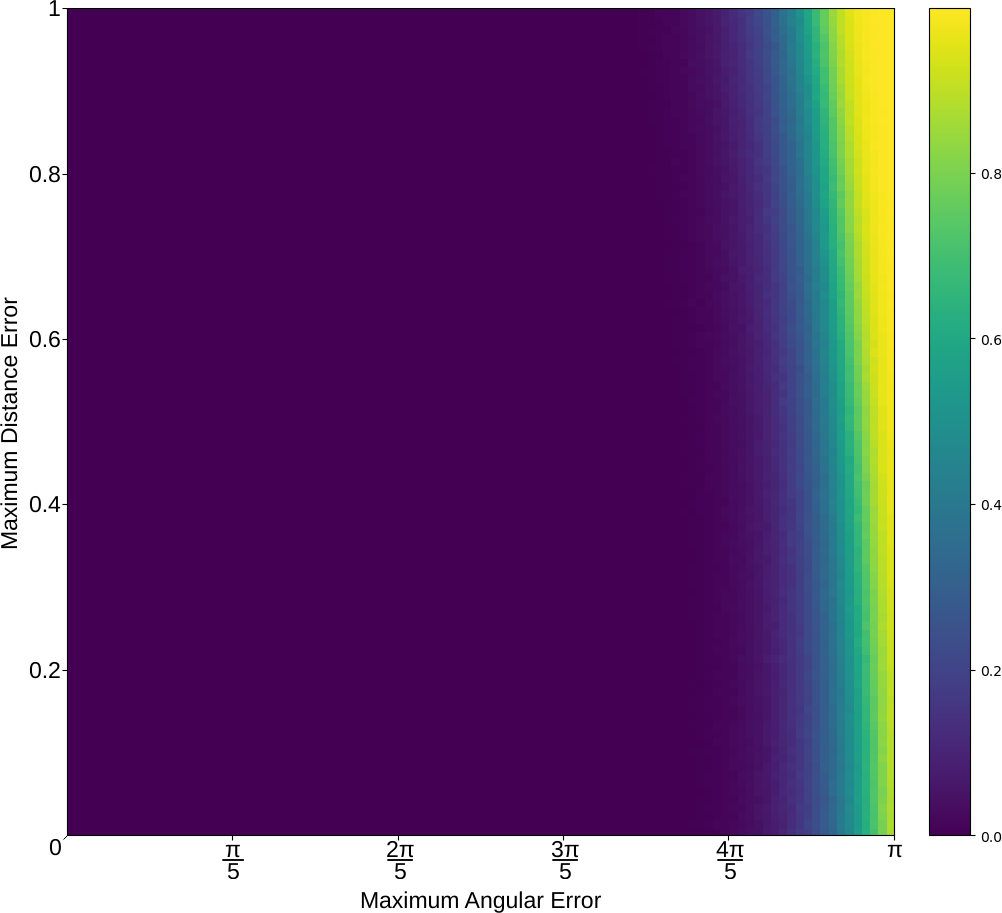}
        \caption{Proportion of Diverging Executions}
        \label{fig:Fail_exec}
    \end{subfigure}
    \captionsetup{justification=centering}
    \caption{Movement and Divergence of the CoG Algorithm for Two Robots with Inaccurate Visibility Sensors}
    \label{fig:CoG_Dist_Ang}
\end{figure}

We must also consider the now possible case of a diverging algorithm. Since the execution is random, any setting should \emph{eventually} converge. However, we must put a reasonable stopping condition in case the execution is clearly diverging. We chose to activate the defeat condition if the distance between the two robots becomes ten times larger than the distance in the initial configuration.

Note that the apparent decrease in maximum and average traveled distance for higher angle error is most likely due to the increase of diverging executions (fewer executions converge, but the traveled distance for those is shorter).

It appears clearly that the angular error has a much greater potential for both preventing\linebreak\CONVERGENCE, and making robots waste fuel. Indeed, when the angular error remains below $3\pi/5$, a distance error up to 100\% can be tolerated with no performance loss. \\To give some perspective, the realistic setting of a $10\%$ vision error with a $1^\circ$ angle error yields a maximum traveled distance of 1.221 and an average of 1.036, with no divergent executions out of more than 500 million data points. 

\pagebreak

\subsection{Compass Errors}

\noindent In the particular case of compass based algorithms of Izumi \emph{et al.}~\cite{DBLP:journals/siamcomp/IzumiSKIDWY12}, rendezvous is possible when the compasses are inaccurate. More specifically, the maximum tolerated errors are $\frac{\pi}{2}$, $\frac{\pi}{4}$ and $\frac{\pi}{6}$ for the static \SSYNC, dynamic \SSYNC, and dynamic \ASYNC algorithms, respectively. In our simulation we chose static errors, for consistency, with values up to $\frac{49\pi}{100}$, $\frac{24\pi}{100}$ and $\frac{16\pi}{100}$, to avoid possible edge cases. 
Results of maximum and average traveled distances for these algorithms are detailed in Table~\ref{table_comp_err}.

\begin{table}[htb]
    \centering
    \begin{subfigure}[b]{\linewidth}
        \centering
        \captionsetup{justification=centering}
        \includegraphics[scale=0.35]{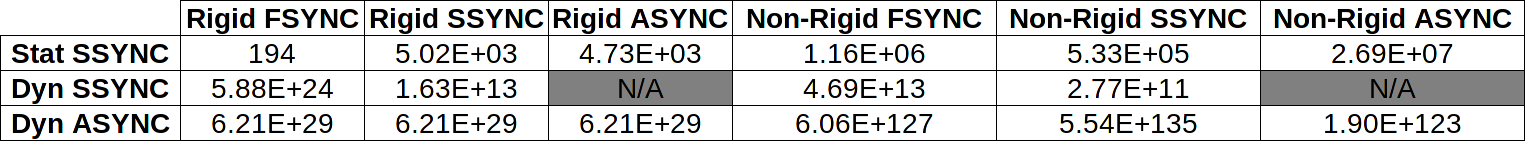}
        \caption{Maximum Traveled Distance}
    \end{subfigure}
    \begin{subfigure}[b]{\linewidth}
        \centering
        \captionsetup{justification=centering}
        \includegraphics[scale=0.35]{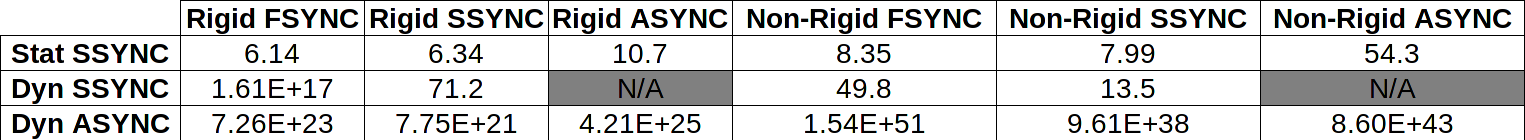}
        \caption{Average Traveled Distance}
    \end{subfigure}
    \captionsetup{justification=centering}
    \caption{Maximum and Average Traveled Distances for \RENDEZVOUS \\ with Inaccurate Compasses}
    \label{table_comp_err}
\end{table}

We observe that the unreliable compasses are used in a way that makes robots rotate around each other until they are oriented in such a way that one robot moves while the other stays, regardless of the error. However, there are no provisions in these algorithms to limit distance increases during the rotating phases, which explains the results. Detailed observation shows the distance between the two robots can gradually diverge towards infinity during rotation and then converge to zero in a single cycle. This also demonstrated a problem for our \CONVERGENCE criterion: robots could converge at rather large coordinates such that the coordinates of robots are in succession, but, since the accuracy of floating point numbers decreases as the number increase, the distance between the two robots was greater than $10^{-10}$. As a result, we modified the criterion to $|r_1r_2|<max(10^{-10},|Or_1|*10^{-10})$, with $O$ the point of coordinates $\{0,0\}$.

\subsection[\textsf{Geoleader} \textsf{Election}]{\GEOLEADEL}

\noindent Let us now consider \GEOLEADEL algorithm \ref{algCan3} by Canepa and Gradinariu Potop-Butucaru~\cite{DBLP:conf/sss/CanepaP07}, for $n=3$.
Looking at our previous results from Section~\ref{sssec:NRN}, we notice that the borders between each zone should be an issue for imperfect sensors, as different errors for different robots may lead to robots electing different \robstate{Leader} robots.

\begin{figure}[htb]
    \centering
    \includegraphics[width=0.6\linewidth]{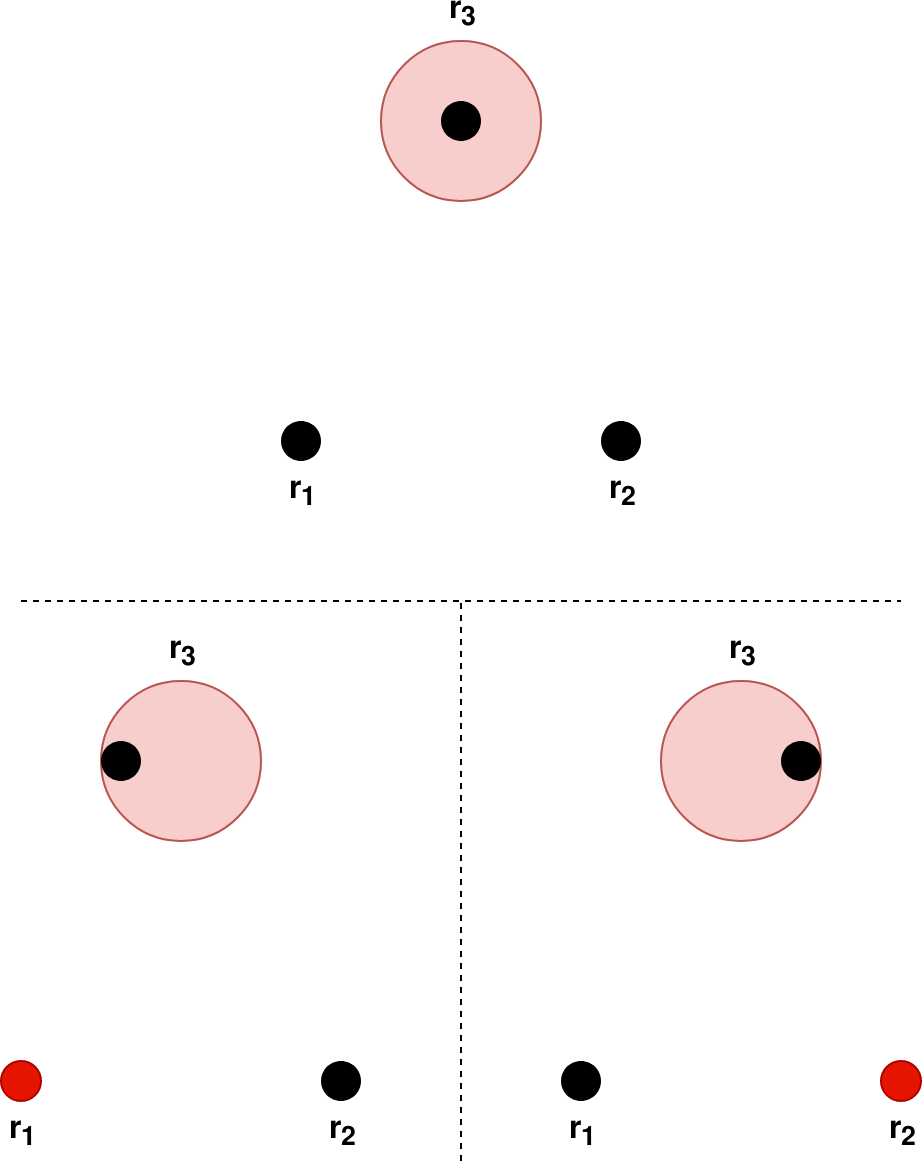}
    \caption{Example of \LEADEL Failure Due to Imperfect Vision}
    \label{fig:ex1}
\end{figure}

We demonstrate how this phenomenon can occur in Figure~\ref{fig:ex1} for the case of absolute vision error. On top is the actual configuration, where angles $\widehat{r_1r_2r_3}$ and $\widehat{r_2r_1r_3}$ are equal\footnote{Because robots have no chirality, angles cannot reliably be distinguished from their opposite. So, two opposite angles may always be considered equal.}, and angle $\widehat{r_1r_3r_2}$ is smaller than both, so $r_3$ should be elected. The red circle shows the possible perceived position of $r_3$ by $r_1$ and $r_2$ due to vision error. In the bottom left case, we show a possible perception by $r_1$ where $r_1$ should be elected \robstate{Leader}, as $\widehat{r_2r_1r_3}$ is now greater than $\widehat{r_1r_2r_3}$. On the lower right, $r_2$ similarly thinks it should be elected. Now, two different robots consider themselves \robstate{Leader} and the election process fails.

We now use the absolute model to simulate \GEOLEADEL with $err = 0.001$, for $n=3$.
This simulation yields $\simeq 0.1\%$ of errors in total, where two robots compute different \robstate{Leader} robots, and is shown in figure~\ref{fig:Simu2}. 

\begin{figure}[htb]
    \centering
    \captionsetup{justification=centering}
    \includegraphics[width=0.675\linewidth]{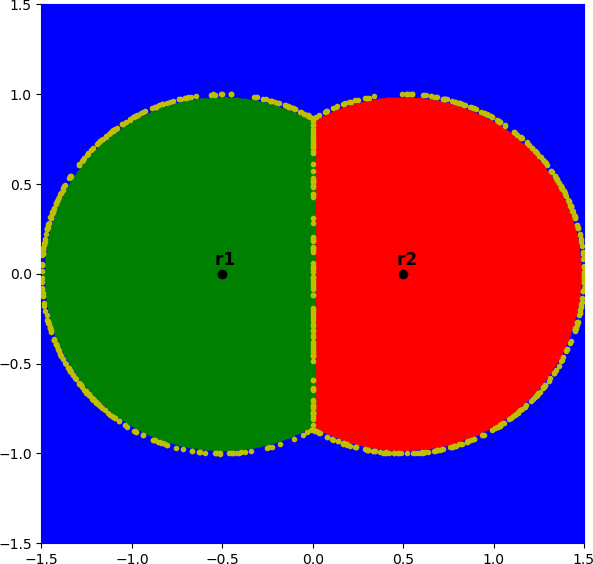}
    \caption{Simulation for 3-robot \LEADEL with Absolute Vision Error \\ Yellow points represent configurations where the error generates two different \robstate{Leader} robots.}
    \label{fig:Simu2}
\end{figure}

\section[Improved \textsf{Convergence} and \textsf{Leader} \textsf{Election}]{Improved \CONVERGENCE and \LEADEL for Faulty Visibility Sensors}
\label{chap:improved}

\noindent Following our observations of problematic behaviors in Sections~\ref{chap:performance} and~\ref{chap:realistic}, we provide two new algorithms: a fuel efficient \CONVERGENCE algorithm for two robots, and a \GEOLEADEL algorithm that is resilient to faulty visibility sensors.

\subsection[Fuel Efficient \textsf{Convergence}]{Fuel Efficient \CONVERGENCE}

\noindent We provide a new algorithm (Algorithm~\ref{alg:efficient}) for the \ASYNC \CONVERGENCE of two robots. Our algorithm is a simplified version of the two-color algorithm by Viglietta~\cite{DBLP:conf/algosensors/Viglietta13}, which does \emph{not} solve \GATHERING (while Viglietta's algorithm does solve \GATHERING in \SSYNC). Our algorithm however ensures that no target can ever be outside of the segment between the two robots, ensuring no wasted moves, and that there exists a scheduling such that convergence is eventually achieved. It is denoted by \textsc{FEC} (Fuel Efficient \CONVERGENCE, presented in Figure~\ref{fig:Efficient2}). Our algorithm still uses two colors (\Black and \White), and when observing the other robot's color, the observing robot either remains still (the 'Self' target) or goes to the computed midpoint between the two robots (the 'Midpoint' target), possibly switching its color to the opposite one.

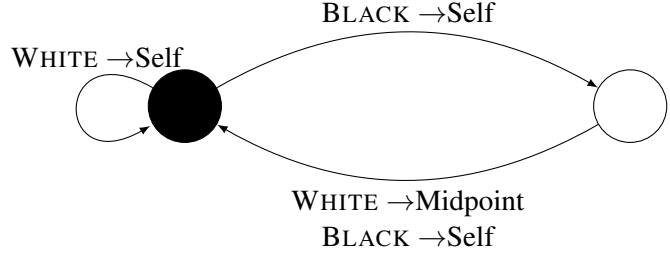
\begin{figure}[htb]
	\centering
	\begin{tikzpicture}
		\node[blk] (B) {};
		\node[wht] (W) [right of=B] {};	
		\path[->] (B) edge[bend left] node[above,align=center]{\Black$\rightarrow$Self} (W);
		\path[->] (W) edge[bend left] node[below,align=center]{\White$\rightarrow$Midpoint \\ \Black$\rightarrow$Self} (B);
		\path[->] (B) edge[out=150,in=210,loop] node[near start,above,align=right]{\White$\rightarrow$Self} (B);
	\end{tikzpicture}
	\caption{FEC: Fuel Efficient \CONVERGENCE Algorithm for Two Robots}
	\label{fig:Efficient2}
\end{figure}

\begin{algorithm}
\caption{FEC: Fuel Efficient \CONVERGENCE Algorithm for Two Robots} 
\label{alg:efficient}         
    \begin{algorithmic}                   
        \STATE 
        \IF{me.color = \White}
            \STATE me.color $\Leftarrow$ \Black
            \IF{other.color = \White}
    		    \STATE me.destination $\Leftarrow$ other.position/2
    		\ENDIF
        \ELSIF{me.color = \Black}
        	\IF{other.color = \Black}
            	\STATE me.color $\Leftarrow$ \White
    		\ENDIF
        \ENDIF
    \end{algorithmic}
\end{algorithm}

As a sanity check, we ran this algorithm through our simulator for one hour ($\simeq 30$ million data points) under a randomized \ASYNC scheduler and could not find a single execution where the traveled distance was greater than the initial distance.

\begin{theorem}
The Fuel Efficient \CONVERGENCE Algorithm (\ref{alg:efficient}) guarantees the distance traveled for \CONVERGENCE is never greater than the initial distance between the two robots under the \ASYNC scheduler, assuming no pending moves in the initial configuration.
\end{theorem}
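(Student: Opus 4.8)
The plan is to exhibit a single real-valued potential that is invariant throughout any execution and that directly encodes the fuel bound. For an execution starting from inter-robot distance $d_0$, define at every time $t$ the quantity $\Phi(t) = D(t) + L(t)$, where $D(t)$ is the current distance between the two robots and $L(t)$ is the total length already travelled by both robots. Since $L(0)=0$ and $D(0)=d_0$, we have $\Phi(0)=d_0$; and because $D(t)\ge 0$, the claim $L(t)\le d_0$ follows as soon as one proves $\Phi(t)\le d_0$ for all $t$. I would obtain this by showing that $\Phi$ never increases, and in fact that it stays exactly constant.

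Only \MOVE phases can change $\Phi$, since \LOOK, \COMPUTE and Self-target activations leave both positions unchanged. In the adversarial \ASYNC model used here a single robot acts at a time, so during a move of robot $X$ its partner $Y$ is stationary at its current position $P_Y$. If $X$ travels from $P_X$ to $P_X'$, then $\Delta\Phi = |P_X'-P_X| + \bigl(|P_X'-P_Y| - |P_X-P_Y|\bigr)$, which by the triangle inequality is $\ge 0$, with equality exactly when $P_X'$ lies on the segment $[P_X,P_Y]$ without overshooting $P_Y$. Hence the whole theorem reduces to the geometric claim that every genuine move lands the moving robot on the segment joining it to the \emph{current} position of its partner --- precisely the ``no wasted move'' property advertised for \textsc{FEC}. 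The triangle-inequality bookkeeping is routine; the real content is this segment claim.

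The crux is therefore a colour-discipline analysis of the \ASYNC interleavings. The two facts I would use are: (i) a robot moves only while \White and only after observing its partner \White, and then targets the midpoint; and (ii) the colour is switched to \Black at the end of \COMPUTE, so a robot already displays \Black throughout its \MOVE. From (ii) a moving robot is \Black and cannot be perceived as a movable \White robot, which lets me prove a positional lemma: while a robot is \White its position is constant, since it can only leave its position by moving, which turns it \Black. Consequently, when $X$ finally moves to a midpoint target $T$ computed from a snapshot taken during $X$'s \White-period, the partner $Y$ can move at most once in the intervening window, and when it does it necessarily computes the midpoint of the very same pair of \White-period positions; by symmetry of the midpoint this is exactly $T$. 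Thus either $Y$ never moved and $T$ is the true current midpoint, or $Y$ moved once and now sits at $T$; in both cases $T$ lies on the current segment $[P_X,P_Y]$ at or before $P_Y$, and non-rigidity only stops $X$ earlier on that same segment. This yields $\Delta\Phi=0$ for every move.

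Combining the steps, $\Phi$ is constant equal to $d_0$ along the whole execution, so $L(t)=d_0-D(t)\le d_0$ at all times, which is the stated bound (and shows it is asymptotically tight as $D(t)\to 0$ under \CONVERGENCE). I expect the main obstacle to be the third step: rigorously ruling out, over all \ASYNC interleavings and all initial colourings permitted by the ``no pending moves'' hypothesis, the possibility of two distinct stale midpoints or of an overshoot --- that is, formalising the position-constancy lemma and the midpoint-coincidence argument so that the symmetry genuinely holds regardless of the order in which the two robots took their snapshots.
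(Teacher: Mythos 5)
Your proof is correct, and its combinatorial core coincides with the paper's: both rest on the same colour-discipline facts (a robot that moves is \Black during its \MOVE, having observed $\{\White,\White\}$ while itself \White; hence no robot can start a move while the other is moving, and no robot moves twice without a full intervening cycle of its partner). Where you genuinely diverge is the bookkeeping that turns these facts into the fuel bound. The paper argues informally that wasted fuel requires a robot to move ``not towards'' the other (to switch sides on the line through the two robots), and then verifies via a case table (a pending \STAY or \HALF of $r_1$ against $k$ \STAY or one \HALF of $r_2$) that after each full cycle the distance shrinks within explicit intervals such as $\left[X/2,\,X-\delta\right]$ and sides are never switched. You instead introduce the potential $\Phi = D + L$ and reduce everything to the single geometric claim that every move lands on the current segment toward the partner without overshoot; since the triangle inequality forces $\Delta\Phi \ge 0$, you must prove exact invariance $\Phi \equiv d_0$, which your segment claim delivers. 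Your midpoint-coincidence lemma --- the partner moves at most once between a robot's \LOOK and the end of its \MOVE, and then necessarily to the same midpoint $T$, because \White robots are stationary so both snapshots record the same pair of positions --- is a sharper statement than anything the paper writes down, and it makes rigorous the step the paper leaves as intuition, namely that ``never moving away'' telescopes into total travel $= d_0 - D_{\mathrm{final}} \le d_0$. What the paper's route buys in exchange is quantitative information (the explicit post-cycle distance intervals, including the non-rigid $\delta$) and an explicit treatment of the serialized ``simultaneous \HALF'' case. One caveat on your framing: the assertion that a single robot acts at a time is legitimate here only because the paper's \ASYNC model serializes simultaneous events (one robot, one phase per step); in a model with genuinely overlapping \MOVE phases you would add the observation that both concurrent motions are directed along the common segment computed from the same pair of positions, so $\Delta\Phi = 0$ still holds. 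With that remark, and the position-constancy lemma you already flag as the point requiring care, your argument is a complete and somewhat cleaner proof of the statement.
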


\begin{proof}
First, we see that to achieve \CONVERGENCE with an optimal distance, robots should always be moving towards each other. So, for robots to converge using more than the initial distance, it is required that, at one point in the execution, one robot moves \emph{not towards} the other robot.
We note that a network of two disoriented robots can be simplified as a line. In that sense, the only movement that can increase the maximum \CONVERGENCE distance is when a robot moves opposite the other robot. In other words, when robots 'switch sides'.
Let us now prove that no robot can target a robot while it is in its \MOVE phase: 
Only the $\{$\WHITE,\WHITE $\}$ snapshot can trigger a \MOVE phase. Since this transition implies a change of color to \BLACK at the end of the \COMPUTE phase, robots that move can only be \BLACK.
So, if a robot is moving, it is \BLACK and the other robot, regardless of color, cannot start moving because its snapshot is different from $\{$\WHITE,\WHITE $\}$.
Furthermore, because robots switch to \BLACK after moving, and can only switch to \WHITE if the other robot is \BLACK, no robot can execute multiple \MOVE in sequence unless the other robot has executed at least a full cycle in between. So a robot cannot move multiple times while the other has pending moves.
We look at what happens after each robot completes at least one full cycle. We assume $r_1$ performs a \LOOK, and $r_2$ performs $k$ cycles before $r_1$ finishes its \MOVE. The distance after $r_1$ finishes its cycle is presented in table~\ref{tab:movCycles-bis}.

\begin{table}[htb]
\centering
\begin{adjustbox}{max width=\textwidth}
\begin{tabular}{|c|c|c|}\hline
& $r_1$ has a pending \STAY & $r_1$ has a pending \HALF \\ \hline
$r_2$ executes $k$ \STAY & $X$ & $\left[\dfrac{X}{2} , X - \delta \right]$  \\ \hline
$r_2$ executes $1$ M2H\footnotemark[2]& $\left[ \dfrac{X}{2} , X - \delta \right]$ & $\left[ 0 , X - 2\delta \right]$ \\ \hline

\end{tabular}
\end{adjustbox}
\caption{Distance after a full cycle of $r_1$ and $k$ full cycles of $r_2$ with an initial distance of $X$}
\label{tab:movCycles-bis}
\end{table}

\footnotetext[2]{As explained above, moving a second time requires at least a full cycle from the other robot.}

In the case of simultaneous \HALF, the distance can be reduced down to zero, but robots cannot switch sides.
In both other cases where a \MOVE happens, the distance is reduced at most down to half, and robots cannot switch sides.
Overall, in no cases can the robots move not towards one another, so the maximum distance traveled is always the initial distance between the two robots.
\end{proof}

However, while the randomized scheduler we use for the simulator ensures convergence is always achieved, a rapid analysis of the algorithm shows that this algorithm ensures fuel efficiency, but does not actually ensures convergence. In fact, a simple \SSYNC scheduling can infinitely prevent robots from moving. This further highlights that simulations and formal proofs are complementary. We conjecture that Fuel Efficient Convergence is not actually possible for two colors, and that algorithms using three colors may even yield Fuel Efficient Rendezvous (not just Convergence).

We also compare the resilience of this algorithm against vision errors with the center of gravity algorithm as a baseline in Figures~\ref{MAX_err} and~\ref{AVG_err}. Our results show that this algorithm is slightly more resilient to vision errors than CoG. 

\begin{figure}[htb]
    \centering
    \begin{subfigure}[b]{0.45\textwidth}
        \centering
        \includegraphics[width=\textwidth]{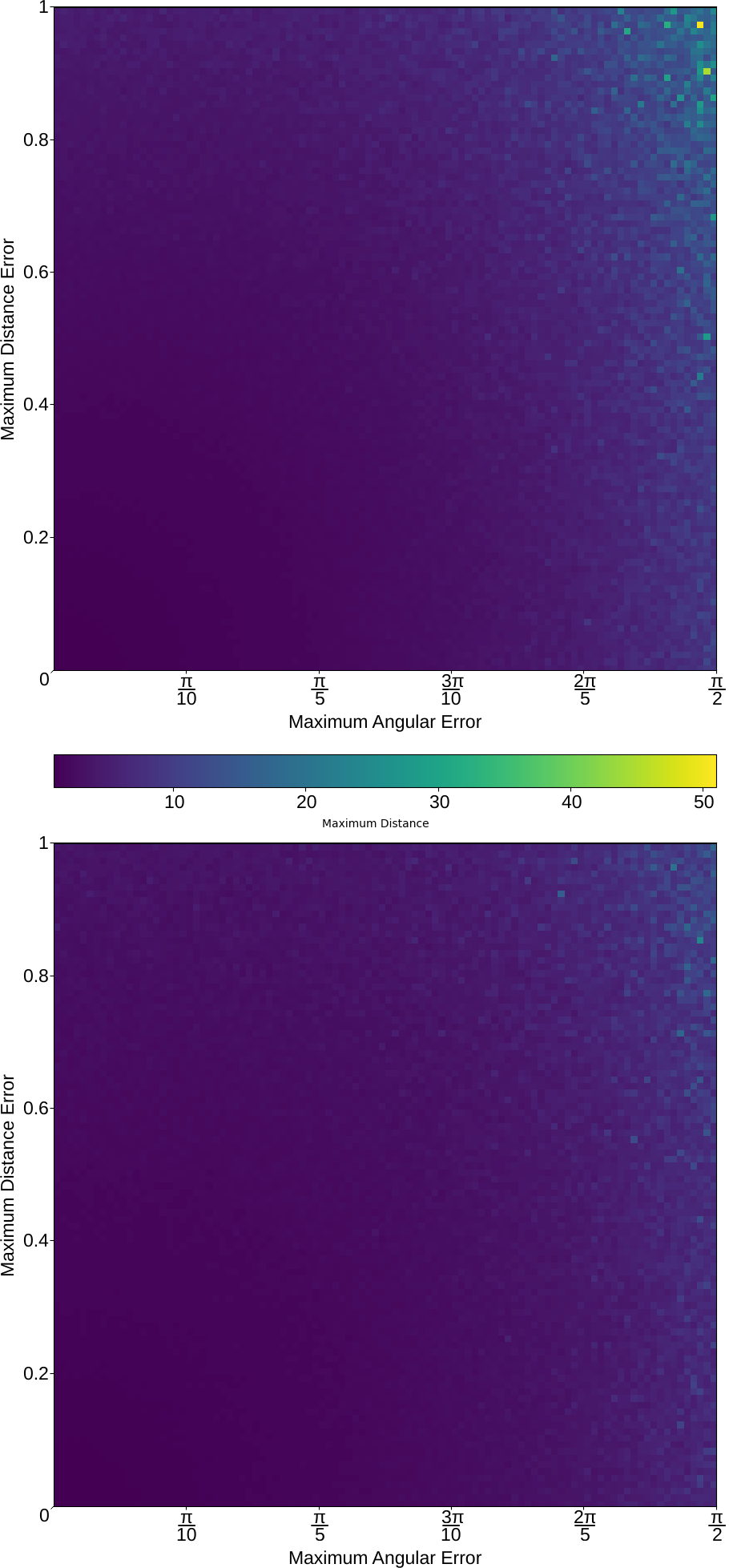}
        \captionsetup{justification=centering}
        \caption{Maximum Distance Traveled by \textsc{CoG} (top) and \textsc{FEC} (bottom)}
        \label{MAX_err}
    \end{subfigure}
    \hfill
    \begin{subfigure}[b]{0.45\textwidth}
        \centering
        \includegraphics[width=\textwidth]{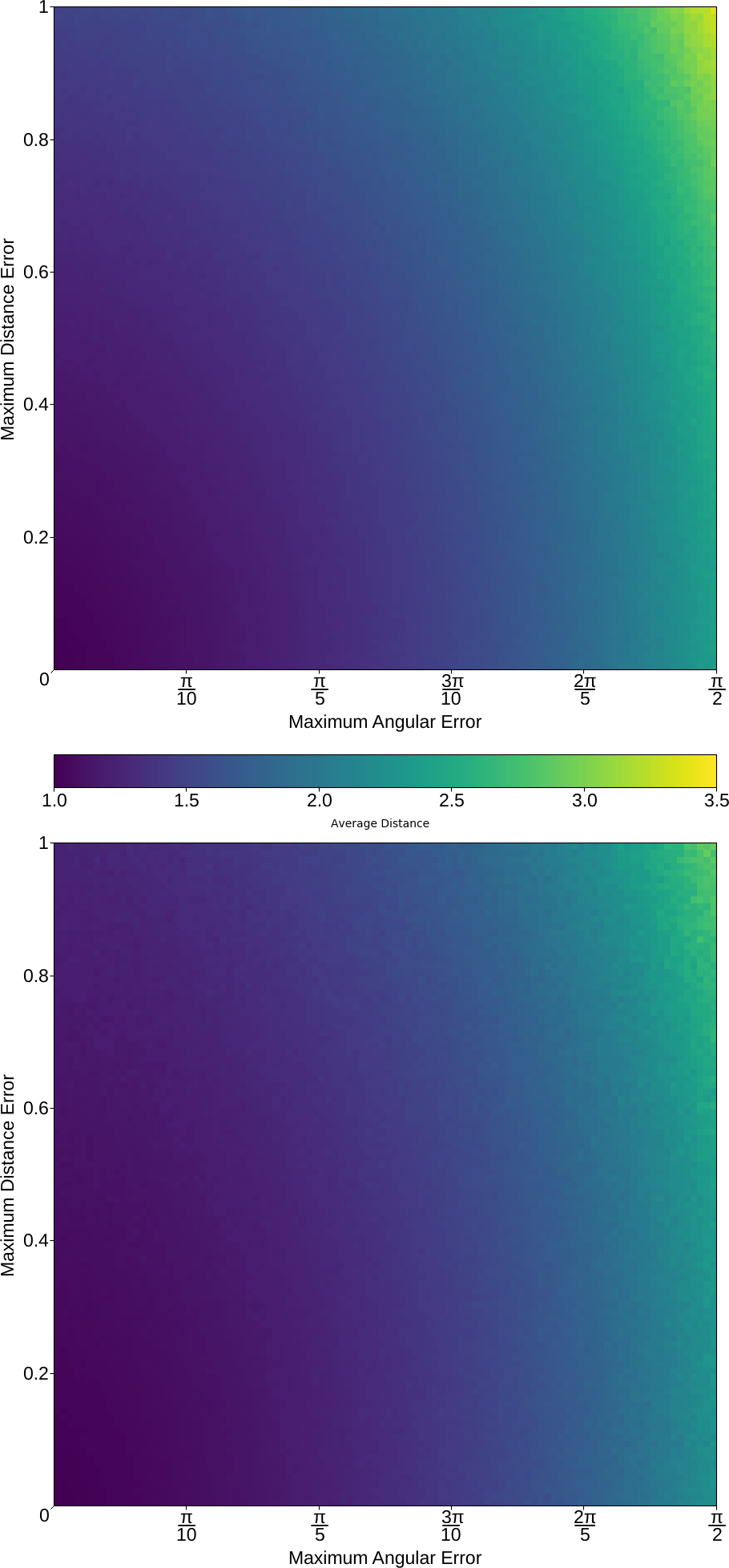}
        \captionsetup{justification=centering}
        \caption{Average Distance Traveled by \textsc{CoG} (top) and \textsc{FEC} (bottom)}
        \label{AVG_err}
    \end{subfigure}
\end{figure}

\subsection[Error Resilient \textsf{Geoleader} \textsf{Election}]{Error Resilient \GEOLEADEL}
\label{ssec:errLead}

\noindent The \GEOLEADEL algorithm by Canepa and Gradinariu Potop-Butucaru~\cite{DBLP:conf/sss/CanepaP07} was \emph{not} designed under the assumption that the visibility sensors could be prone to errors. In this subsection, we use this awareness to create a new, error-resilient, version of this algorithm, using our simulation framework.

\paragraph{\textsf{Geoleader} \textsf{Election} for Four Robots}
One intuitive way of building a fully resilient algorithm for \LEADEL could be based on robots computing the bounds of the error zone. While this seems feasible for a 3-robot election, it becomes far less trivial for four robots or more. We present the results of a leader election for four robots in the appendix.

\paragraph{Proposed Algorithm}
In section~\ref{chap:realistic}, we used the simulation framework to detect failed elections caused by visibility sensor errors. Since mobile robots are able to run any algorithm during their \COMPUTE phase, then they can also run the simulation framework to do precisely that.

The improved algorithm relies on the knowledge of the vision error model and its upper bounds to simulate random errors in a robot's position and snapshot and determine whether there exists a possibility of the other robots electing different \robstate{Leader} robots.

Note that absolutely knowing that the election cannot fail (\emph{i.e.}, the election cannot yield two different \robstate{Leader} robots for two different robots) would require checking the entire surface of possible errors, which is not feasible in practice. So, we assume that robots perform a finite number of trials and decide accordingly.

Each robot internally simulates a position error for each robot in its snapshot within the known margins, performs a simulated election for each robot in its snapshot, and checks for discrepancies in the resulting \robstate{Leader} robots. This is repeated with new random errors for a given number of tries, similar to a Monte-Carlo approach.

Once a robot believes the election process can succeed, it chooses the \robstate{Leader} as in the original algorithm.
Otherwise, it picks a random direction and distance, and performs a \MOVE to "scramble" the network.
This process repeats until all robots believe the election can succeed.
This process is detailed in Algorithm~\ref{algR}.

\begin{algorithm}
\caption{Reliable \LEADEL algorithm} 
\label{algR}
\begin{algorithmic}
\STATE $L = self.$COMPUTE$('Leader Election')$
\STATE $my\_network = self.snapshot \cup self$
\STATE $counter = 0$
\WHILE{$counter < nb\_tries$}
\FOR{$r_1$ in $my\_network$}
\STATE $r_v = r_1$
\STATE Change $r_v.x$ and $r_v.y$ randomly according to error parameters
\STATE $r_v.snapshot = my\_network/\{r_1\}$
\FOR{$r_2$ in $r_v.snapshot$}
\STATE Change $r_2.x$ and $r_2.y$ randomly according to error parameters
\ENDFOR
\STATE $L_v = r_v.$COMPUTE$('LeaderElection')$
\IF{$L \neq L_v$} 
    \STATE Move randomly
    \STATE Exit
\ENDIF
\ENDFOR
\STATE $counter \mathrel{+}= 1$
\ENDWHILE
\STATE L is elected \robstate{Leader}
\end{algorithmic}
\end{algorithm}

\noindent We now perform simulations using this algorithm. Each point is sorted according to the following:

\begin{itemize}
    \item If no robot detects a possible error, it is a valid point.
    \item If at least one robot has detected a possible error, and decided to move as a result, it is a detected possible error point.
    \item If no robot moves, but two robots have different \robstate{Leader} robots, it is an undetected error point.
\end{itemize}

We measure the proportion of undetected error and possible error points for $nb_{tries}$ between 0 and 30. Results are presented in Figure~\ref{fig:perf}.

\begin{figure}[htb]
    \centering
    \includegraphics[width=0.7\linewidth]{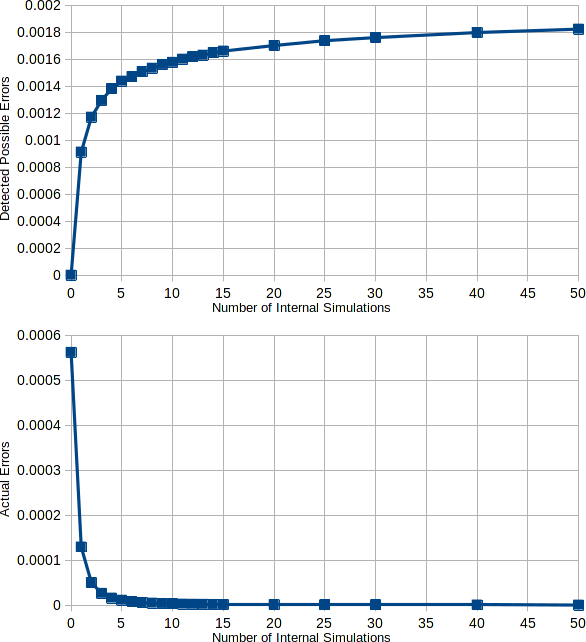}
    \caption{Performance of the Error-Resilient Election Algorithm \\ $err = 0.001$}
    \label{fig:perf}
\end{figure}

Note that the number of undetected error points, while decreasing, does not reach zero under our testing conditions. Also, using a single internal simulation typically results in a $\sim 80\%$ reduction in the number of undetected error points. Using 10 internal simulations resulted in a reduction of $99.5\%$ of undetected error points. Which number of internal simulations is the best suited would depend on both the speed of the leader election process and reliability of the obtained solution requirements.

Importantly, we notice that, were we to choose an error model and error bounds such that it models the possible errors of representing real numbers using limited precision floats, then this particular algorithm, when used with an infinitely large, similar to $\mathbb{R}^2$ number of random tries, can be made to reliably detect anomalies due to the errors induced by evolving in the continuous plane, yet only perceiving a discretized plane. Actually, me make the conjecture that this algorithm can be adapted to allow any algorithm that makes decisions based on robot locations to operate properly in a perceived discretized plane.

Furthermore, using this algorithm allows us to reduce the size of a snapshot to a finite, storable amount to realistically use the \textbf{SyncSim} protocol~\cite{DBLP:journals/tcs/0001FPSY16}, and fully simulate the \FSYNC scheduler in \luminous \ASYNC.

\section{Conclusion}
\label{sec:conclusion}

\noindent In this paper, we introduce a modular framework designed to simulate mobile robots for any given setting.

We discuss the limitations and constraints of this approach, and use it to compute the maximum distance traveled, or fuel efficiency, of multiple algorithms in several settings, with interesting results. In particular, we note that the algorithm by Izumi \emph{et al.}~\cite{DBLP:journals/siamcomp/IzumiSKIDWY12} can lead to an unbounded increase in distance before eventually gathering. Similarly, the center of gravity algorithm is inherently sub-optimal for $n>2$ robots, and robots should use an algorithm based on the geometric median instead.

We then use this framework to simulate inaccurate sensors for mobile robots and verify the behavior of \CONVERGENCE and motion based \LEADEL under this new model. We also introduce errors in the perception of colors for \luminous robots performing state-of-the-art two-robot \GATHERING.

Finally, we designed two new algorithms. The first one is designed to perform two-robot \CONVERGENCE under the \ASYNC scheduler with optimal fuel efficiency.
The second algorithm uses the simulator itself to allow robots to solve motion based \LEADEL with inaccurate sensors. The latter can be adapted to allow for decision making algorithm, such as \LEADEL, to function using discretized snapshots, and so, to use the \textbf{SyncSim} protocol to simulate the \FSYNC scheduler in \luminous \ASYNC.

Overall, this framework achieves its planned objective of being both easy to use and able to produce useful results for researchers. As a test, we timed the full implementation, and testing in \FSYNC, \SSYNC and \ASYNC, of the two color \RENDEZVOUS algorithm from Viglietta to require less than half an hour, including basic network monitoring and testing.

The source code and instructions for our simulator are provided in the appendix and at the following repository: \url{https://github.com/UberPanda/PyBlot-Sim}

\subsection*{Future Work}

\noindent As we already discussed, our simulator is modular to allow for use for any given algorithm and model. So it seems logical that it should, ideally, implement every existing model and test all major algorithms in the literature, such as mutual visibility for opaque robots.

Furthermore, while interesting for researchers, our simulator is not a tool for formal proofs. However, one could also argue that in its current state, we have not proven that the simulator actually simulates mobile robots, even within our degraded hypotheses.
We believe that the simulator itself should be formally proven to match the model of mobile robots it claims to simulate.
Note that the usefulness of this proof would be limited, as the addition of any new module may require proving the entire simulator again.

Finally, our \LEADEL algorithm for errors in vision is able to function in a continuous setting using discretized snapshots. The design philosophy behind this algorithm of using randomized tries to simulate sensor errors is not specific to the \LEADEL problem, and it could be used for other algorithms that rely on making decisions based on the locations of robots in the network and that are sensitive to errors in perception.
Building new algorithms that can use these finite snapshots allows us to use the \textbf{SyncSim} protocol~\cite{DBLP:journals/tcs/0001FPSY16} and simulate a \FSYNC scheduler in \luminous \ASYNC and would be a major advantage for resilience to asynchrony.

\printbibliography

\newpage

\appendix

\section{Appendix: example of an Instance of the Simulator}

\label{AP:pyblot}

We present a minimum working example of an instance of the simulator. It simulates executions of the Vig2 algorithm~\cite{DBLP:conf/algosensors/Viglietta13} in the standard \OBLOT model with rigid motion, under the \FSYNC scheduler. It monitors the number of cycles needed to complete degraded gathering\footnote{i.e. robots are closer than $10^{-10}$ with the initial distance being $1$.}. Results predictably show the possible need of two full cycles in the case where both robots start in the \BLACK color. For better readability, more advanced features of the simulator are not included.

\lstset{breaklines=true,
        language=Python,
        postbreak=\mbox{\textcolor{red}{$\hookrightarrow$}\space},
        frame=lines,
    	tabsize=2,
    	basicstyle={\footnotesize\ttfamily},
    	basewidth={.5em,0.55em},
    	commentstyle={\rmfamily\color{orange!60!black}\bfseries\itshape\ttfamily},
    	keywordstyle={\rmfamily\color{blue!60!black}\bfseries\ttfamily},
    	stringstyle={\rmfamily\color{green!50!black}\bfseries\itshape\ttfamily},
    	showstringspaces=false,
	    }

\lstinputlisting[caption={Robot Class File: Common/lib\_robot.py\label{lst:robot}}]{code/lib_robot.py}

\lstinputlisting[caption={Miscellaneous Functions File: Common/lib\_misc\_functions.py\label{lst:miscfunc}}]{code/lib_misc_functions.py}

\lstinputlisting[caption={Algorithms File: Common/lib\_algorithms.py\label{lst:alg}}]{code/lib_algorithms.py}

\lstinputlisting[caption={Scheduler File: Common/lib\_schedulers.py\label{lst:sched}}]{code/lib_schedulers.py}

\lstinputlisting[caption={Simulation Functions File: Common/lib\_sim\_functions.py\label{lst:simfunct}}]{code/lib_sim_functions.py}

\lstinputlisting[caption={Simulation File: Vig2.py\label{lst:simulation}}]{code/Vig2.py}

\newpage

\section{Appendix: leader Election for 4 Robots}

Figures~\ref{fig:lead4_0} through~\ref{fig:lead4_5} show the result of attempting to elect a \robstate{Geoleader} using algorithm \ref{algCan4} by Canepa and Gradinariu Potop-Butucaru~\cite{DBLP:conf/sss/CanepaP07}, which should fail whenever two robots are identically close to the center of the smallest enclosing circle. However, as we have shown, such cases are statistically impossible with perfect sensors and simply become a small subset of the error points of error-prone sensors. 

\begin{algorithm}[H]
\caption{Original \LEADEL Algorithm by Canepa and Gradinariu Potop-Butucaru~\cite{DBLP:conf/sss/CanepaP07} for Four or More Robots}         
\label{algCan4}         
\begin{algorithmic}
\STATE Compute the smallest enclosing circle \SEC
\STATE Compute the distance $d_k$ to the center of \SEC, for all robots $1\leq k \leq n$
\IF{$d_{myself} < d_k \forall k \neq myself$, where $1\leq k \leq n$} 
    \STATE Become \robstate{Leader}
    \STATE Exit
\ENDIF
\IF{$d_{myself} \leq d_k \forall k \neq myself$, where $1\leq k \leq n$} 
    \STATE Perform a Bernoulli trial with a probability of winning of $p = \dfrac{1}{n}$
    \IF{Trial won}
    \STATE move a distance of $d_{myself} \cdot p$ towards the center of the \SEC
    \ENDIF
\ENDIF
\end{algorithmic}
\end{algorithm}

Robots $r_1$ and $r_2$ are fixed at coordinate $(-0.5,0)$ and $(0.5,0)$, respectively. Robot $r_3$ has a fixed location for each image, on a grid in the lower left quarter of the image. Symmetries of the network allow us to extrapolate results for the remainder of the positions of $r_3$. The position of $r_4$ is chosen at uniformly at random, and each point show the result for a given position. The error is absolute with $err = 0.001$.

As before, colors red, green, and blue denote that robot $r_1$, $r_2$ and $r_3$ are the chosen\linebreak\robstate{Geoleader}, respectively. Color cyan denotes that robot $r_4$ is the chosen \robstate{Geoleader}, and yellow denotes that two different \robstate{Geoleader} robots have been elected due to sensor error.
Similarly to previous simulations, each image contains one million points.

As can be seen from the following figures, computing precisely the bounds of the error zone is extremely complex in practice, as the formula would be different and more complex as the network size grows.

\begin{figure}
    \centering
    \begin{subfigure}[b]{0.49\linewidth}
        \centering
        \includegraphics[width=\textwidth]{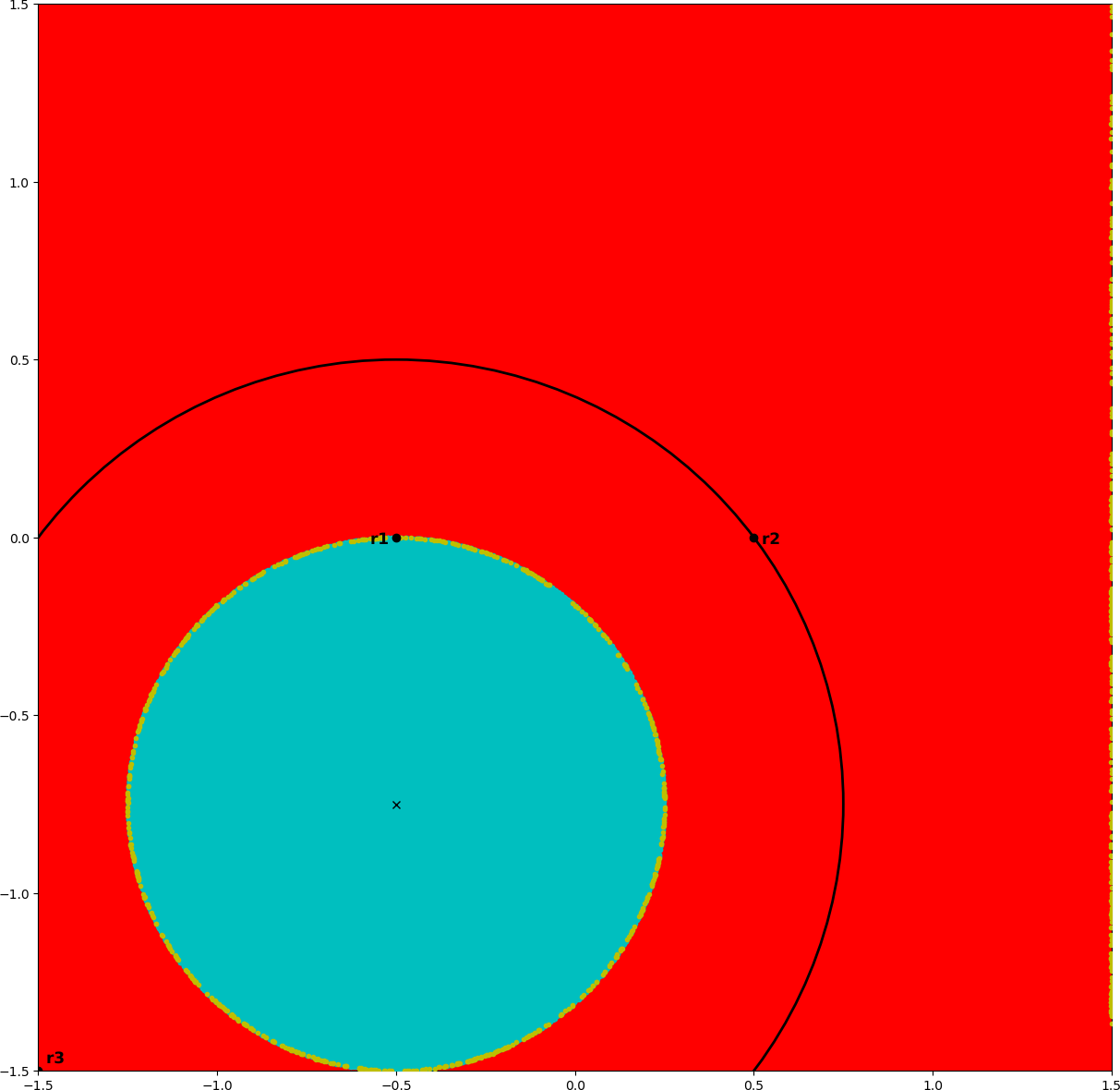}
    \end{subfigure}
    \hfill
    \begin{subfigure}[b]{0.49\linewidth}
        \centering
        \includegraphics[width=\textwidth]{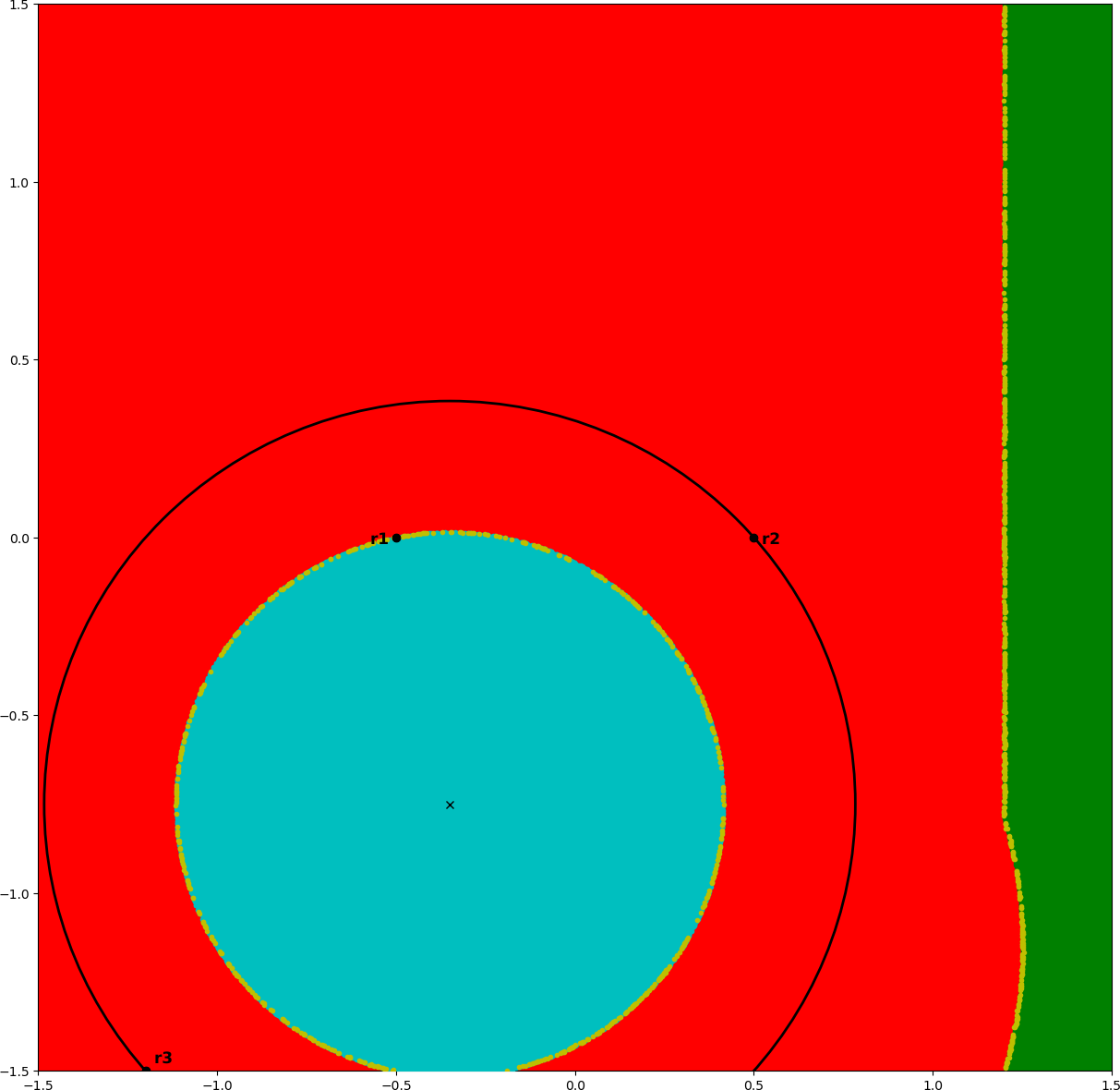}
    \end{subfigure}
    \begin{subfigure}[b]{0.49\linewidth}
        \centering
        \includegraphics[width=\textwidth]{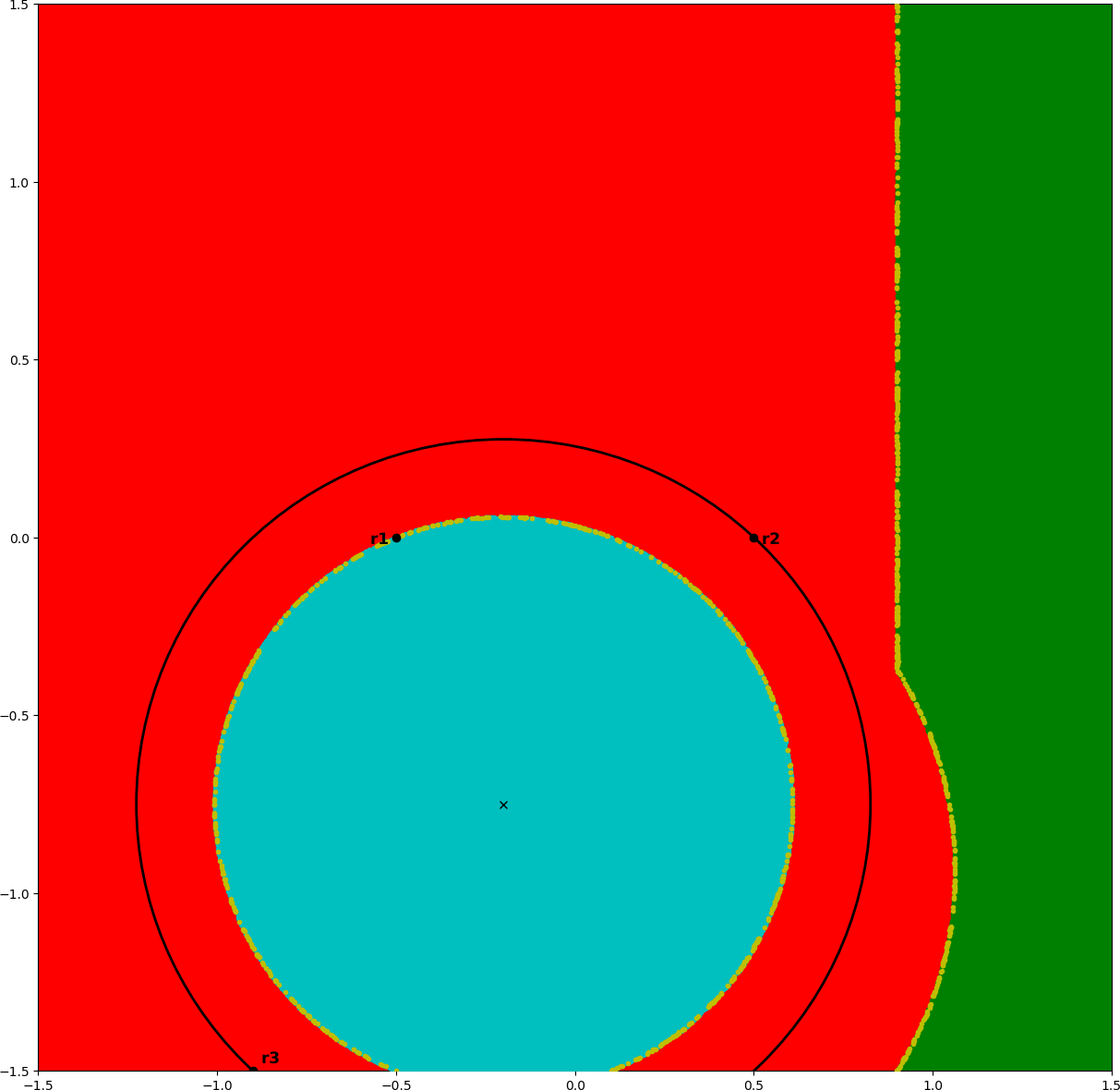}
    \end{subfigure}
    \hfill
    \begin{subfigure}[b]{0.49\linewidth}
        \centering
        \includegraphics[width=\textwidth]{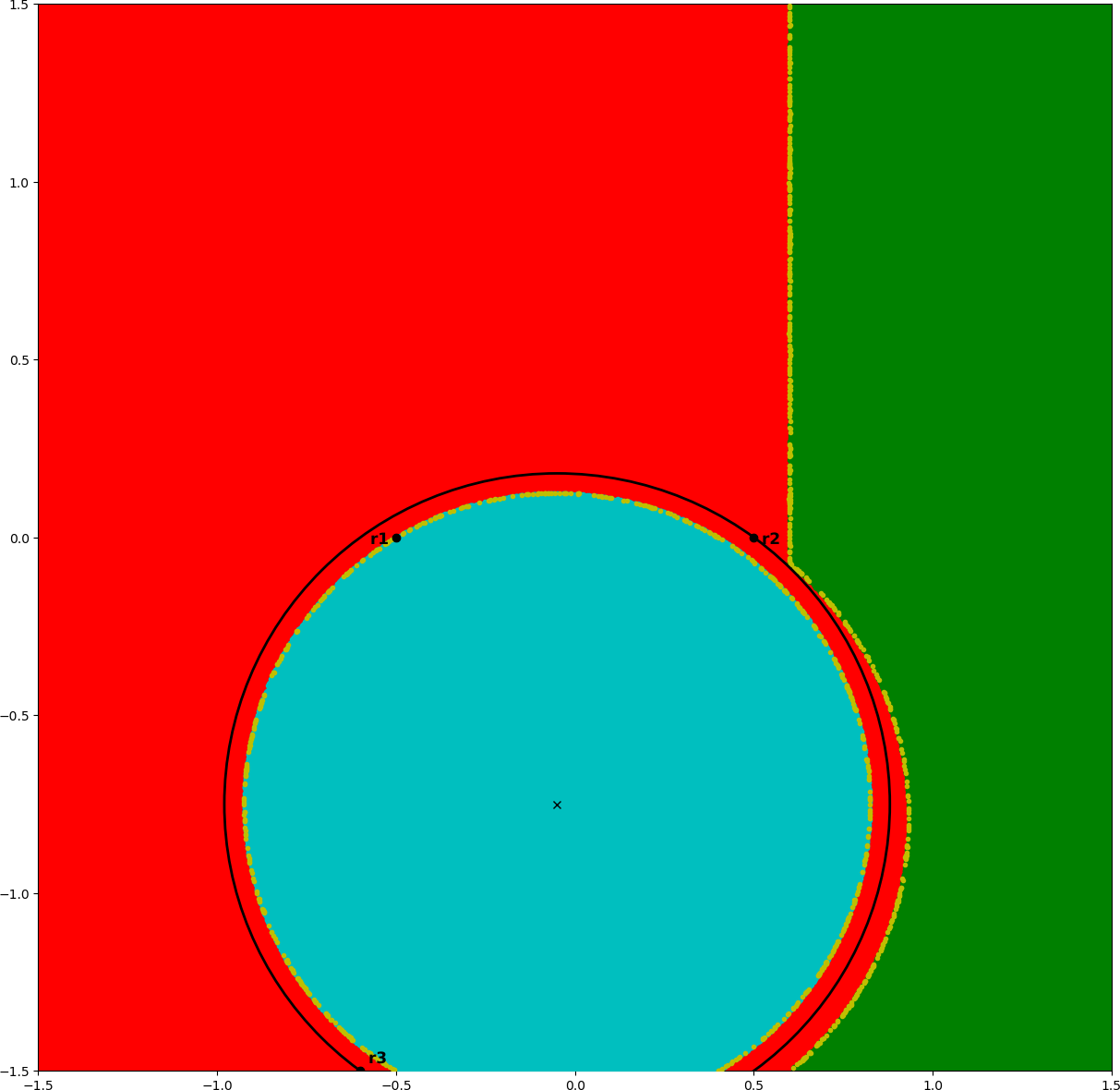}
    \end{subfigure}
    \begin{subfigure}[b]{0.49\linewidth}
        \centering
        \includegraphics[width=\textwidth]{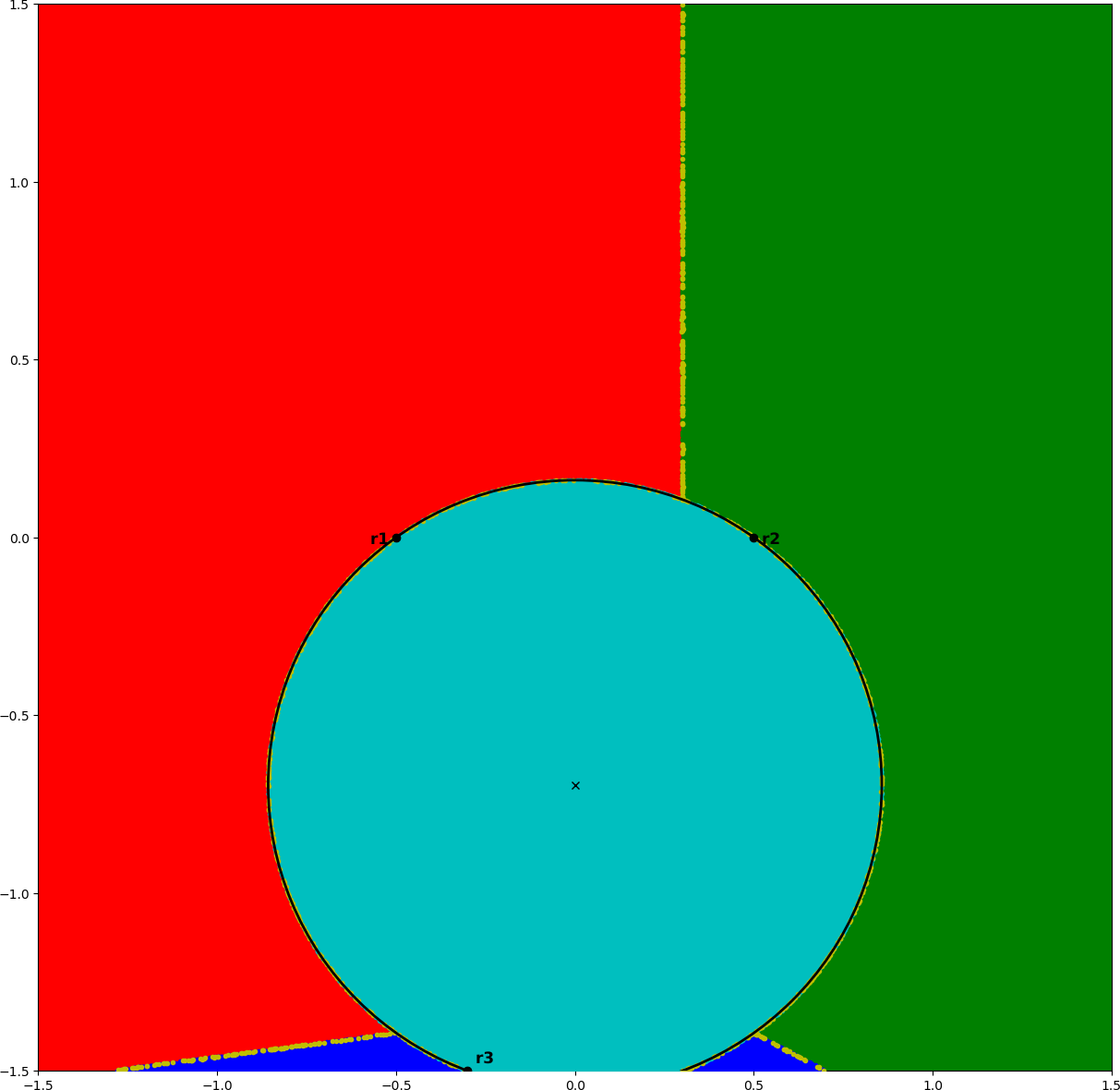}
    \end{subfigure}
    \hfill
    \begin{subfigure}[b]{0.49\linewidth}
        \centering
        \includegraphics[width=\textwidth]{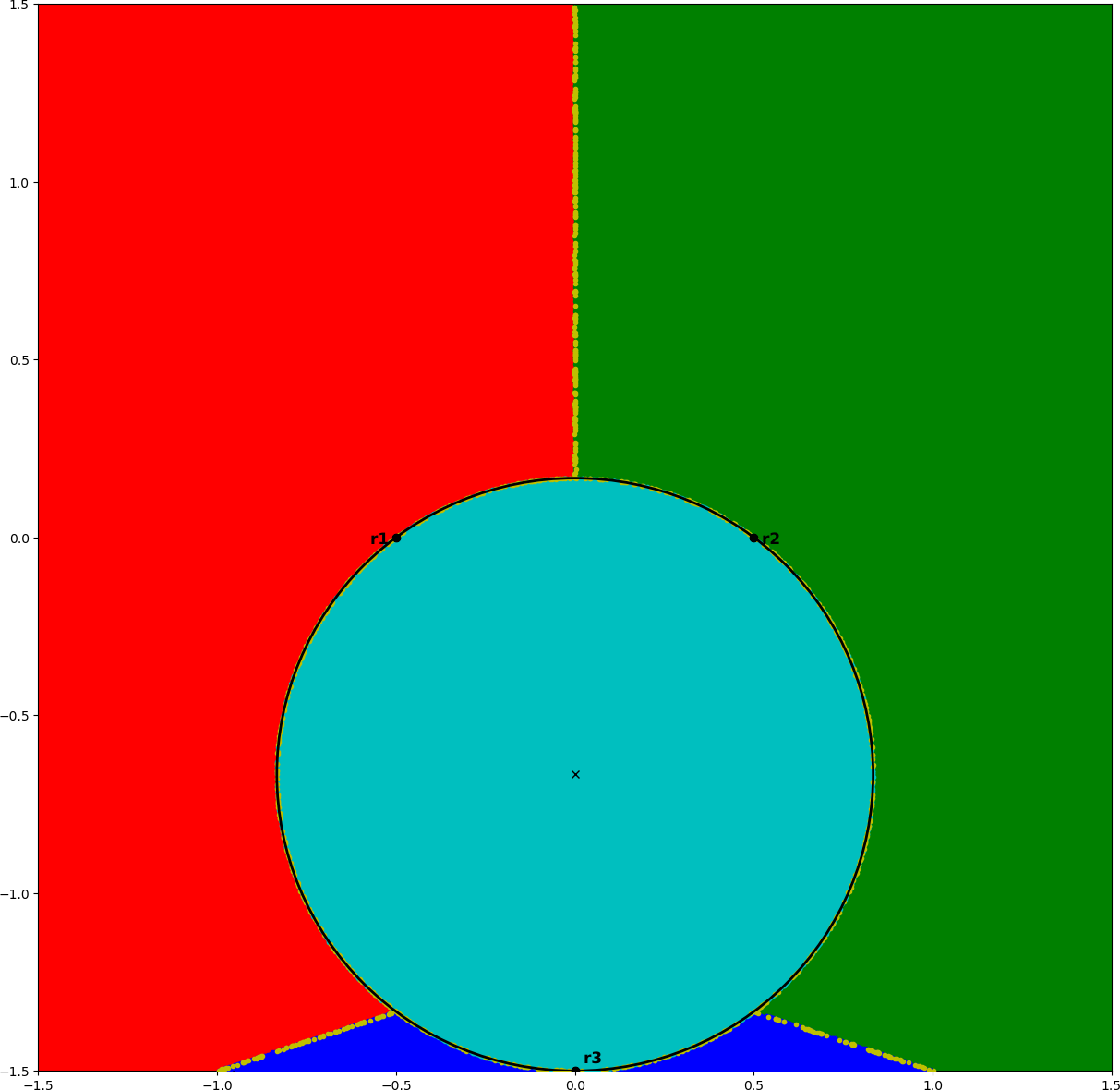}
    \end{subfigure}
    \caption{}
    \label{fig:lead4_0}
\end{figure}

\begin{figure}
    \centering
    \begin{subfigure}[b]{0.49\linewidth}
        \centering
        \includegraphics[width=\textwidth]{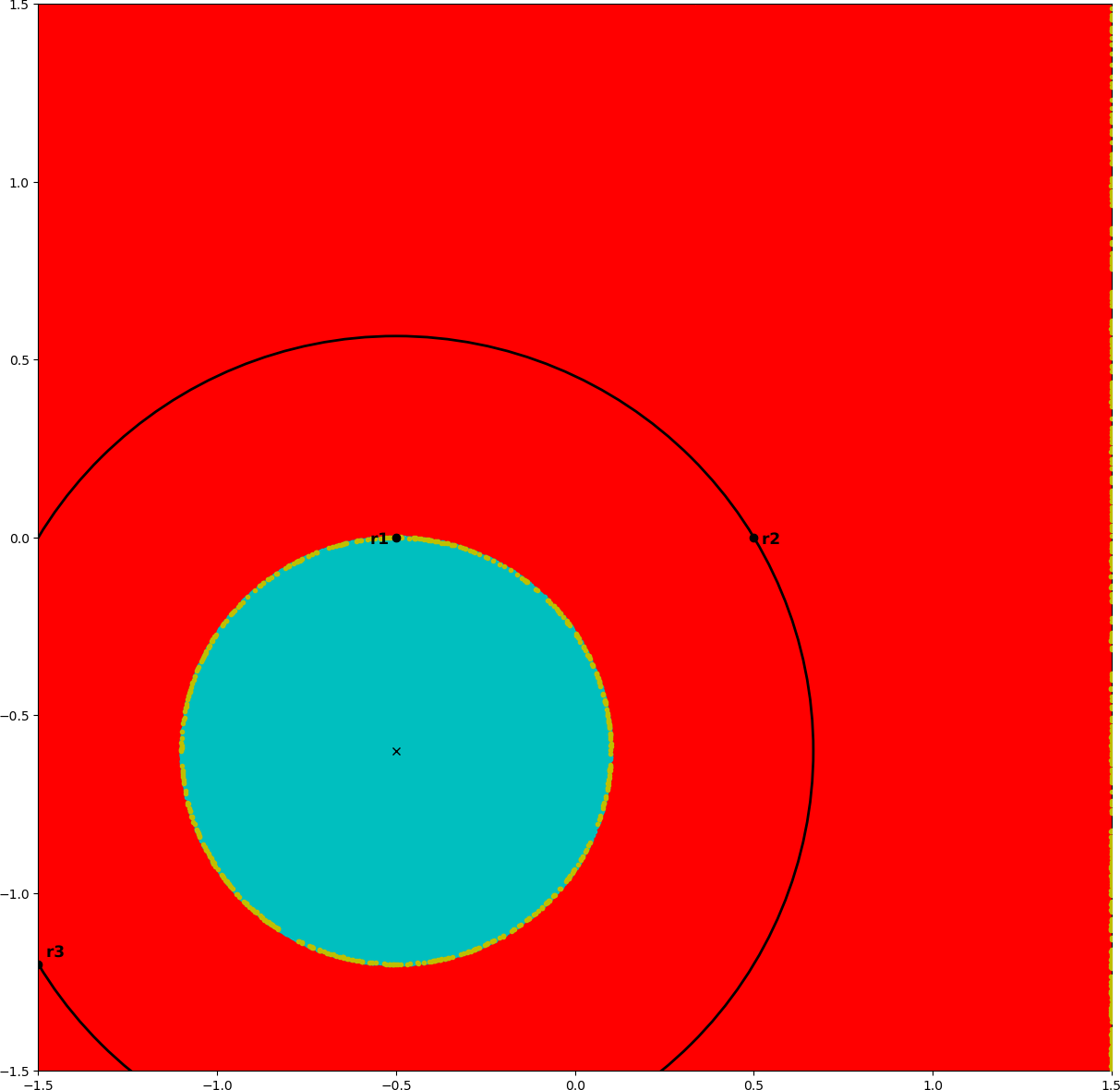}
    \end{subfigure}
    \hfill
    \begin{subfigure}[b]{0.49\linewidth}
        \centering
        \includegraphics[width=\textwidth]{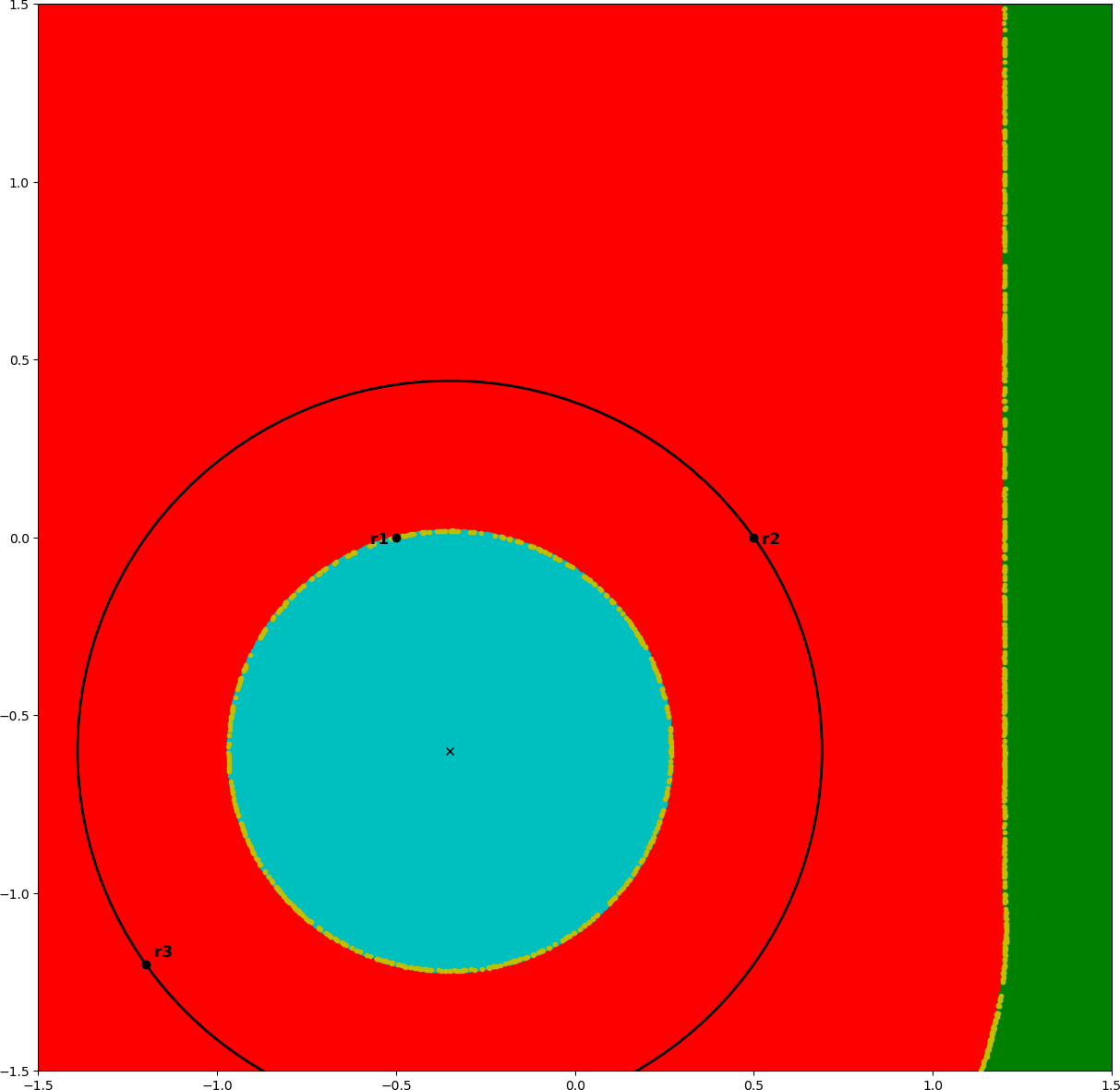}
    \end{subfigure}
    \begin{subfigure}[b]{0.49\linewidth}
        \centering
        \includegraphics[width=\textwidth]{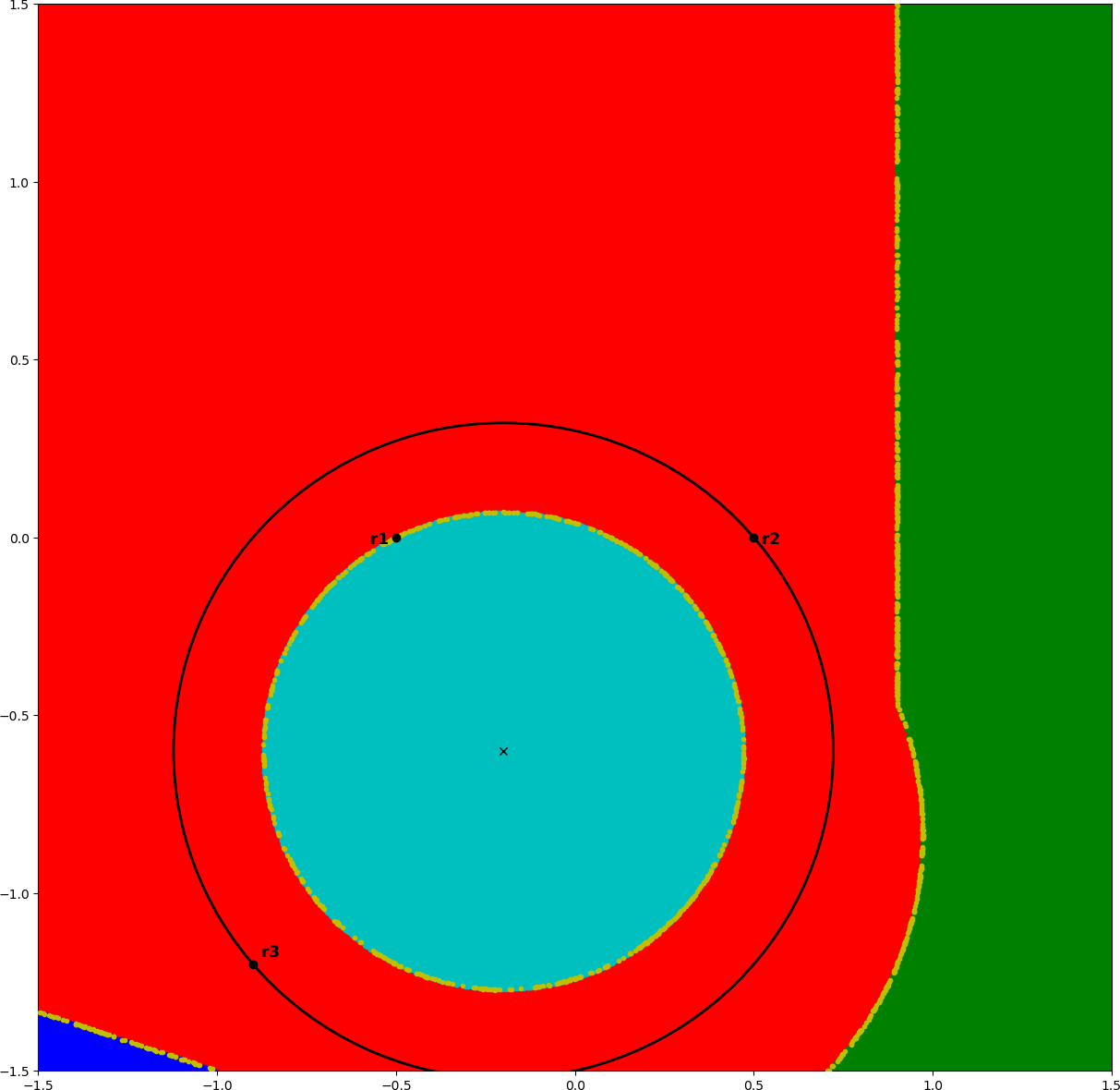}
    \end{subfigure}
    \hfill
    \begin{subfigure}[b]{0.49\linewidth}
        \centering
        \includegraphics[width=\textwidth]{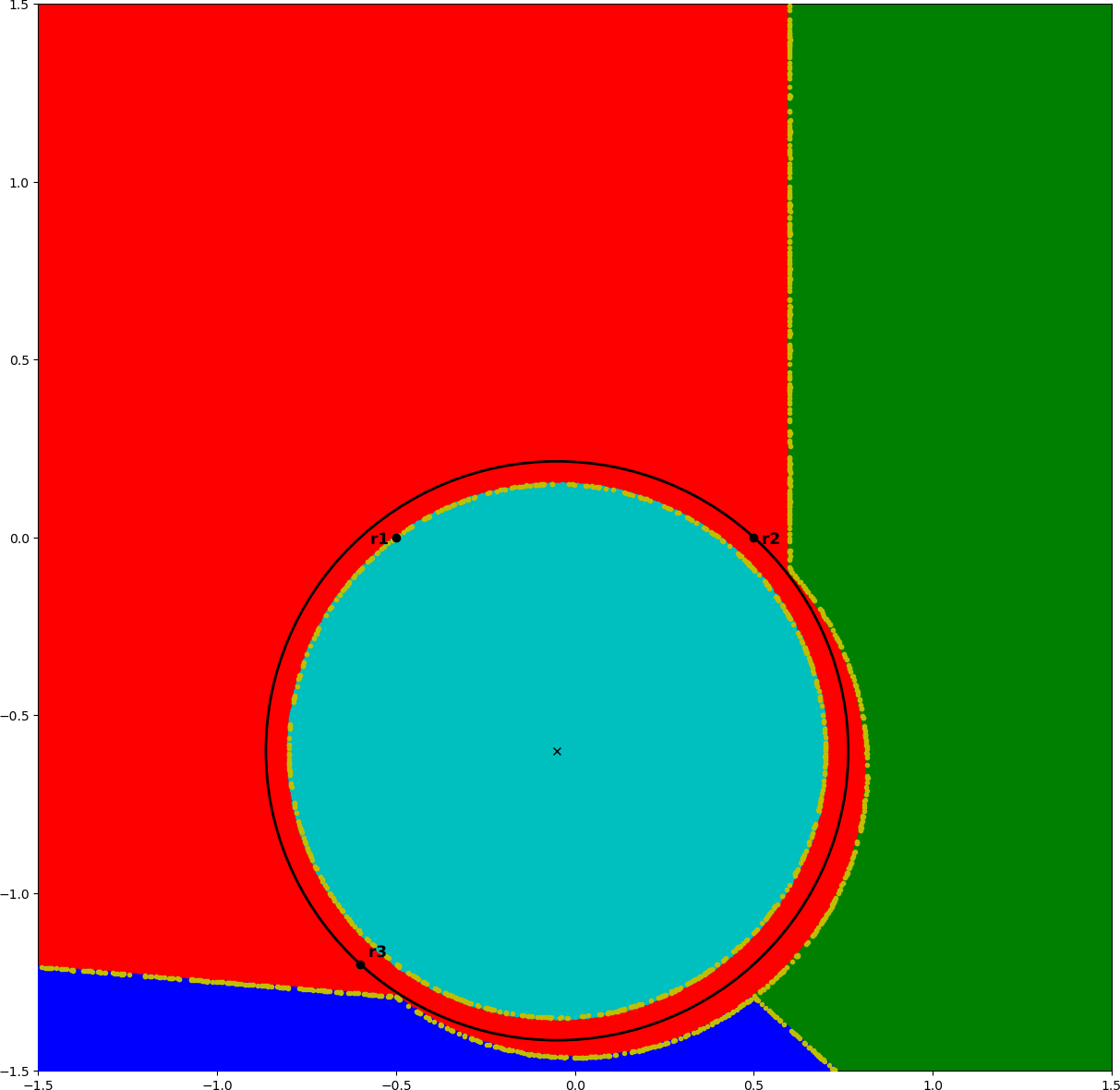}
    \end{subfigure}
    \begin{subfigure}[b]{0.49\linewidth}
        \centering
        \includegraphics[width=\textwidth]{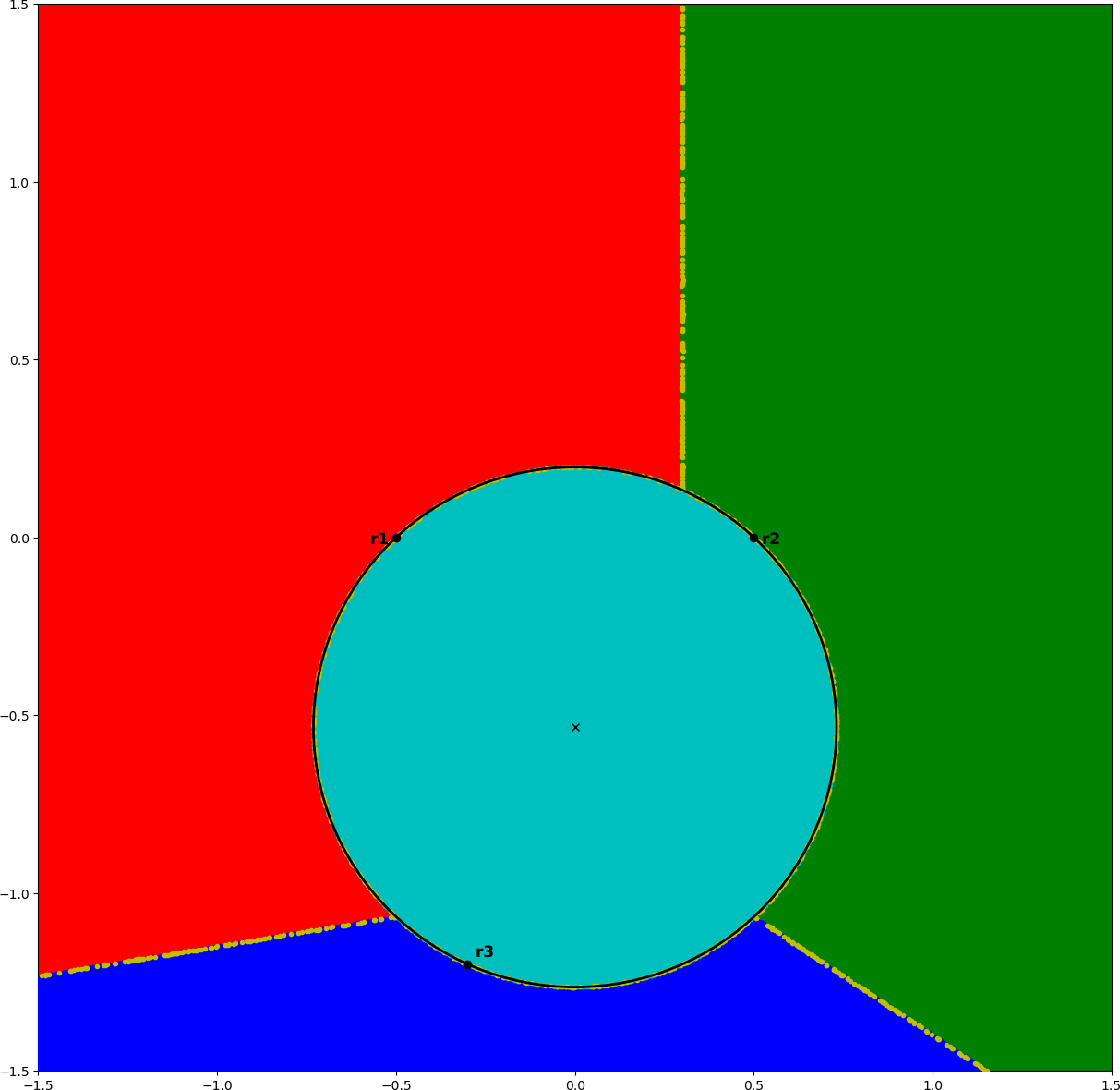}
    \end{subfigure}
    \hfill
    \begin{subfigure}[b]{0.49\linewidth}
        \centering
        \includegraphics[width=\textwidth]{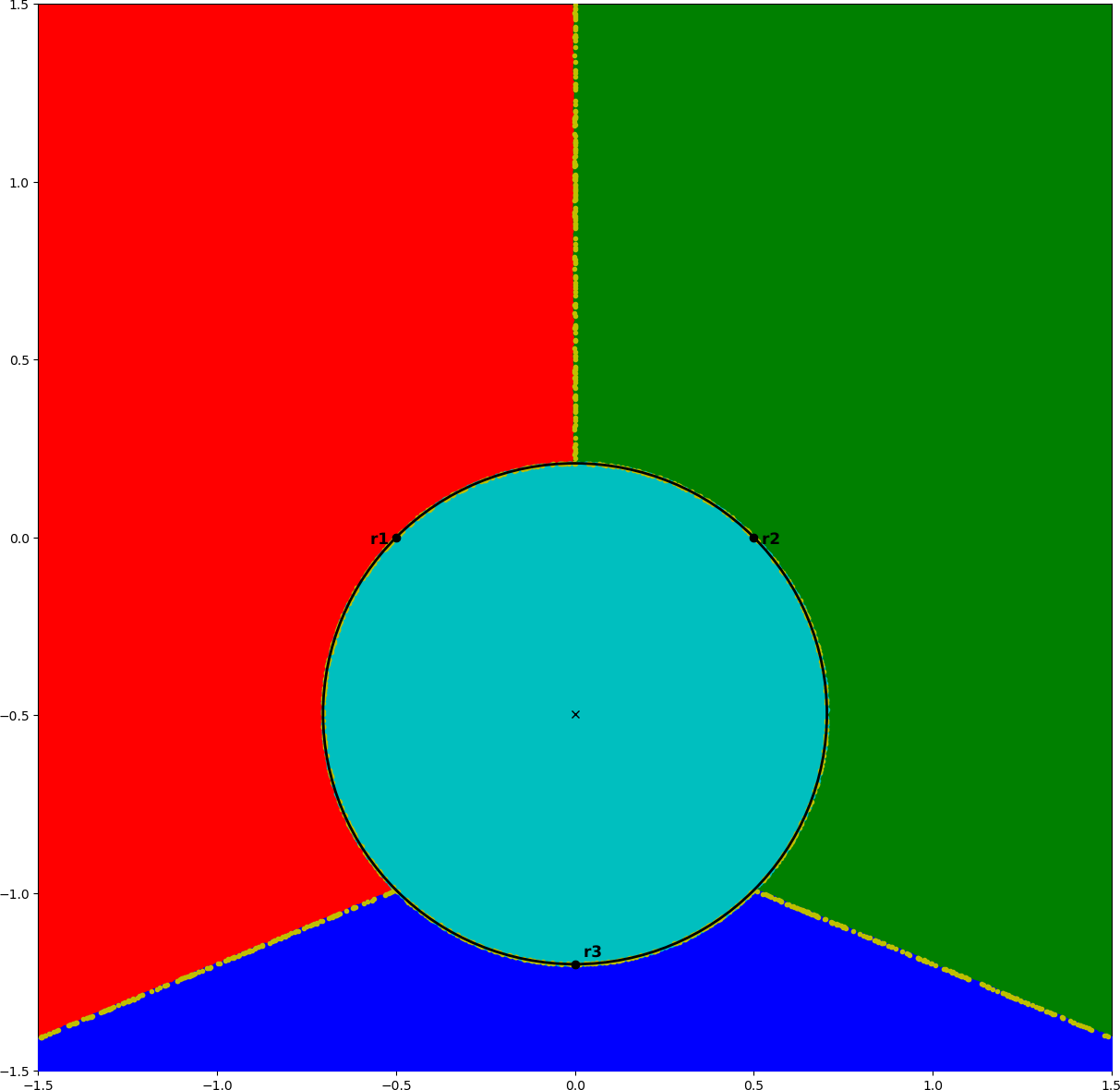}
    \end{subfigure}
    \caption{}
    \label{fig:lead4_1}
\end{figure}

\begin{figure}
    \centering
    \begin{subfigure}[b]{0.49\linewidth}
        \centering
        \includegraphics[width=\textwidth]{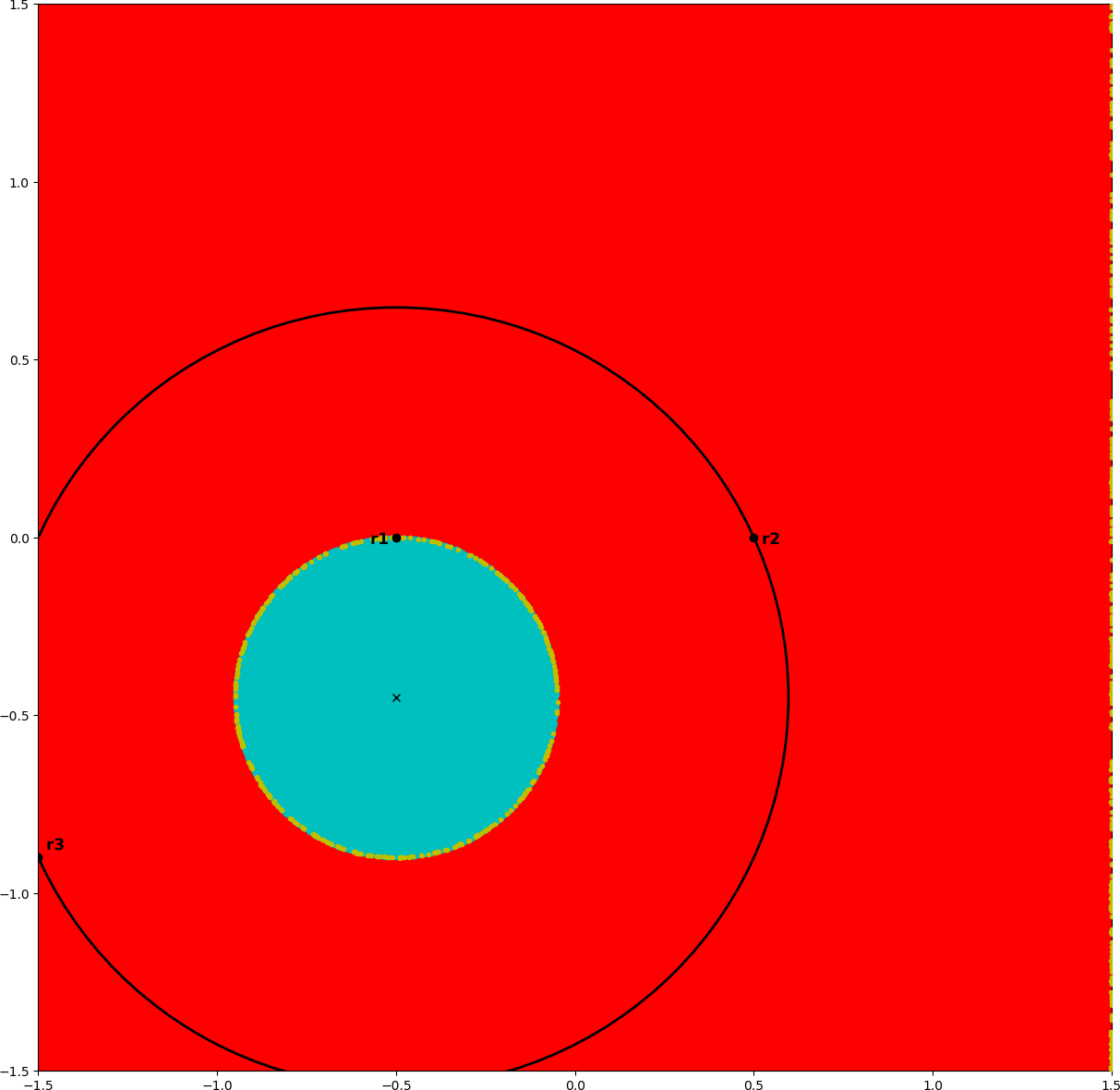}
    \end{subfigure}
    \hfill
    \begin{subfigure}[b]{0.49\linewidth}
        \centering
        \includegraphics[width=\textwidth]{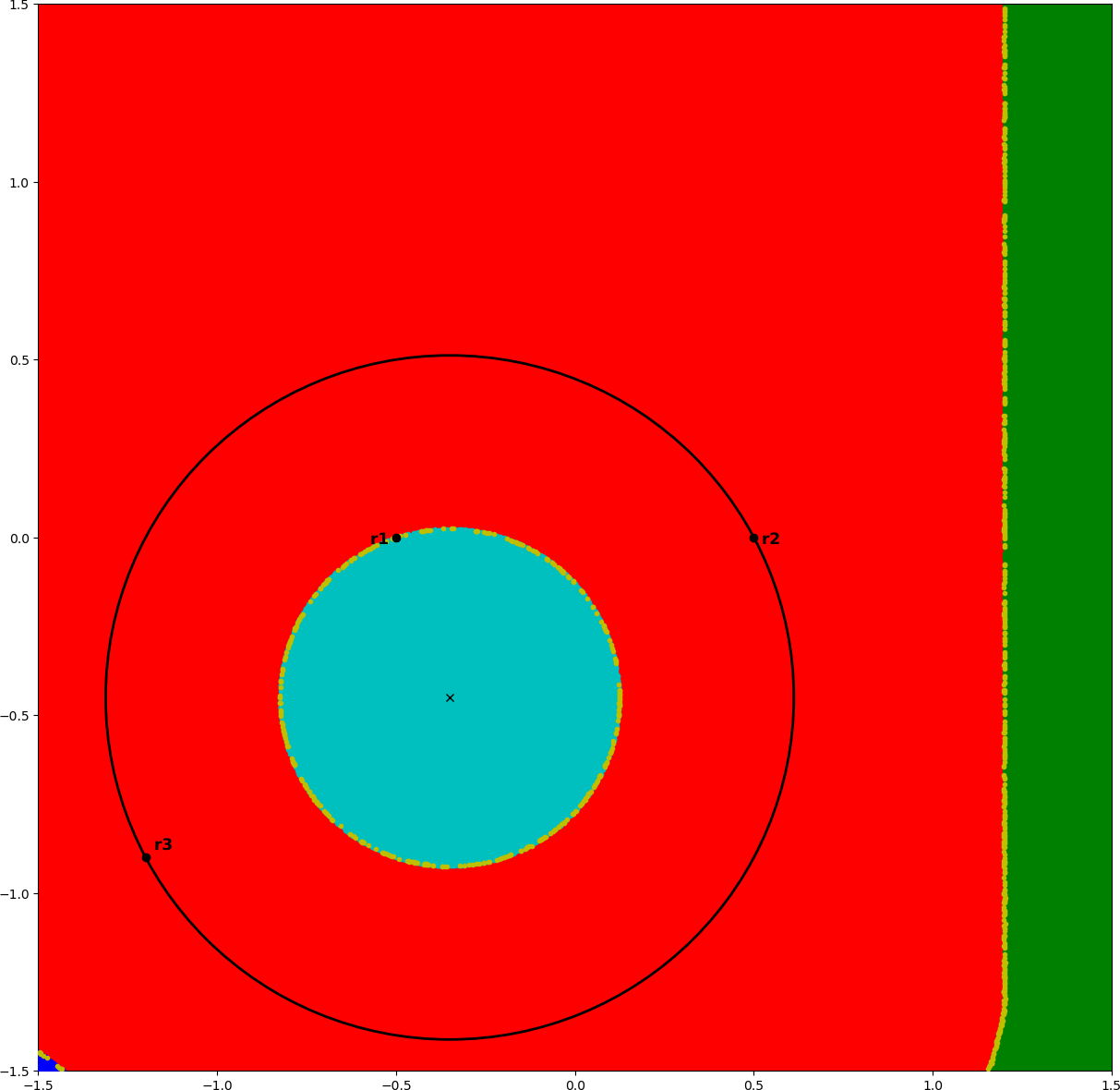}
    \end{subfigure}
    \begin{subfigure}[b]{0.49\linewidth}
        \centering
        \includegraphics[width=\textwidth]{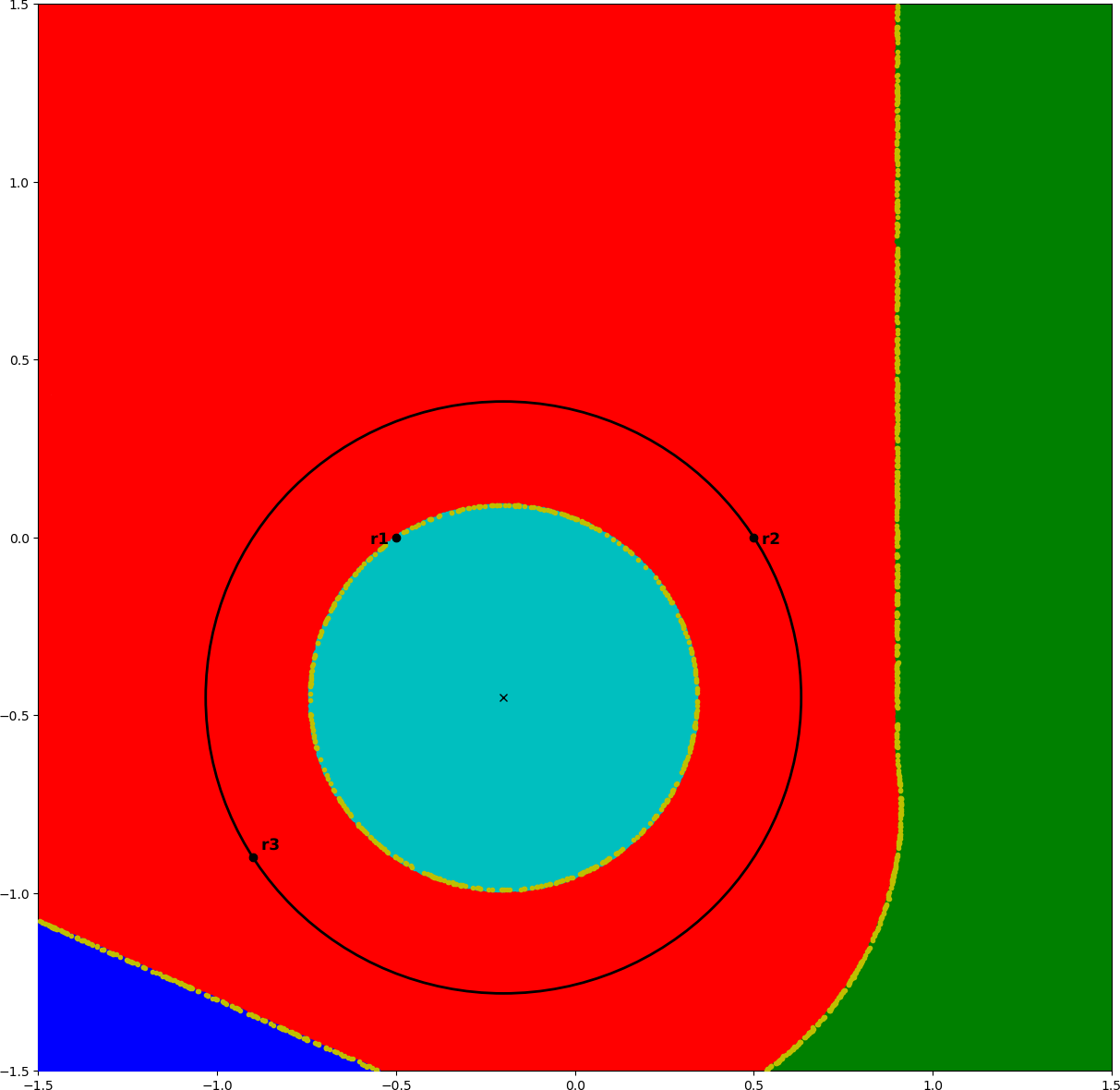}
    \end{subfigure}
    \hfill
    \begin{subfigure}[b]{0.49\linewidth}
        \centering
        \includegraphics[width=\textwidth]{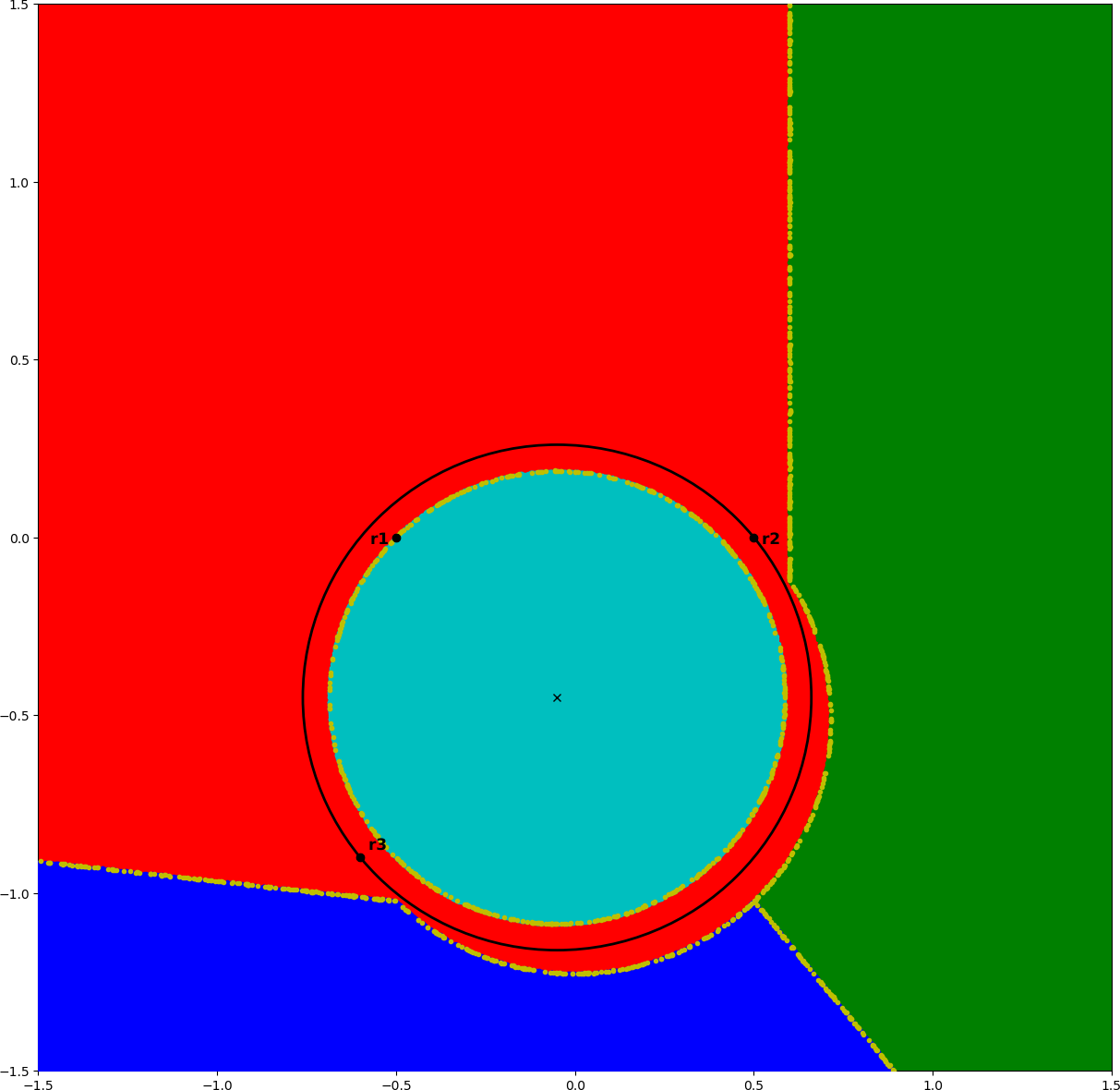}
    \end{subfigure}
    \begin{subfigure}[b]{0.49\linewidth}
        \centering
        \includegraphics[width=\textwidth]{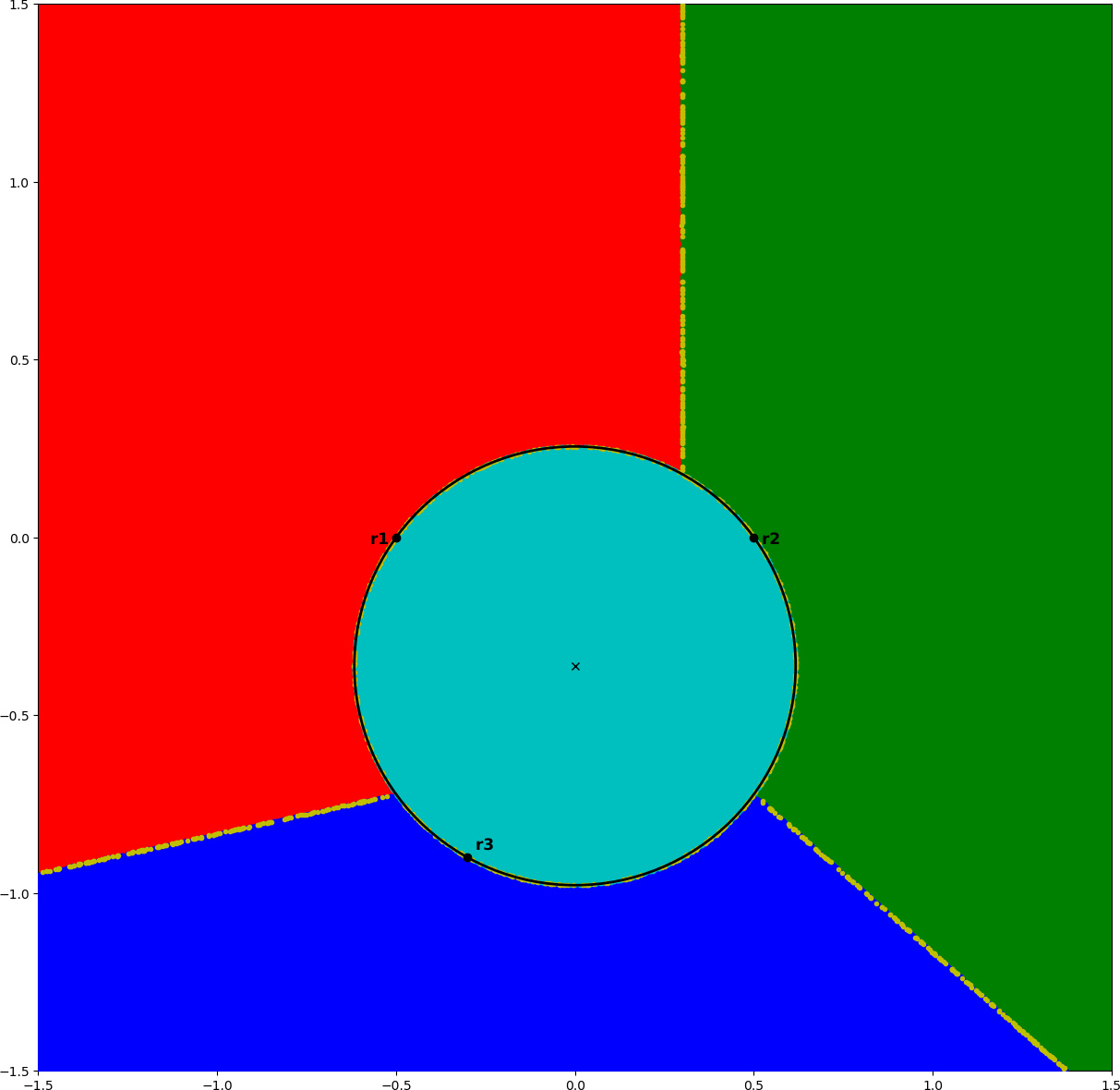}
    \end{subfigure}
    \hfill
    \begin{subfigure}[b]{0.49\linewidth}
        \centering
        \includegraphics[width=\textwidth]{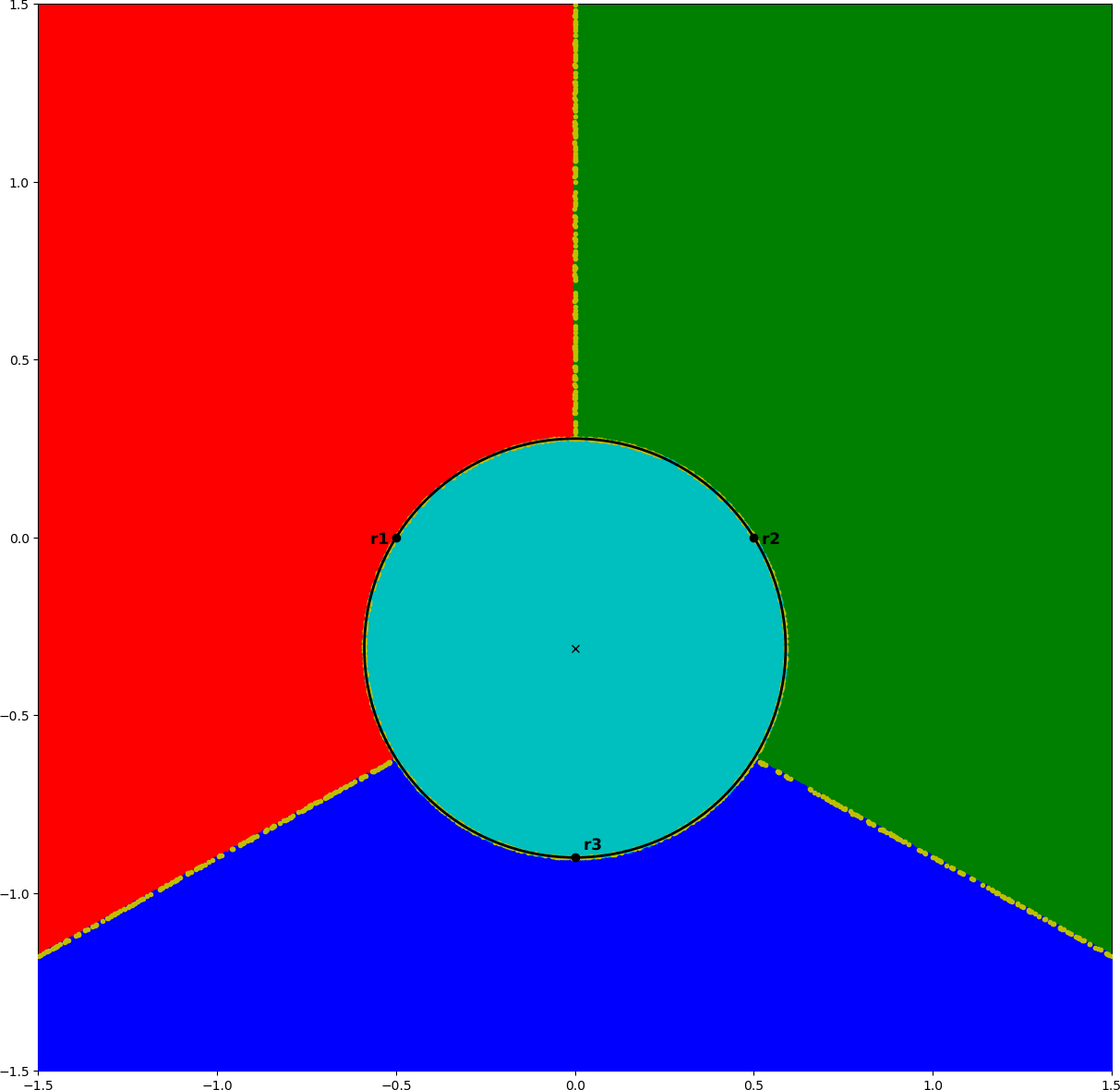}
    \end{subfigure}
    \caption{}
    \label{fig:lead4_2}
\end{figure}

\begin{figure}
    \centering
    \begin{subfigure}[b]{0.49\linewidth}
        \centering
        \includegraphics[width=\textwidth]{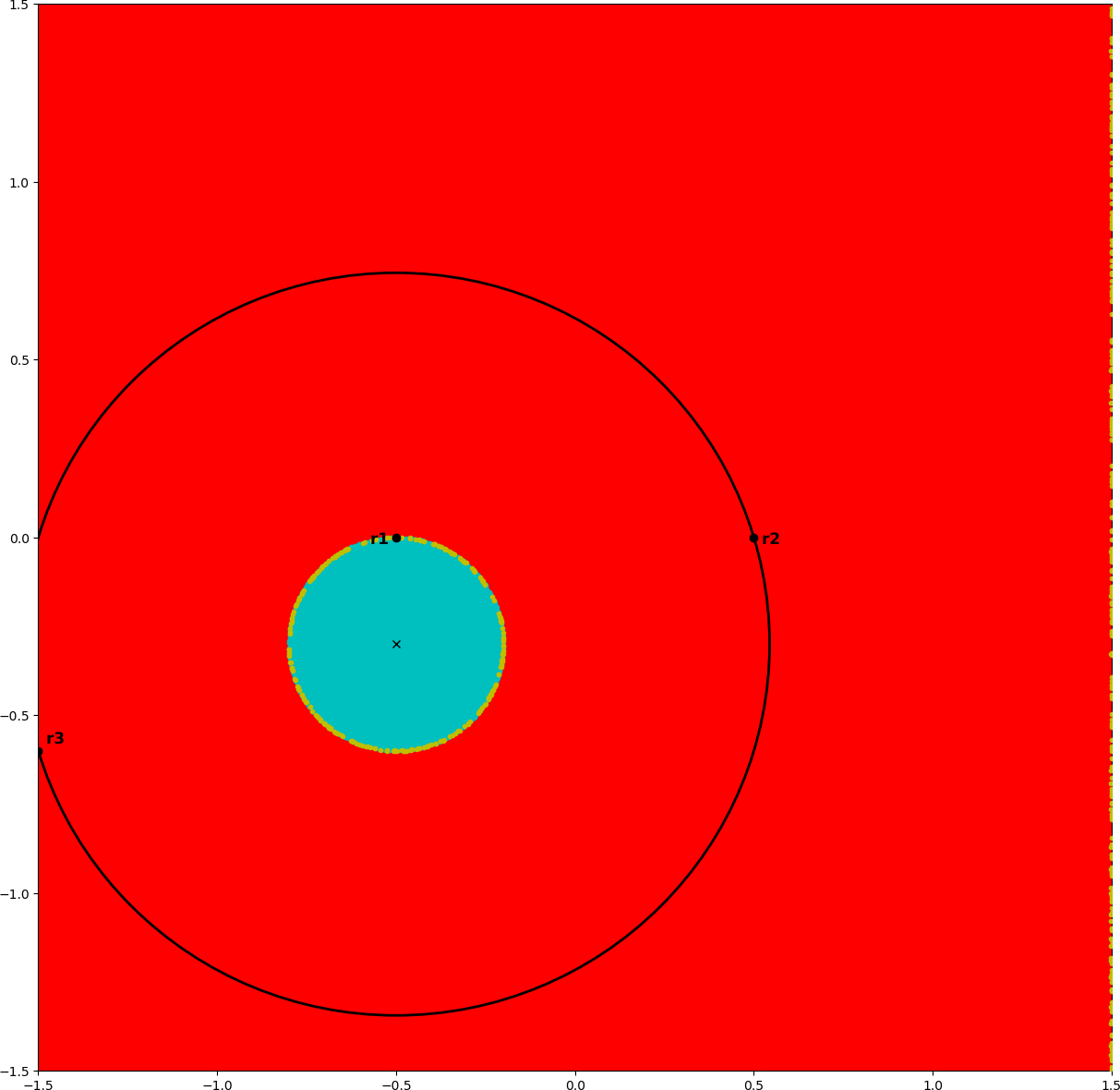}
    \end{subfigure}
    \hfill
    \begin{subfigure}[b]{0.49\linewidth}
        \centering
        \includegraphics[width=\textwidth]{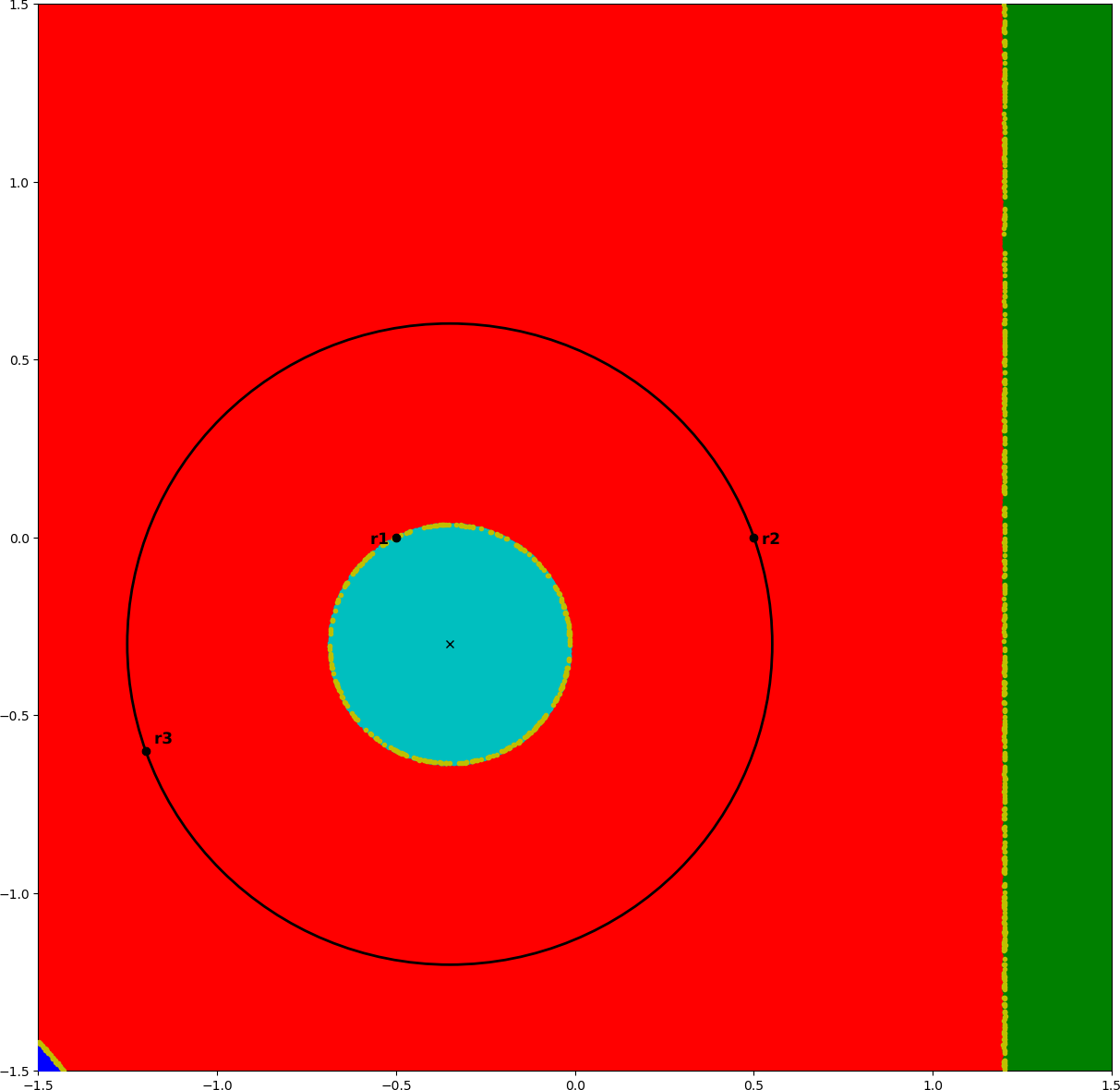}
    \end{subfigure}
    \begin{subfigure}[b]{0.49\linewidth}
        \centering
        \includegraphics[width=\textwidth]{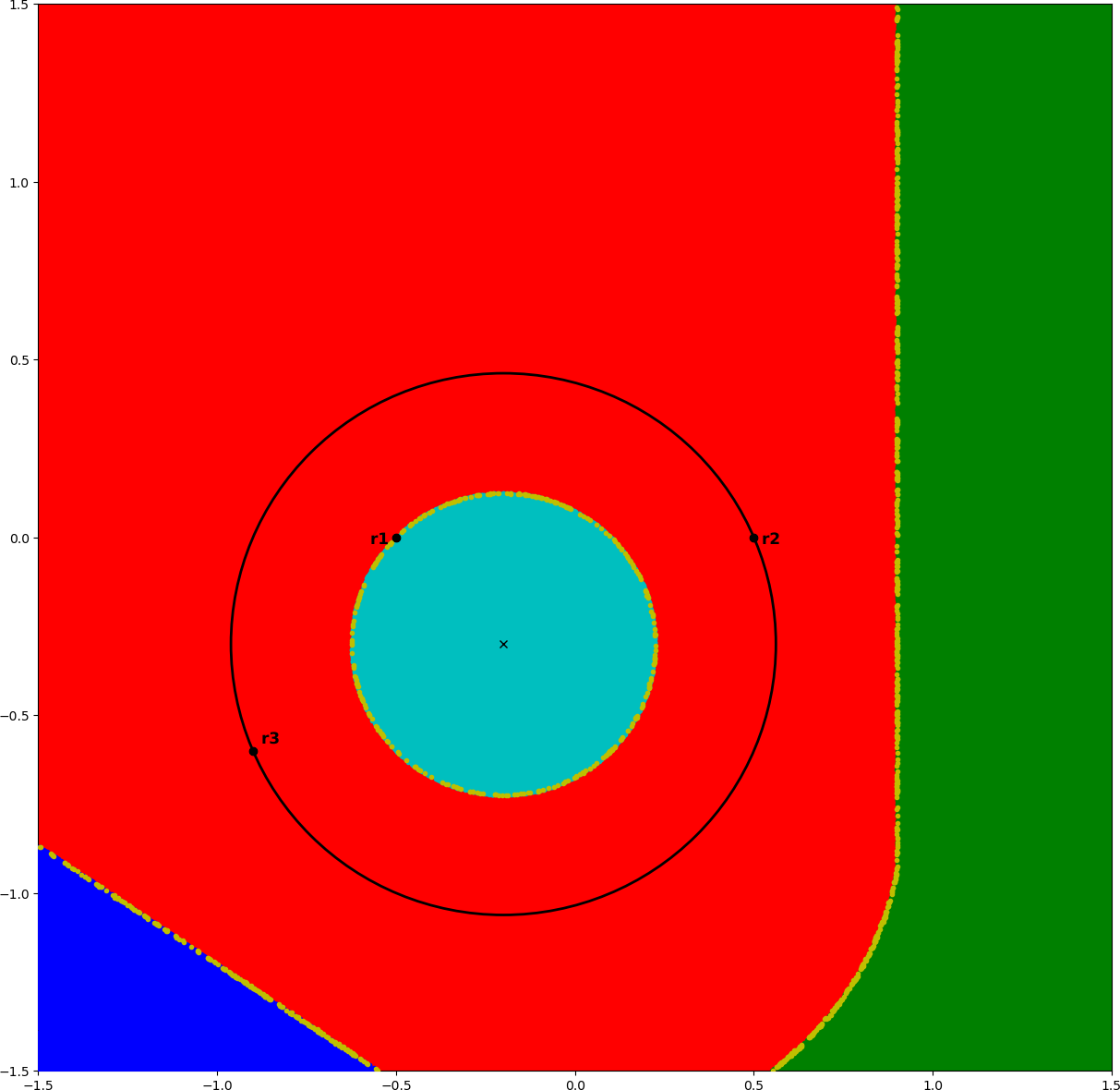}
    \end{subfigure}
    \hfill
    \begin{subfigure}[b]{0.49\linewidth}
        \centering
        \includegraphics[width=\textwidth]{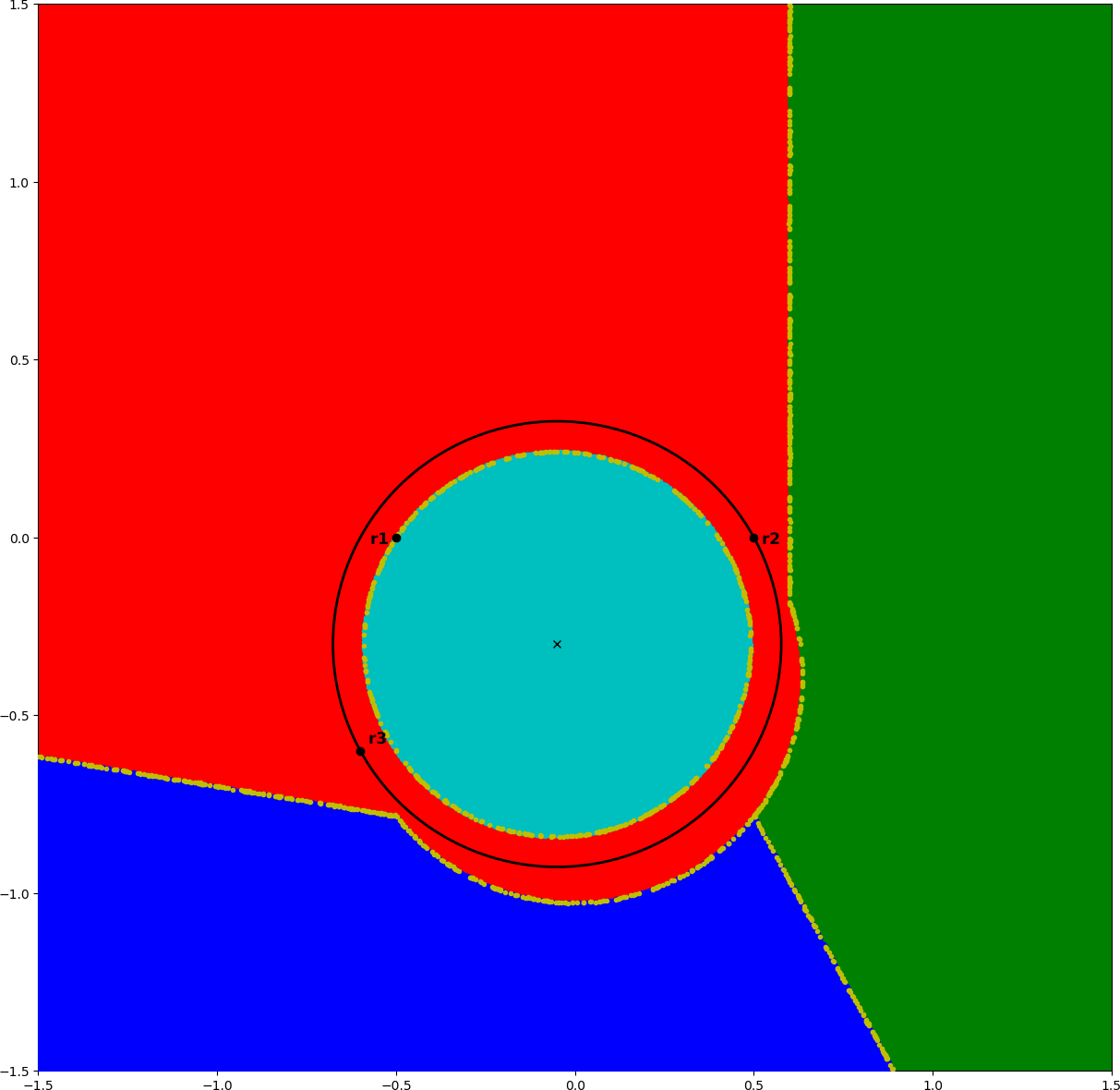}
    \end{subfigure}
    \begin{subfigure}[b]{0.49\linewidth}
        \centering
        \includegraphics[width=\textwidth]{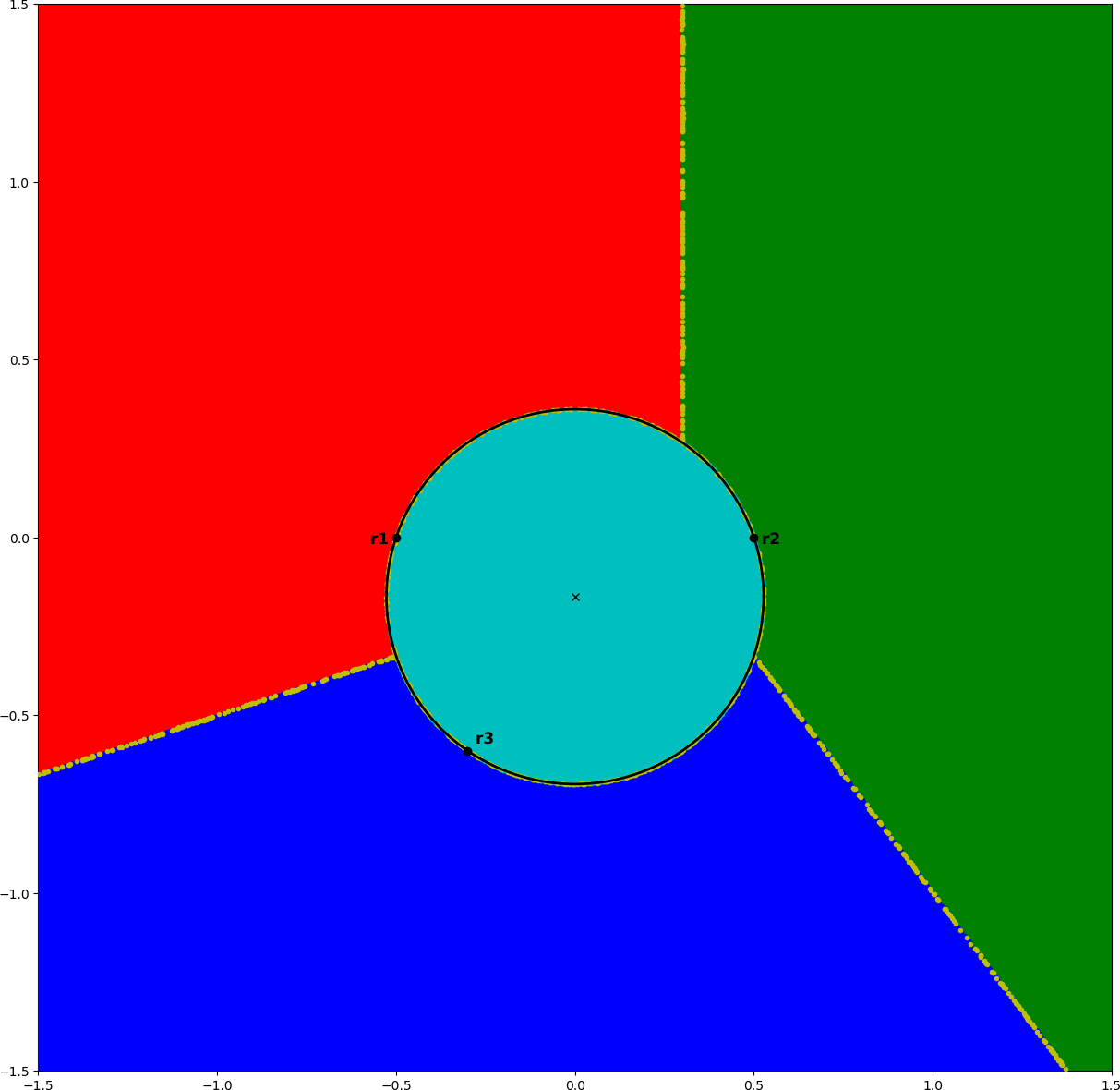}
    \end{subfigure}
    \hfill
    \begin{subfigure}[b]{0.49\linewidth}
        \centering
        \includegraphics[width=\textwidth]{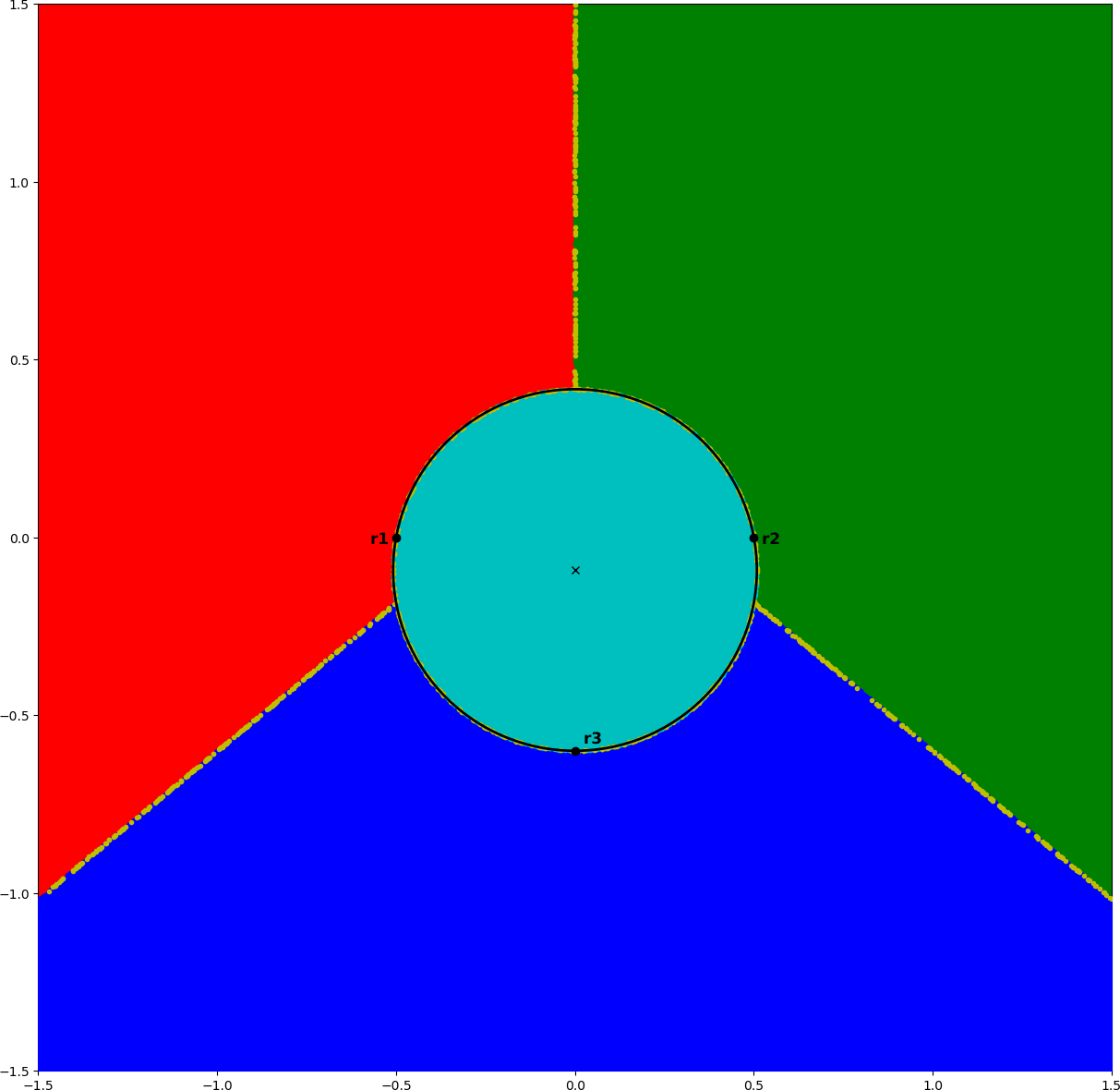}
    \end{subfigure}
    \caption{}
    \label{fig:lead4_3}
\end{figure}

\begin{figure}
    \centering
    \begin{subfigure}[b]{0.49\linewidth}
        \centering
        \includegraphics[width=\textwidth]{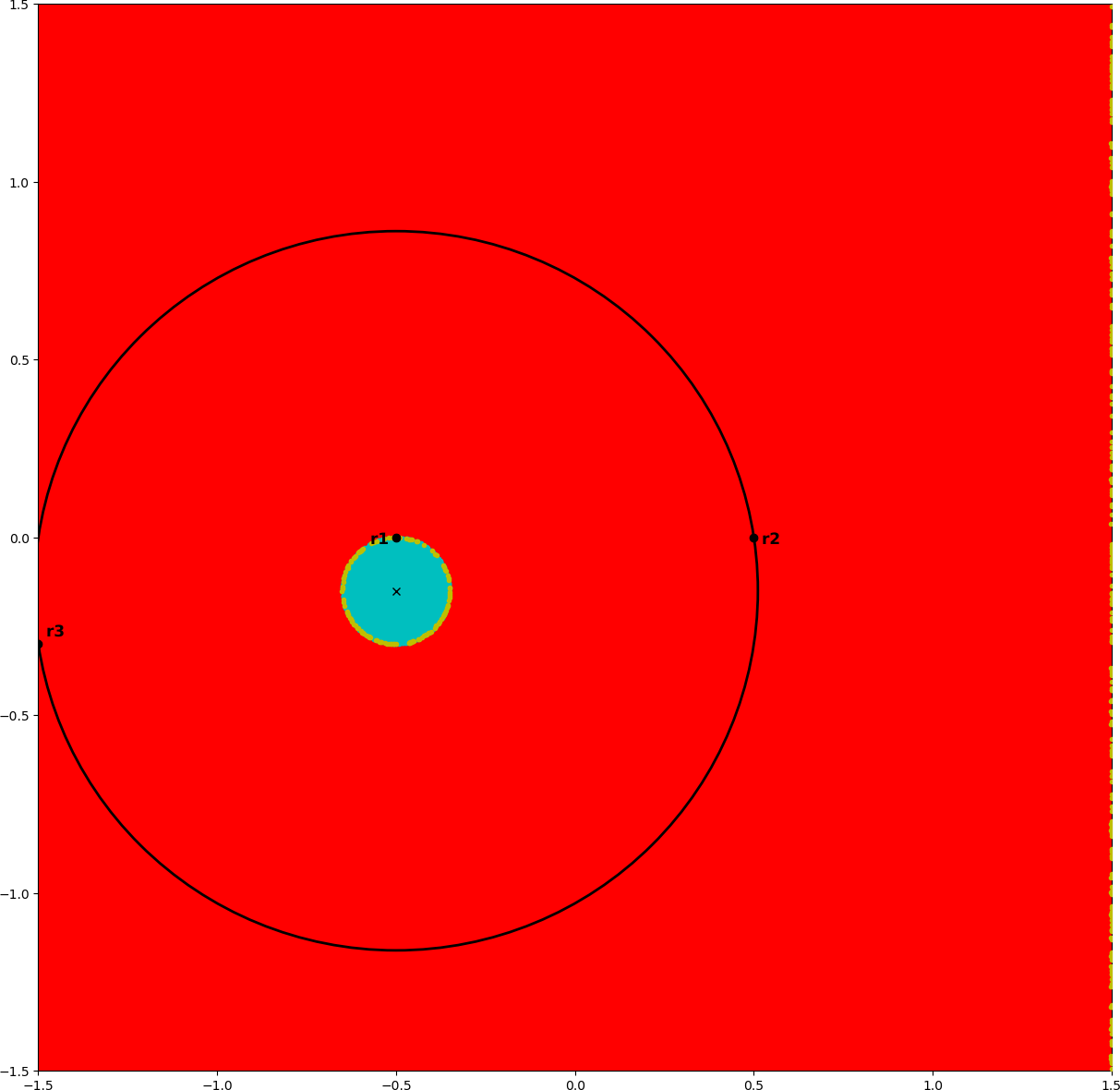}
    \end{subfigure}
    \hfill
    \begin{subfigure}[b]{0.49\linewidth}
        \centering
        \includegraphics[width=\textwidth]{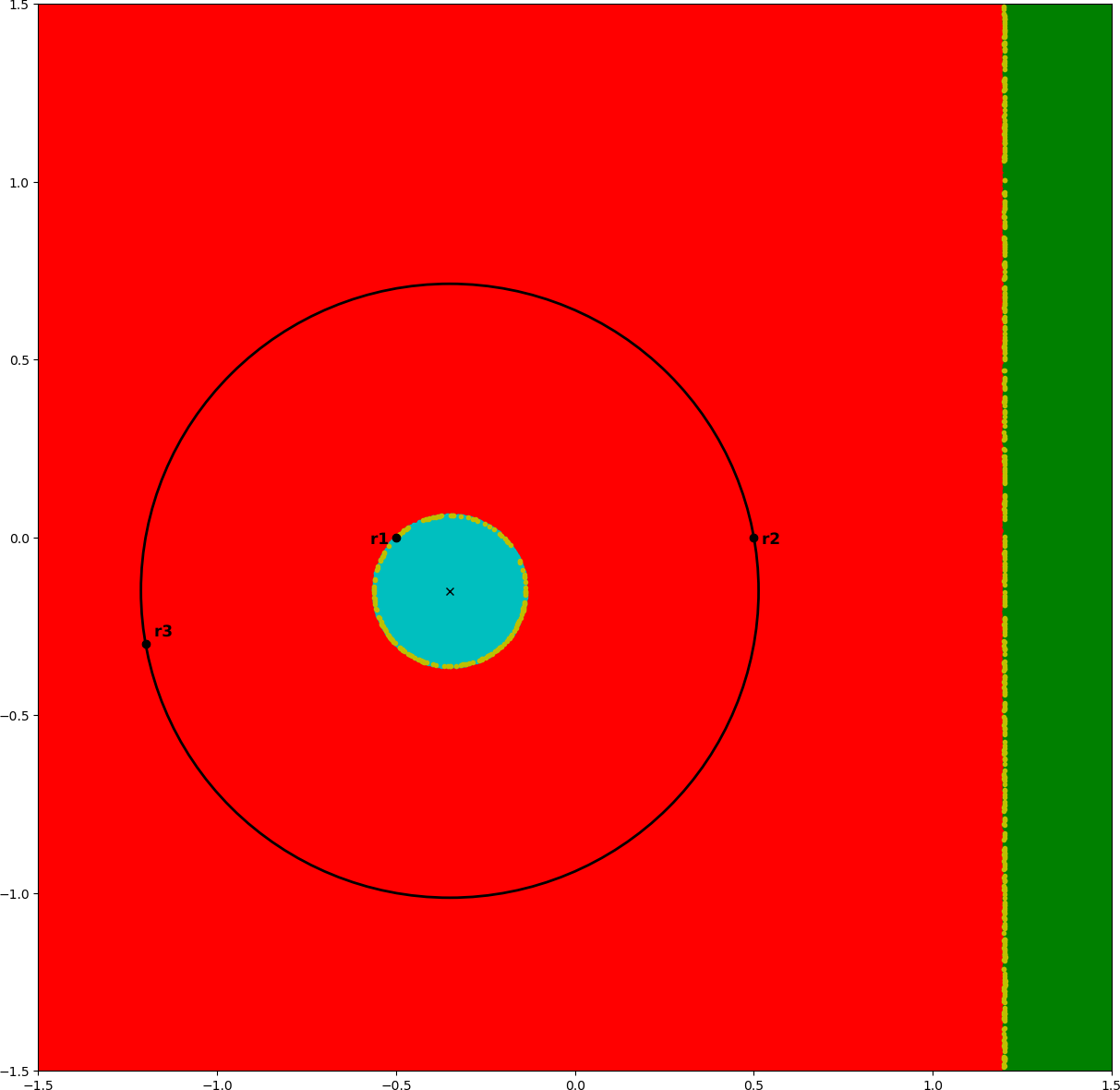}
    \end{subfigure}
    \begin{subfigure}[b]{0.49\linewidth}
        \centering
        \includegraphics[width=\textwidth]{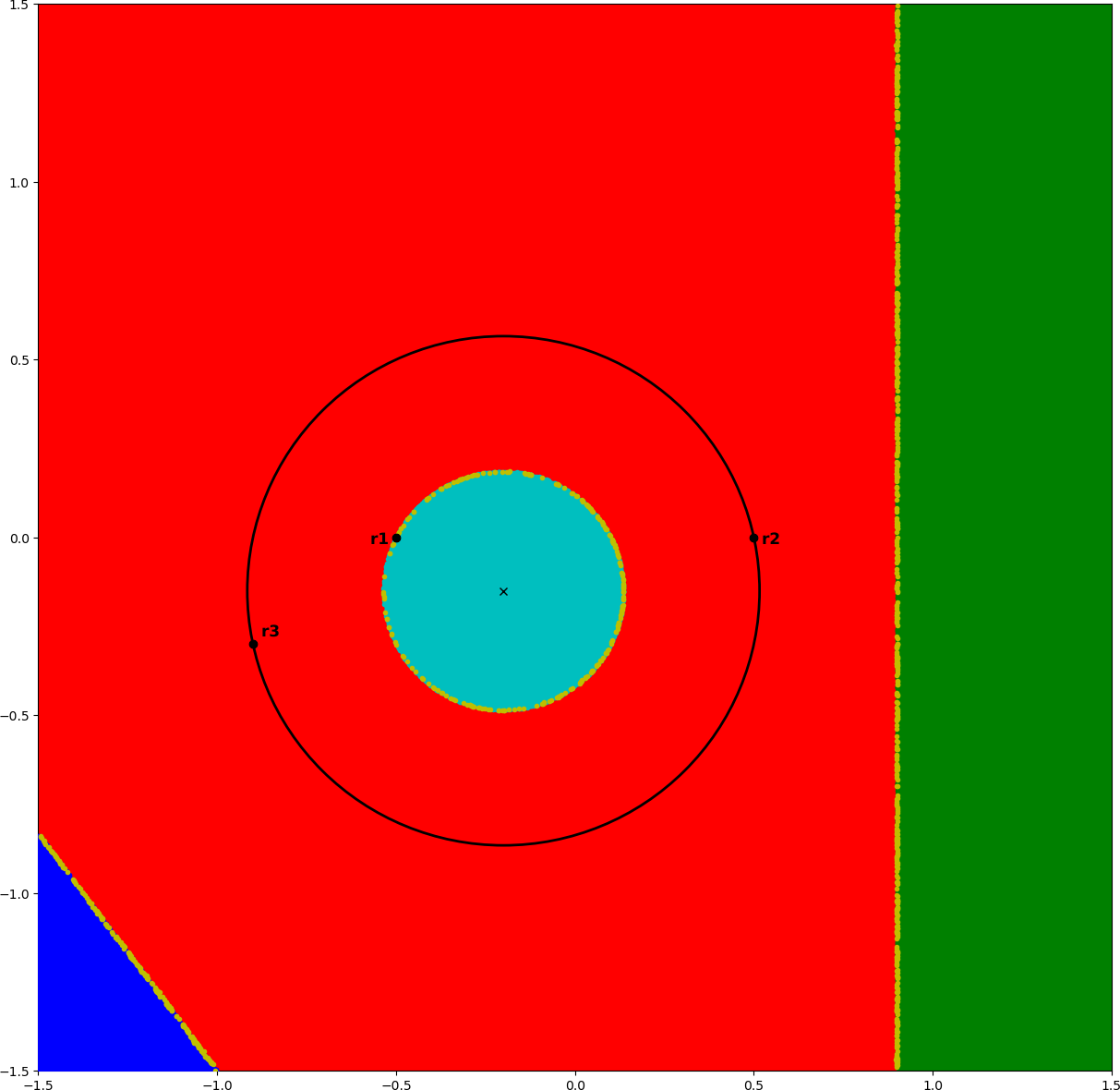}
    \end{subfigure}
    \hfill
    \begin{subfigure}[b]{0.49\linewidth}
        \centering
        \includegraphics[width=\textwidth]{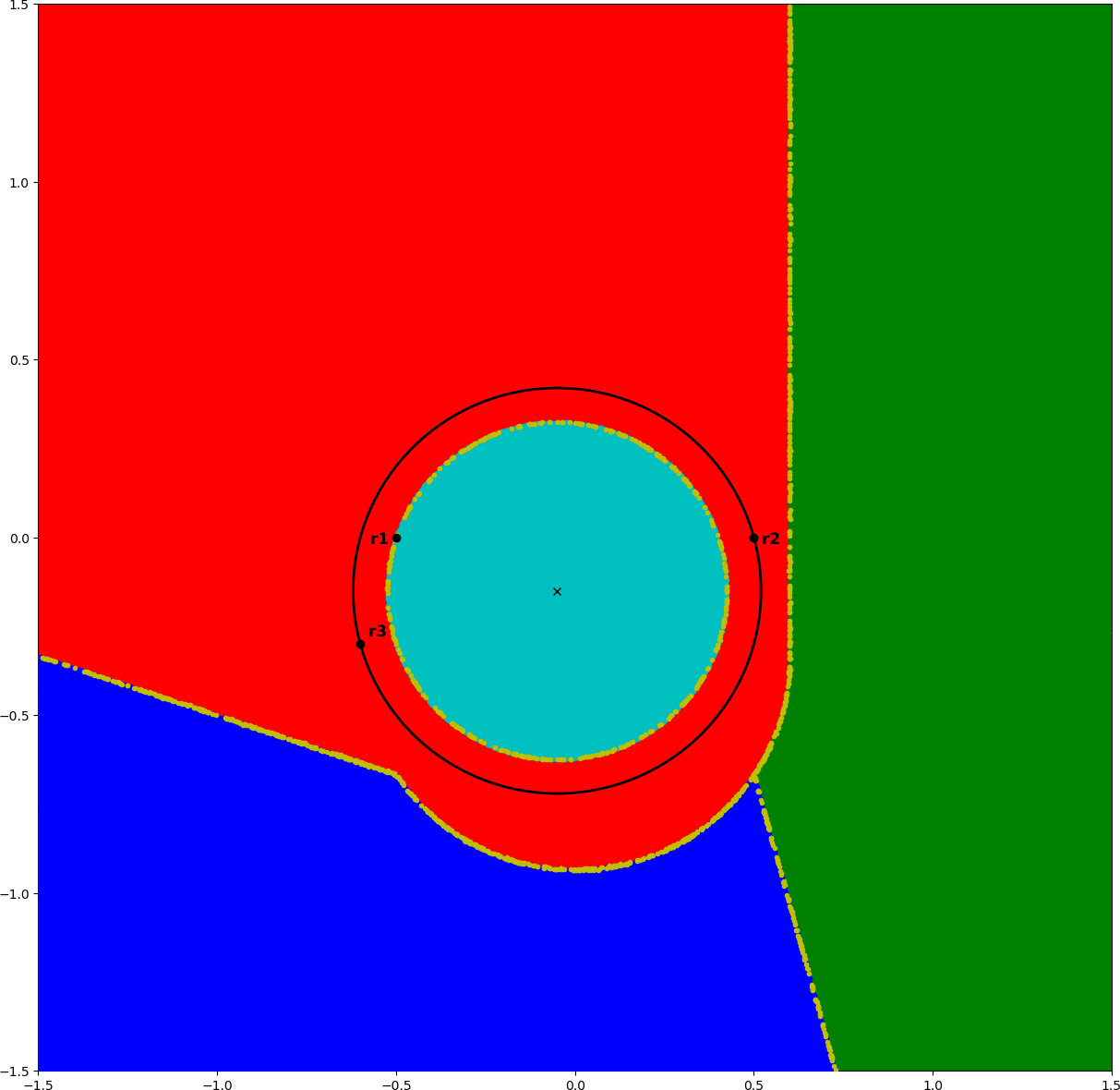}
    \end{subfigure}
    \begin{subfigure}[b]{0.49\linewidth}
        \centering
        \includegraphics[width=\textwidth]{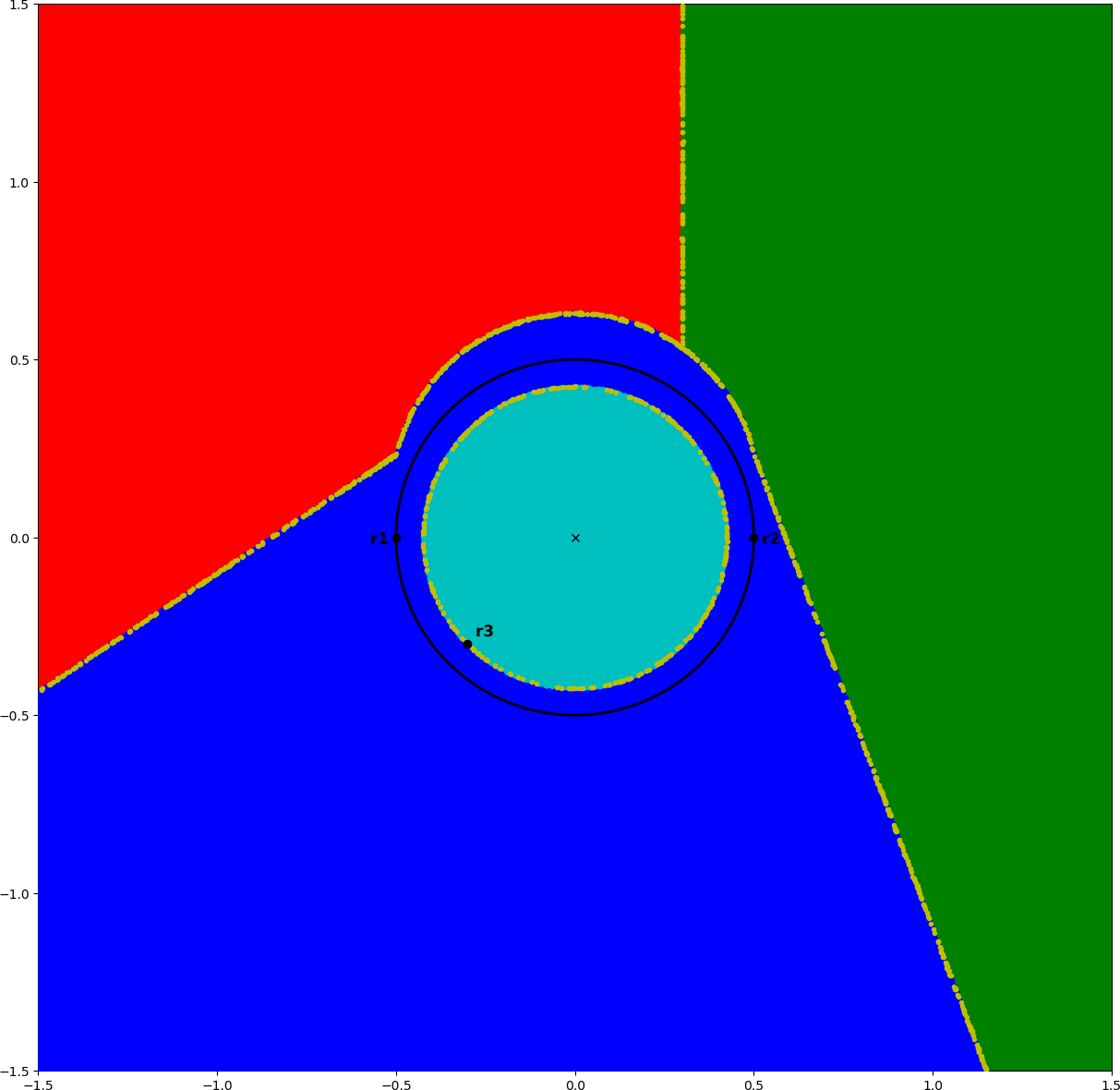}
    \end{subfigure}
    \hfill
    \begin{subfigure}[b]{0.49\linewidth}
        \centering
        \includegraphics[width=\textwidth]{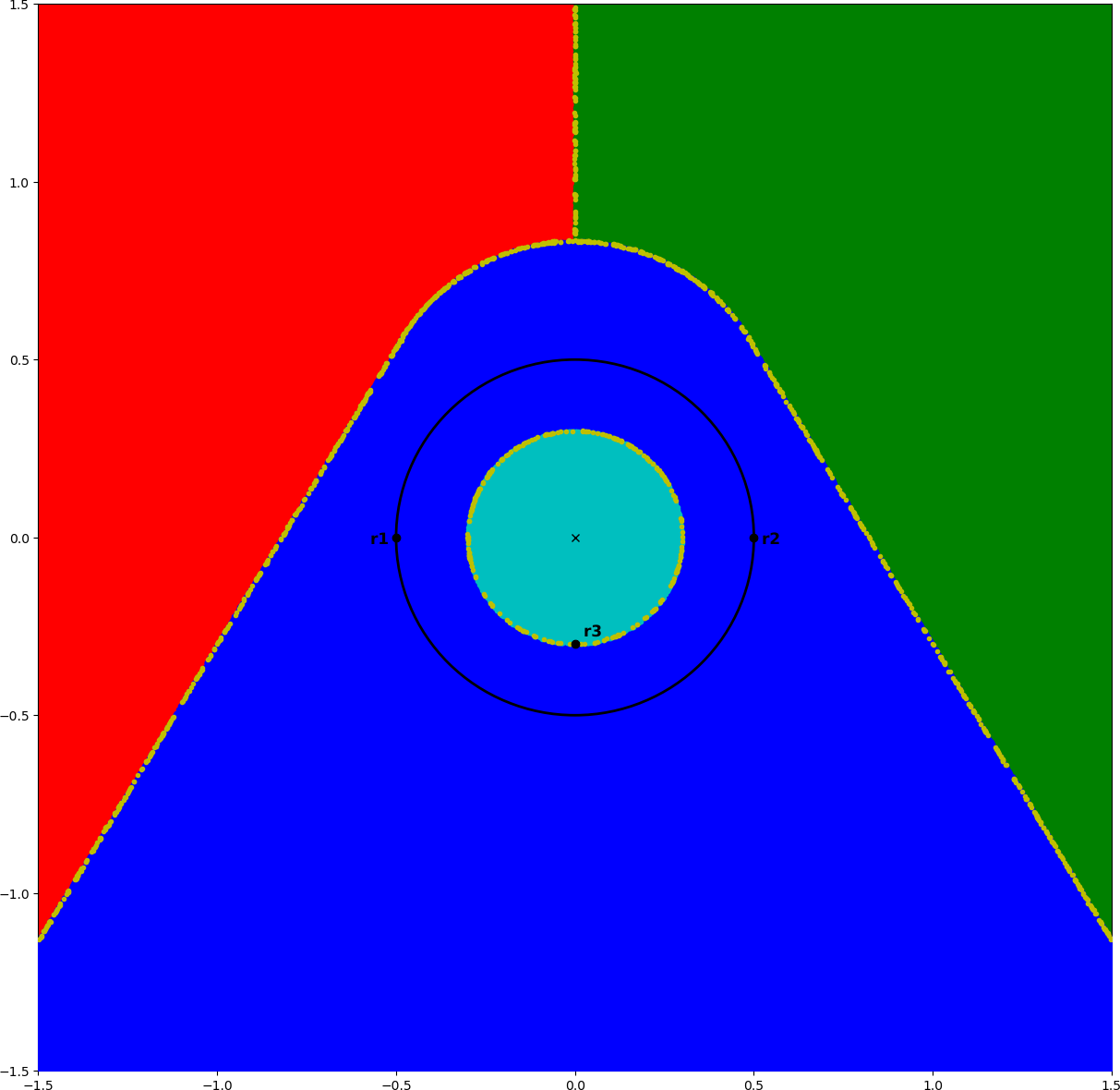}
    \end{subfigure}
    \caption{}
    \label{fig:lead4_4}
\end{figure}

\begin{figure}
    \centering
    \begin{subfigure}[b]{0.49\linewidth}
        \centering
        \includegraphics[width=\textwidth]{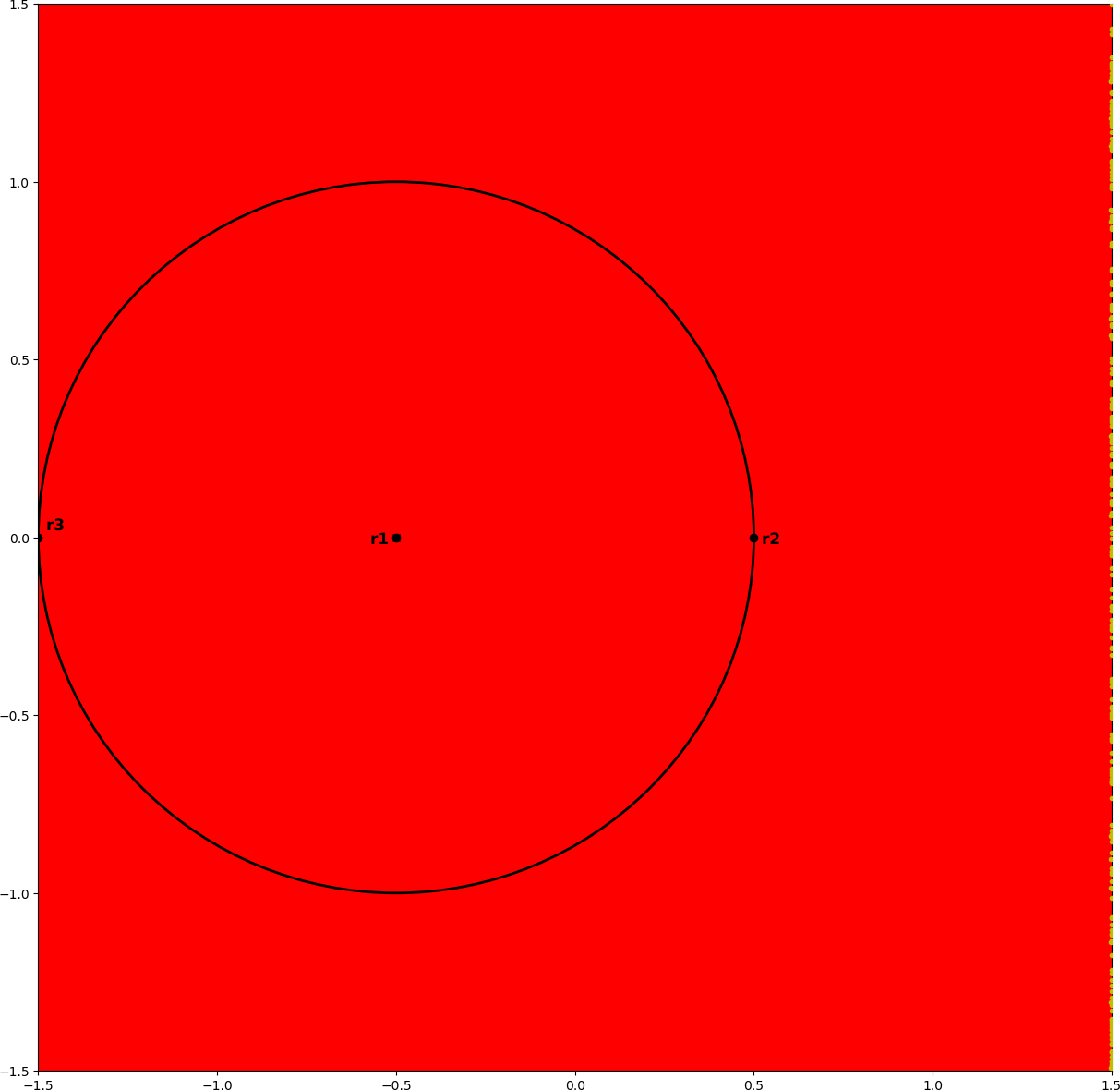}
    \end{subfigure}
    \hfill
    \begin{subfigure}[b]{0.49\linewidth}
        \centering
        \includegraphics[width=\textwidth]{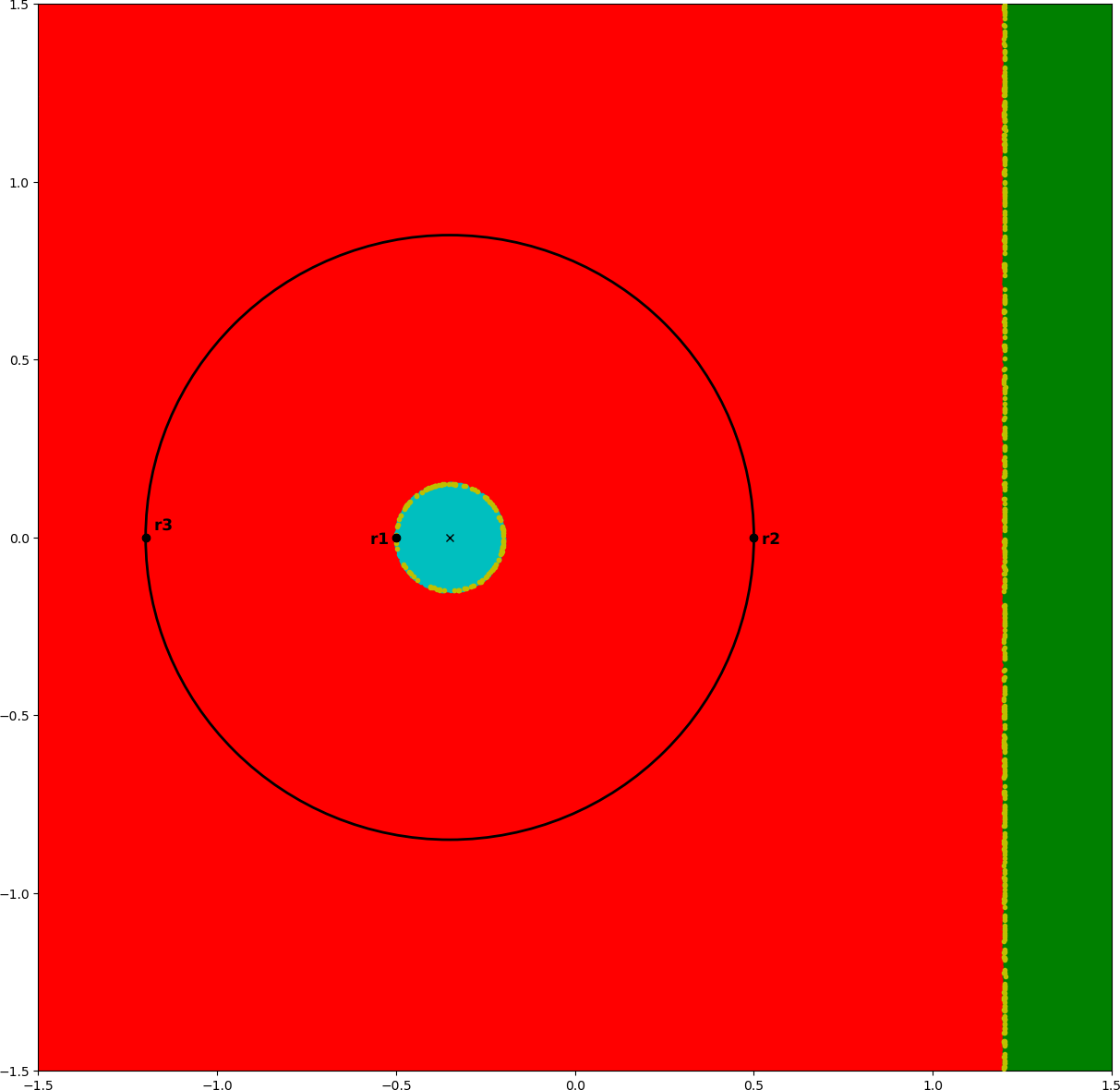}
    \end{subfigure}
    \begin{subfigure}[b]{0.49\linewidth}
        \centering
        \includegraphics[width=\textwidth]{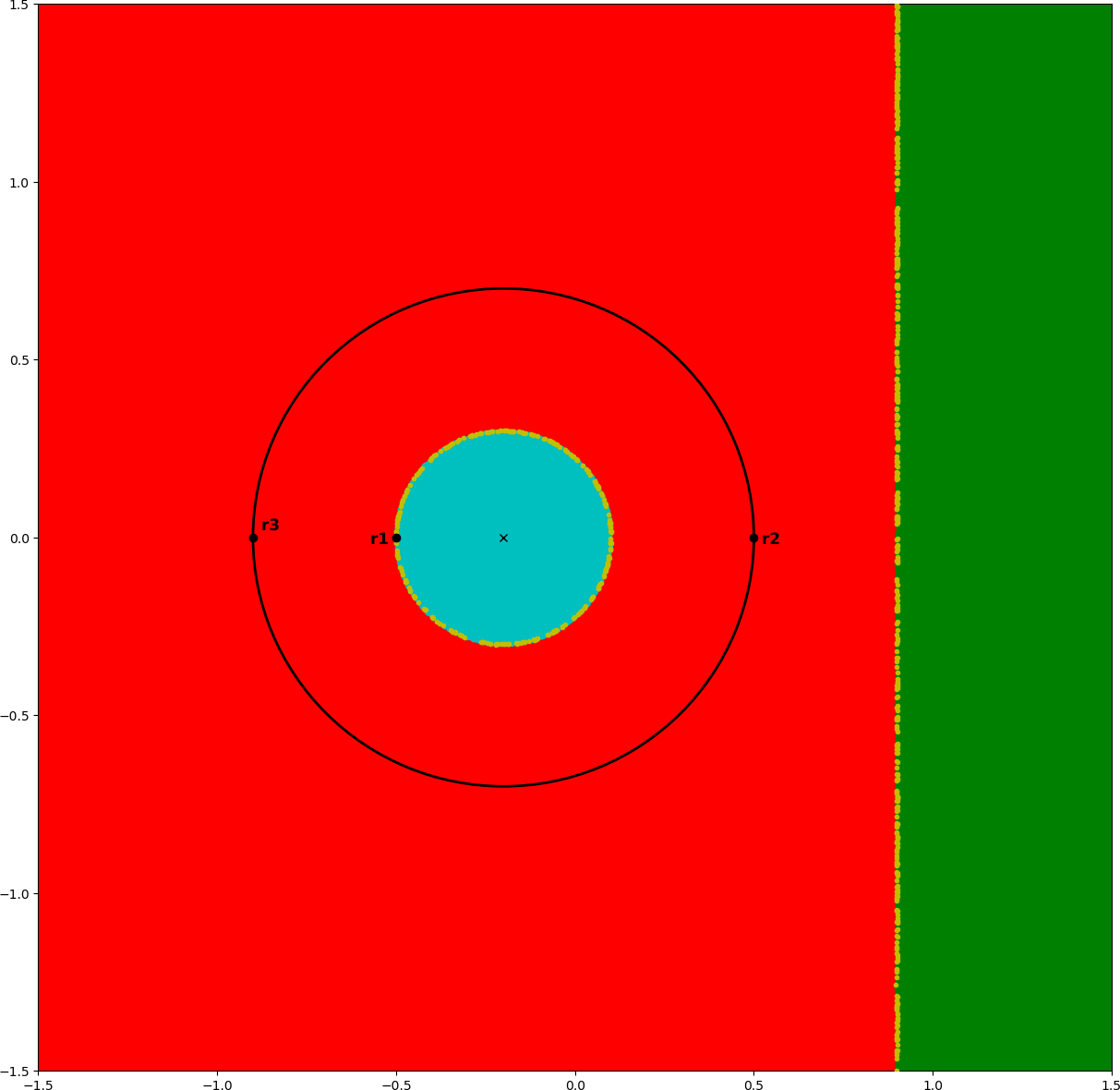}
    \end{subfigure}
    \hfill
    \begin{subfigure}[b]{0.49\linewidth}
        \centering
        \includegraphics[width=\textwidth]{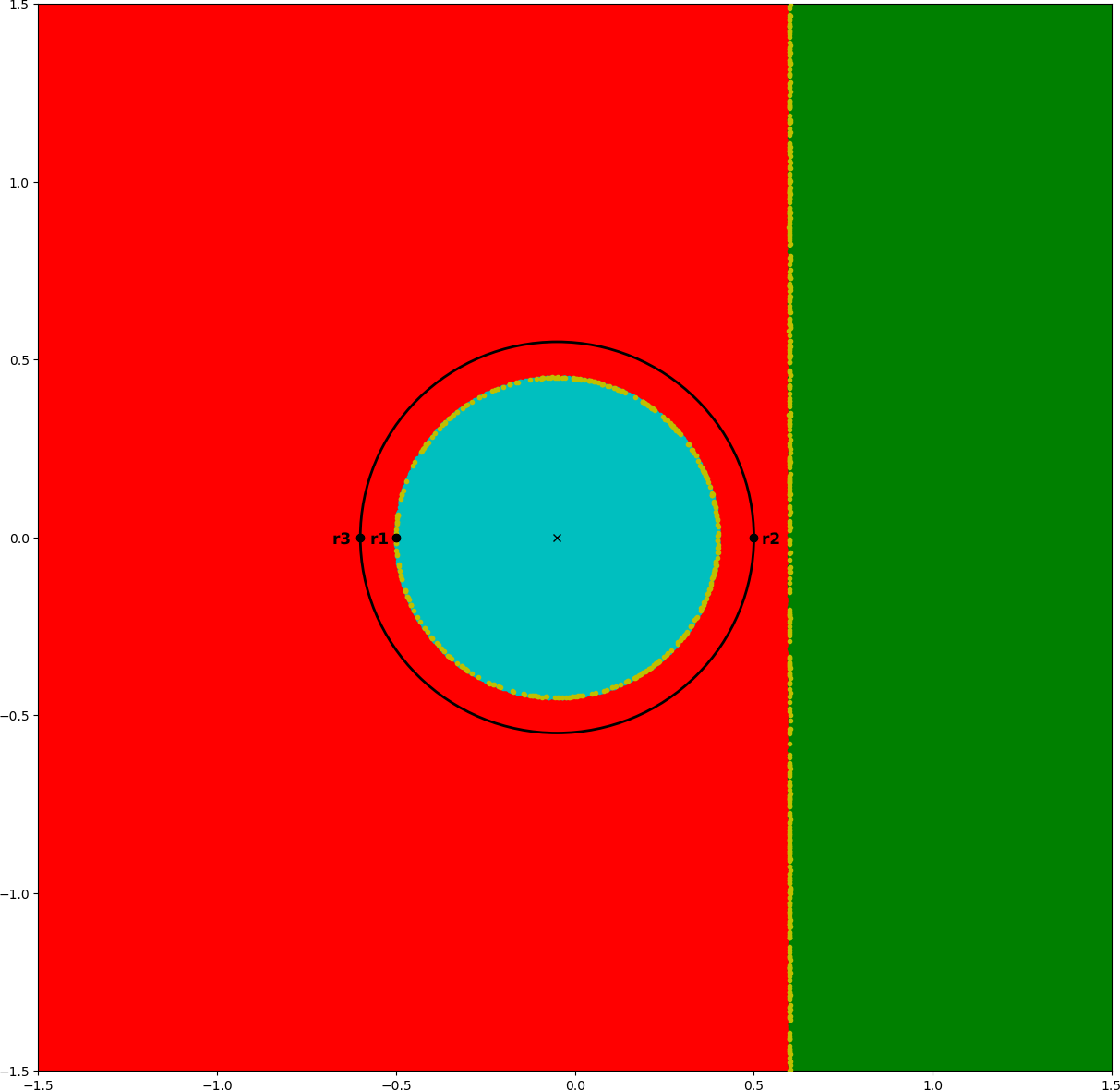}
    \end{subfigure}
    \begin{subfigure}[b]{0.49\linewidth}
        \centering
        \includegraphics[width=\textwidth]{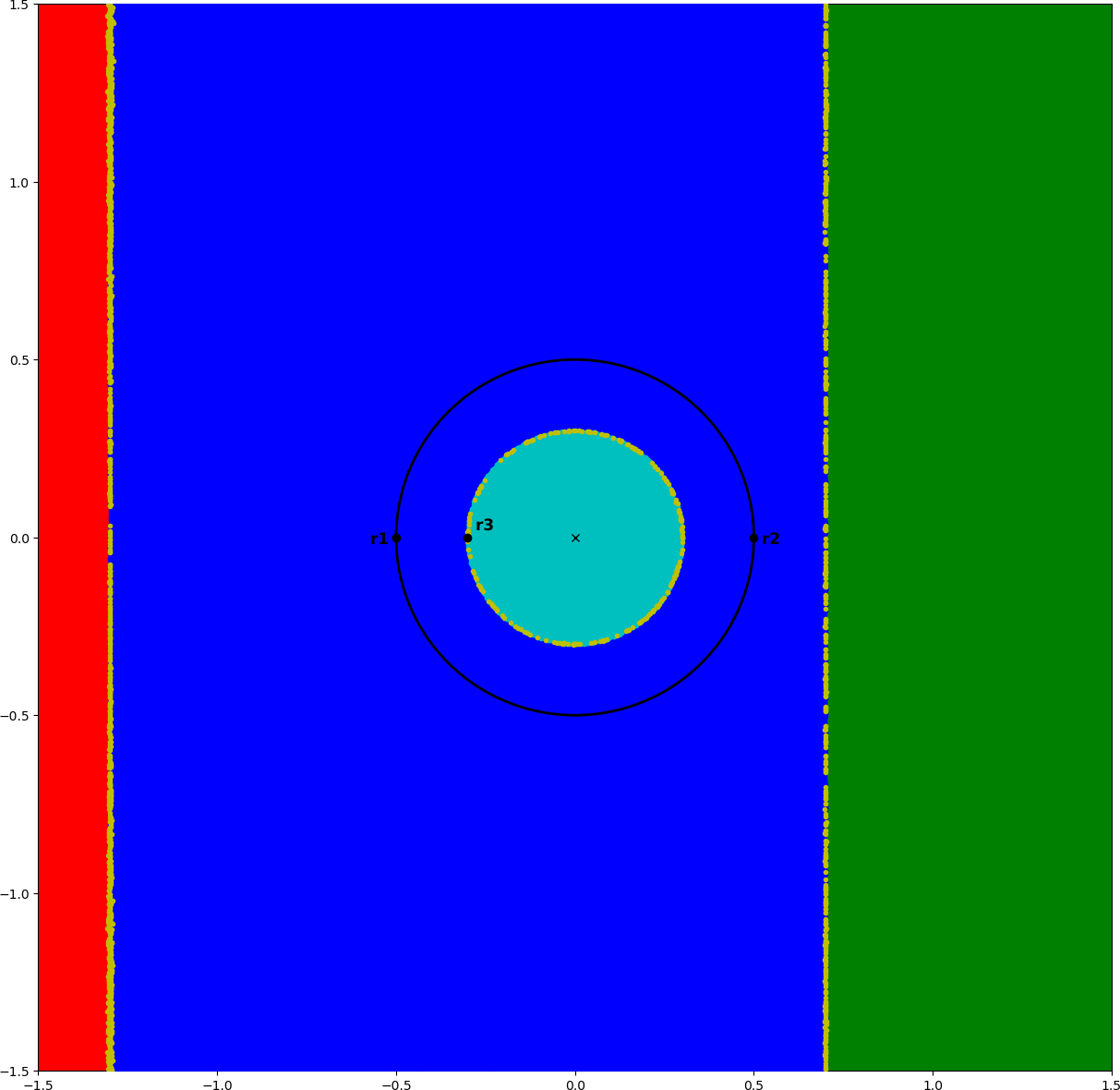}
    \end{subfigure}
    \hfill
    \begin{subfigure}[b]{0.49\linewidth}
        \centering
        \includegraphics[width=\textwidth]{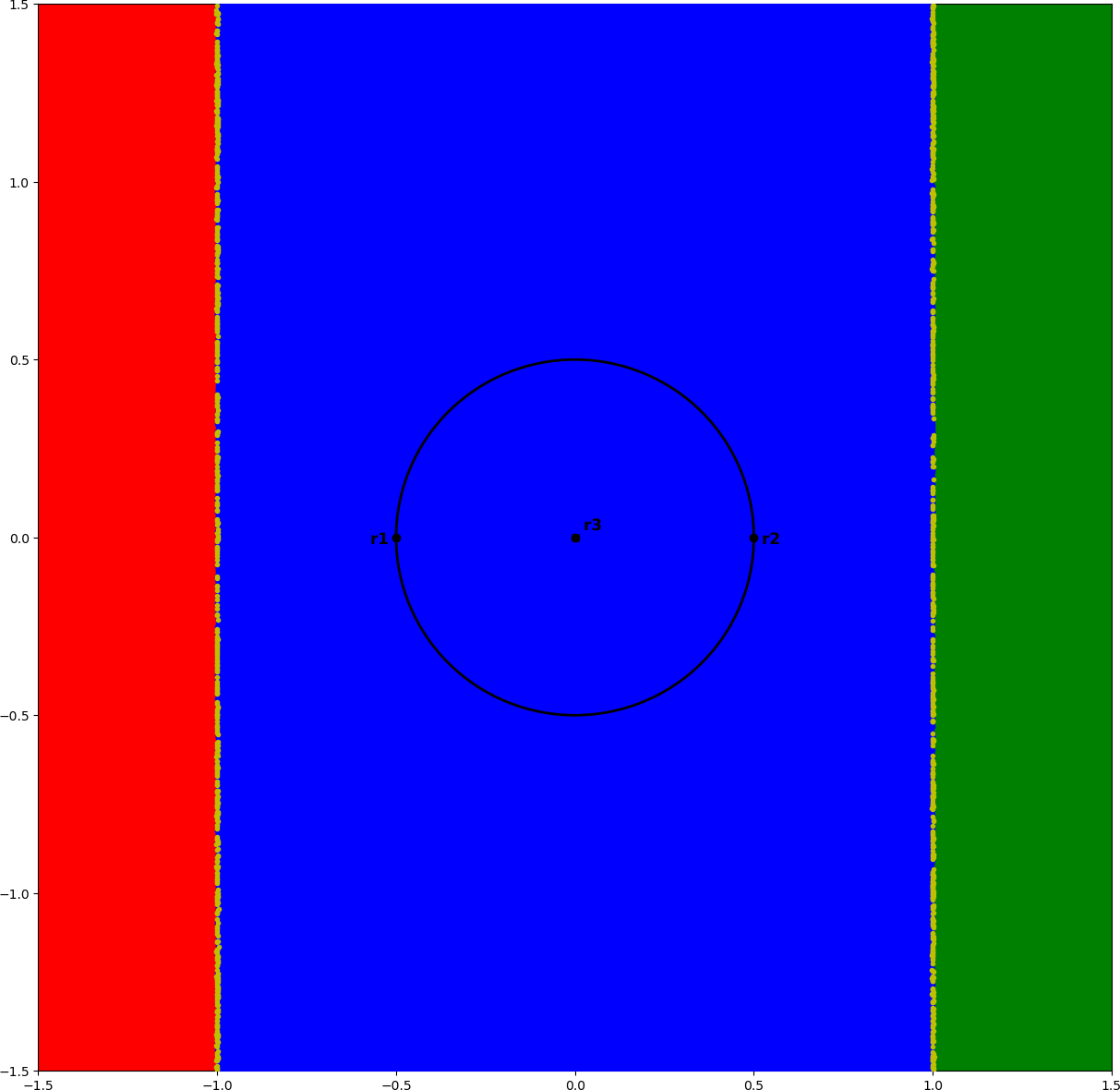}
    \end{subfigure}
    \caption{}
    \label{fig:lead4_5}
\end{figure}

\end{document}